\documentclass[a4paper, USenglish, cleveref, autoref, thm-restate, numberwithinsect,
]{lipics-v2021}

\pdfoutput=1 
\hideLIPIcs  


\bibliographystyle{plainurl}

\title{Improved Algorithm for Reachability in $d$-VASS}

\titlerunning{Improved Algorithm for Reachability in $d$-VASS} 

\author{Yuxi Fu}{BASICS, Shanghai Jiao Tong University}{fu-yx@cs.sjtu.edu.cn}{}{}
\author{Qizhe Yang}{Shanghai Normal University}{qzyang@shnu.edu.cn}{https://orcid.org/0009-0000-9010-5364}{}
\author{Yangluo Zheng}{BASICS, Shanghai Jiao Tong University}{wunschunreif@sjtu.edu.cn}{https://orcid.org/0009-0000-1028-5458}{}

\authorrunning{Y. Fu and Q. Yang and Y. Zheng} 

\Copyright{Yuxi Fu, Qizhe Yang, Yangluo Zheng} 

\ccsdesc[500]{Theory of computation~Models of computation}
\ccsdesc[500]{Theory of computation~Logic and verification}

\keywords{Petri net, vector addition system, reachability} 

\relatedversion{} 



\acknowledgements{We thank Weijun Chen, Huan Long, Hao Wu and Qiang Yin for proofreading various versions of this paper.}


\nolinenumbers 

\EventEditors{Karl Bringmann, Martin Grohe, Gabriele Puppis, and Ola Svensson}
\EventNoEds{4}
\EventLongTitle{51st International Colloquium on Automata, Languages, and Programming (ICALP 2024)}
\EventShortTitle{ICALP 2024}
\EventAcronym{ICALP}
\EventYear{2024}
\EventDate{July 8--12, 2024}
\EventLocation{Tallinn, Estonia}
\EventLogo{}
\SeriesVolume{297}
\ArticleNo{127}

\usepackage{bm}
\usepackage{stmaryrd}
\usepackage{tikz}
\usetikzlibrary{shapes}

\begin{document}

\maketitle

\begin{abstract}
    An $\mathsf{F}_{d}$ upper bound for the reachability problem in vector addition systems with states (VASS) in fixed dimension is given, where $\mathsf{F}_d$ is the $d$-th level of the Grzegorczyk hierarchy of complexity classes.
The new algorithm combines the idea of the linear path scheme characterization of the reachability in the $2$-dimension VASSes with the general decomposition algorithm by Mayr, Kosaraju and Lambert.
The result improves the $\mathsf{F}_{d + 4}$ upper bound due to Leroux and Schmitz (LICS 2019). 
\end{abstract}

\newcommand{\defproblem}[3]{
    \noindent\fbox{
        \begin{minipage}{0.96\textwidth}
            \underline{#1}
            \begin{description}
                \item[Input:] {#2}
                \item[Question:] {#3}
            \end{description}
        \end{minipage}
    }
}

\renewcommand{\Vec}[1]{\bm{#1}}
\newcommand{\norm}[1]{\left\Vert{#1}\right\Vert}
\newcommand{\Supp}{\mathop{\operatorname{supp}}\nolimits}
\newcommand{\CycleSpace}{\operatorname{Cyc}\nolimits}
\newcommand{\Span}{\operatorname{span}\nolimits}
\newcommand{\ActWord}[1]{\llbracket{#1}\rrbracket}
\newcommand{\Cone}{\operatorname{cone}\nolimits}
\newcommand{\Next}{\operatorname{next}\nolimits}
\newcommand{\Rank}{\operatorname{rank}\nolimits}
\newcommand{\RankFull}{\operatorname{rank}\nolimits_{\text{full}}}
\newcommand{\ExpF}[1]{{\normalfont\textsf{exp}({#1})}}
\newcommand{\PolyF}[1]{{\normalfont\textsf{poly}({#1})}}
\newcommand{\Clean}[1]{{\normalfont\textrm{clean}({#1})}}
\newcommand{\Dec}[1]{{\normalfont\textrm{dec}({#1})}}
\newcommand{\Facc}{\textsc{Facc}}
\newcommand{\Bacc}{\textsc{Bacc}}

\newcommand{\bracket}[1]{\left\langle{#1}\right\rangle}
\newcommand{\lelex}{\le_{\text{lex}}}
\newcommand{\ltlex}{<_{\text{lex}}}

\newcommand{\LPSSystem}[1]{\mathcal{E}_{\normalfont\text{LPS}}({#1})}
\newcommand{\LPSSystemHomo}[1]{\mathcal{E}^0_{\normalfont\text{LPS}}({#1})}

\newcommand{\KLMSystem}[1]{\mathcal{E}({#1})}
\newcommand{\KLMSystemHomo}[1]{\mathcal{E}^0({#1})}

\section{Introduction}

Petri nets, or equivalently vector addition system with states (VASS), are a well studied model of concurrency. A VASS consists of a finite state control where each state transition has as its effect an integer-valued vector, and its configurations are pairs of a state and a vector with \emph{natural number} components. A transition may lead from one configuration to another by adding its effect component-wise, conditioned on that the components of the resulting vector remain non-negative. The \emph{reachability problem}, which asks whether from one configuration there is a path reaching another configuration, lies in the center of the algorithmic theory of Petri nets and has found a wide range of applications due to its generic nature. Since the problem was shown to be decidable by Mayr \cite{Mayr81} in 1981, its computational complexity had been a long-standing open problem in the field. In 2015, Leroux and Schmitz \cite{LS15} presented the first complexity upper bound, stating that the reachability problem is cubic-Ackermannian. This was later improved to an Ackermannian upper bound in 2019, again by Leroux and Schmitz \cite{LS19}. Regarding the hardness, in 2021 seminal works by Czerwi\'nski and Orlikowski \cite{CO21}, and independently by Leroux \cite{Leroux21}, provided matching Ackermannian lower bounds, settling the exact complexity of the problem.

Concerning the parameterization by dimension, i.e.\ the reachability problem in $d$-dimensional VASSes where $d$ is fixed, there is still a gap in the known complexity bounds. Currently, we only have exact complexity bounds for dimension one and two \cite{HKOW09,BEF+21}. For dimension $d \ge 3$, the result of Leroux and Schmitz \cite{LS19} shows that the problem is in $\mathsf{F}_{d + 4}$, the $(d{+}4)$-th level of the Grzegorczyk hierarchy of complexity classes. Note that a recent work by Yang and Fu \cite{YF23} points out that the problem for dimension $3$ is in $\mathsf{F}_3 = \mathsf{TOWER}$. On the other hand, the best known lower bound by Czerwi\'nski, Jecker, Lasota, Leroux and Orlikowski \cite{CJLLO23} states that reachability in $(2d{+} 3)$-dimensional VASSes is $\mathsf{F}_d$-hard. Motivated by this gap, our paper focuses on the computational complexity of reachability problem in the fixed-dimensional VASSes.

\paragraph*{Our contribution.}

In this paper we show that the reachability problem in the $d$-dimensional VASS is in $\mathsf{F}_d$ for $d\ge 3$, improving the previous $\mathsf{F}_{d + 4}$ upper bound by Leroux and Schmitz \cite{LS19}, and generalizing the tower upper bound for the reachability problem in $3$-VASS~\cite{YF23}.
The new upper bound is obtained with the help of two novel technical lemmas.
\begin{enumerate}
\item 
Our main technical tool (Theorem~\ref{thm:geo-2vass-flat-lps}) is a generalization of the linear path scheme characterization for the reachability relation in the $2$-dimensional VASSes~\cite{BEF+21}.
By borrowing the key idea from the work of Yang and Fu~\cite{YF23}, we show that as long as the ``geometric dimension'' of a VASS (that is, the dimension of the vector space spanned by the effects of cyclic paths) is bounded by $2$, its reachability relation can be characterized by short linear path schemes.
We then apply the lemma to simplify the KLMST algorithm so that (i) a VASS is replaced by a short linear path scheme whenever its geometric dimension is no more than $2$ and (ii) the linear path schemes will not be decomposed further.
It is then routine~\cite{LS19}, using the tools from~\cite{Schmitz16}, to show that the reachability problem in the $d$-dimensional VASS is in $\mathsf{F}_{d+1}$ for all $d\ge 3$.
\item
Our second lemma (Lemma~\ref{lem:rel-fast-grow-func}) allows us to improve further the bound from $\mathsf{F}_{d+1}$ to $\mathsf{F}_{d}$.
This is done by a careful analysis of the properties of the fast-growing functions~\cite{Schmitz16}.
\end{enumerate}
Due to space limitation the proofs of the two lemmas are placed in the appendices. 

\paragraph*{Organization.}
Section \ref{sec:prelim} fixes notation, defines the VASS model and its reachability problem.
Section \ref{sec:flat-geo-2d} generalizes the linear path scheme characterization \cite{BEF+21} to VASSes whose geometric dimension are bounded by $2$. Section \ref{sec:lps-system} recalls the characterization system of linear inequalities for linear path schemes.
Section \ref{sec:mod-klmst} makes use of the results of Section \ref{sec:flat-geo-2d} to give an improved version of the classic KLMST decomposition algorithm.
Section \ref{sec:complexity} analyzes the complexity of our modified algorithm, proving the main result. Section \ref{sec:conclusion} concludes. Proofs omitted from the main text can be found in the appendices.

\section{Preliminaries}
\label{sec:prelim}

We use $\mathbb{N}, \mathbb{Z}, \mathbb{Q}$ to denote respectively the set of non-negative integers, integers, and rational numbers.
Let $n \in \mathbb{N}$ be a number, we write $[n]$ for the range $\{1, 2, \ldots, n\}$.
Let $\Vec{u}, \Vec{v} \in \mathbb{X}^d$ be $d$-dimensional vectors where $\mathbb{X}$ can be any one of $\mathbb{N}, \mathbb{Z}, \mathbb{Q}$.
We write $\Vec{v}(i)$ for the $i$-th component of $\Vec{v}$ where $i \in [d]$, so $\Vec{v} = (\Vec{v}(1), \ldots, \Vec{v}(d))$.
The \emph{maximum norm} of $\Vec{v}$ is defined to be $\norm{\Vec{v}} := \max_{i \in [d]}|\Vec{v}(i)|$.
We extend component-wise the order $\le$ for vectors in $\mathbb{X}^d$, so $\Vec{u} \le \Vec{v}$ if and only if $\Vec{u}(i) \le \Vec{v}(i)$ for all $i \in [d]$.
Addition and subtraction of vectors are also component-wise, so $(\Vec{u} + \Vec{v})(i) = \Vec{u}(i) + \Vec{v}(i)$ for all $i \in [d]$.
Define $\Supp(\Vec{v}) := \{i \in [d] : \Vec{v}(i) \ne 0\}$ to be the set of indices of non-zero components of $\Vec{v}$. This notation is extended to sets of vectors naturally, so $\Supp(S) = \bigcup_{\Vec{v} \in S}\Supp(\Vec{v})$ for any set $S\subseteq \mathbb{X}^d$.

For technical reasons we introduce the symbol $\omega$ that stands for the infinite element. Let $\mathbb{N}_\omega := \mathbb{N} \cup \{\omega\}$.
We stipulate that $n < \omega$ for all $n\in \mathbb{N}$, and $x + \omega = \omega + x = \omega$ for all $x \in \mathbb{Z}$.
Define the partial order $\sqsubseteq$ over $\mathbb{N}_\omega$ so that $x \sqsubseteq y$ if and only if $x = y$ or $y = \omega$ for all $x, y \in \mathbb{N}_\omega$.
The relation $\sqsubseteq$ is also extended component-wise to vectors in $\mathbb{N}_\omega^d$.

Let $\Sigma$ be a finite alphabet and $s, t \in \Sigma^*$ be two strings over $\Sigma$. We write $st$ for the concatenation of $s$ and $t$, and $s^n$ for the concatenation of $n$ copies of $s$ where $n \in \mathbb{N}$. If $s = a_1a_2\ldots a_\ell$ where $a_1, \ldots, a_\ell \in \Sigma$, we write $|s| := \ell$ for the length of $s$, and $s[i\ldots j] := a_ia_{i+1}\ldots a_j$ for the substring of $s$ between indices $i$ and $j$ where $1 \le i \le j \le |s|$.

\subsection{Vector Addition Systems with States}

Let $d \ge 0$ be an integer. A \emph{$d$-dimensional vector addition system with states ($d$-VASS)} is a pair $G = (Q, T)$ where $Q$ is a finite set of \emph{states} and $T\subseteq Q\times \mathbb{Z}^d \times Q$ is a finite set of \emph{transitions}.
Clearly a VASS can also be viewed as a directed graph with edges labelled by integer vectors.
Given a word $\pi = (p_1, \Vec{a}_1, q_1) (p_2, \Vec{a}_2, q_2) \ldots (p_n, \Vec{a}_n, q_n) \in T^*$ over transitions, we say that $\pi$ is a \emph{path from $p_1$ to $q_n$} if $q_{i} = p_{i + 1}$ for all $i = 1, \ldots, n - 1$. It is a \emph{cycle} if we further have $p_1 = q_n$.
The \emph{effect} of $\pi$ is defined to be $\Delta(\pi) := \sum_{i = 1}^n \Vec{a}_i$, and the \emph{action word} of $\pi$ is the word $\ActWord{\pi} := \Vec{a}_1\Vec{a}_2\ldots\Vec{a}_n$ over $\mathbb{Z}^d$.
The \emph{Parikh image} of $\pi$ is the function $\phi \in \mathbb{N}^T$ mapping each transition to its number of occurrences in $\pi$. Given a function $\phi \in \mathbb{N}^T$ we also define $\Delta(\phi) := \sum_{t = (p, \Vec{a}, q)\in T} \phi(t)\cdot \Vec{a}$. Note that $\Delta(\phi) = \Delta(\pi)$ if $\phi$ is the Parikh image of $\pi$.
Let $L \subseteq T^*$ be a language (i.e.\ subset of words), we define its effect as $\Delta(L) := \{\Delta(\pi): \pi \in L\}$.

The norm of a transition $t = (p, \Vec{a}, q)$ is defined by $\norm{t} := \norm{\Vec{a}}$. The norm of a path $\pi = t_1t_2\ldots t_n$ is $\norm{\pi} := \max_{i \in [n]} \norm{t_i}$.
For a $d$-VASS $G = (Q, T)$ we write $\norm{T} := \max\{\norm{t} :t \in T\}$. The \emph{size} of $G$ is defined by
\begin{equation}
    \label{eq:vass-size}
    |G| := |Q| + |T| + d\cdot |T| \cdot \norm{T}.
\end{equation}

\paragraph*{Semantics of VASSes.}

Let $G = (Q, T)$ be a $d$-VASS.
A \emph{configuration} of $G$ is a pair of a state $p \in Q$ and a vector $\Vec{v}\in \mathbb{Z}^d$, written as $p(\Vec{v})$. Let $\mathbb{D}\subseteq \mathbb{Z}^d$, we define the $\mathbb{D}$-semantics for $G$ as follows.
For each transition $t = (p, \Vec{a}, q) \in T$, the one-step transition relation $\xrightarrow{t}_{\mathbb{D}}$ relates all pairs of configurations of the form $(p(\Vec{u}),q(\Vec{v}))$ where $\Vec{u}, \Vec{v}\in\mathbb{D}$ and $\Vec{v} = \Vec{u} + \Vec{a}$.
Then for a word $\pi = t_1 t_2 \ldots t_n \in T^*$, the relation $\xrightarrow{\pi}_{\mathbb{D}}$ is the composition $\xrightarrow{\pi}_{\mathbb{D}} {}:={} \xrightarrow{t_1}_{\mathbb{D}}\circ \cdots \circ \xrightarrow{t_n}_{\mathbb{D}}$.
So $p(\Vec{u}) \xrightarrow{\pi}_{\mathbb{D}} q(\Vec{v})$ if and only if there are configurations $p_0(\Vec{u}_0), \ldots, p_n(\Vec{u}_n) \in Q\times \mathbb{D}$ such that
\begin{equation}
    p(\Vec{u}) = p_0(\Vec{u}_0) \xrightarrow{t_1}_{\mathbb{D}} p_1(\Vec{u}_1) \xrightarrow{t_2}_{\mathbb{D}} \cdots \xrightarrow{t_n}_{\mathbb{D}} p_n(\Vec{u}_n) = q(\Vec{v}).
\end{equation}
Also, when $\pi = \epsilon$ is the empty word, the relation $\xrightarrow{\epsilon}_{\mathbb{D}}$ is the identity relation over $Q\times \mathbb{D}$.
Note that $\xrightarrow{\pi}_{\mathbb{D}}$ is non-empty only if $\pi$ is a path.
When $p(\Vec{u}) \xrightarrow{\pi}_{\mathbb{D}} q(\Vec{v})$ we also say that $\pi$ induces (or is) a \emph{$\mathbb{D}$-run} from $p(\Vec{u})$ to $q(\Vec{v})$. We emphasize that all configurations on this run lie in $\mathbb{D}$, and that they are uniquely determined by $p(\Vec{u})$ and $\pi$.
For a language $L\subseteq T^*$ we define $\xrightarrow{L}_{\mathbb{D}}$ as $\bigcup_{\pi\in L}\xrightarrow{\pi}_{\mathbb{D}}$.
Finally, the \emph{$\mathbb{D}$-reachability relation} of $G$ is defined to be $\xrightarrow{*}_{\mathbb{D}} {}:={} \xrightarrow{T^*}_{\mathbb{D}}$.

For the above definitions, we shall often omit the subscript $\mathbb{D}$ when $\mathbb{D} = \mathbb{N}^d$.

We mention that in Section \ref{sec:mod-klmst} we need to generalize the VASS semantics to configurations in $Q\times \mathbb{N}_\omega^d$, allowing $\omega$ components in vectors. The definitions of $\xrightarrow{t}_{\mathbb{N}_\omega^d}$, $\xrightarrow{\pi}_{\mathbb{N}_\omega^d}$, and $\xrightarrow{*}_{\mathbb{N}_\omega^d}$ are similar to the above.

\paragraph*{Reachability problem.}

The reachability problem in vector addition systems with states is formulated as follows:

\defproblem{\textsc{Reachability in $d$-Dimensional Vector Addition System with States}}
    {A $d$-dimensional VASS $G = (Q, T)$, two configurations $p(\Vec{u}), q(\Vec{v}) \in Q\times \mathbb{N}^d$.}
    {Does $p(\Vec{u})\xrightarrow{*}_{\mathbb{N}^d} q(\Vec{v})$ hold in $G$?}

Note that we study the reachability problem for fixed-dimensional VASSes, where the dimension $d$ is treated as a constant to allow more fine-grained analysis. So we shall use the big-$O$ notation to hide constants that may depend on $d$. The general problem where the dimension can be part of the input was already shown to be Ackermann-complete \cite{CO21,Leroux21}.

\paragraph*{Cycle spaces and geometric dimensions.}

One of the key insights of \cite{LS19} is a new termination argument for the KLMST decomposition algorithm based on the dimensions of vector spaces spanned by cycles in VASSes, which yielded the primitive recursive upper bound of VASS reachability problem in fixed dimensions.
The vector spaces spanned by cycles also play an important role in our work.

\begin{definition}
    Let $G$ be a $d$-VASS. The \emph{cycle space} of $G$ is the vector space $\CycleSpace(G)\subseteq \mathbb{Q}^d$ spanned by the effects of all cycles in $G$, that is:
    \begin{equation}
        \CycleSpace(G) := \Span\{\Delta(\beta): \beta \text{ is a cycle in } G\}.
    \end{equation}
    The dimension of the cycle space of $G$ is called the \emph{geometric dimension} of $G$. We say $G$ is \emph{geometrically $k$-dimensional} if $\dim(\CycleSpace(G)) \le k$.
\end{definition}

\section{Flattability of Geometrically 2-dimensional VASSes}
\label{sec:flat-geo-2d}

A VASS is \emph{flat} if each of its states lies on at most one cycle. Flat VASSes form an important subclass of VASSes due to its connection to Presburger arithmetic, and we refer the readers to \cite{Leroux21b} for a survey.
In dimension $2$, it was proved in \cite{BEF+21} that $2$-VASSes enjoy a stronger form of flat representation, known as the \emph{linear path schemes}.
A linear path scheme is a regular expression of the form $\alpha_0 \beta_1^* \alpha_1 \ldots \beta_k^* \alpha_k$ where $\alpha_0, \ldots \alpha_k$ are paths and $\beta_1, \ldots, \beta_k$ are cycles of the VASS, such that they form a path when joined together. The results of \cite{BEF+21} show that the reachability relation of every $2$-VASS can be captured by short linear path schemes.

Linear path schemes are extremely useful as they can be fully characterized by linear inequality systems so that standard tools for linear or integer programming can be applied.
In this section, we generalize the results in \cite{BEF+21} and show that the reachability relation of any $d$-VASS whose geometric dimension is bounded by $2$ can also be captured by short linear path schemes.


Our proof follows closely to the lines of \cite{BEF+21}. Given a geometrically $2$-dimensional VASS $G$, we first cover $\mathbb{N}^d$ by the following two regions: one for the region far away from every axis:
\begin{equation}
    \mathbb{O} := \{ \Vec{u} \in \mathbb{N}^d : \Vec{u}(i) \ge D \text{ for all } i \in [d]\}
\end{equation}
where $D$ is some properly chosen threshold; the other one for the region close to some axis:
\begin{equation}
    \mathbb{L} := \{ \Vec{u} \in \mathbb{N}^d : \Vec{u}(i) \le D' \text{ for some } i \in [d] \}
\end{equation}
where $D'$ is chosen slightly above $D$ to create an overlap with $\mathbb{O}$.
Let $\pi$ be a run that we are going to capture by linear path schemes. We can extract its maximal prefix that lies completely in either $\mathbb{O}$ or $\mathbb{L}$, depending on where $\pi$ starts. This prefix must end at a configuration that lies in $\mathbb{L}\cap \mathbb{O}$, if we haven't touched the end of $\pi$. From this configuration we then extract a maximal cycle that also ends in $\mathbb{L}\cap \mathbb{O}$. Continuing this fashion, we can break $\pi$ into segments of runs that lie completely in $\mathbb{O}$ or $\mathbb{L}$, interleaved by cycles that start and end in $\mathbb{O}$ (actually in $\mathbb{L}\cap \mathbb{O}$). Note that the number of such cycles cannot exceed the number of states of $G$ since they are maximal.
Now we only need to capture the following three types of runs by short linear path schemes:
\begin{enumerate}
    \item Runs that are cycles starting and ending in $\mathbb{O}$.

        This will be handled in Section \ref{sec:flat-high-runs} in the appendix. Briefly speaking, since the geometric dimension of $G$ is $2$, the effect of such a cycle must belong to a plane in $\mathbb{Z}^d$. We will find a clever way to project this plane to a coordinate plane, and then project the $d$-VASS $G$ onto this plane to get a $2$-VASS.
        This is made possible by a novel technique called the ``sign-reflecting projection'' developed in Section \ref{sec:srp}. Intuitively speaking, for any vector in a plane we are able to determine whether it belongs to a certain orthant by observing only $2$ entries of this vector. The $d$-VASS can then be projected onto these $2$ coordinates. (See Example \ref{eg:sign-refl-proj} for a more concrete demonstration.)
        Now we apply the results of \cite{BEF+21} to obtain a linear path scheme that captures the projected cyclic run. The projection guarantees that we can safely project it back to get a linear path scheme for the run in $G$.
    \item Runs that lie completely in $\mathbb{O}$.

        This is just an easy corollary of the first type, and will also be handled in Section \ref{sec:flat-high-runs}. Just note that any run can be broken into a series of simple paths interleaved by cycles.
    \item Runs that lie completely in $\mathbb{L}$.

        This will be handled in Section \ref{sec:flat-low-runs}, by a long and tedious case analysis. In principle, we are going to argue that any such run essentially corresponds to a run in some $(d-1)$-VASS, so that we can use induction.
\end{enumerate}

\begin{example}
    \label{eg:sign-refl-proj}
    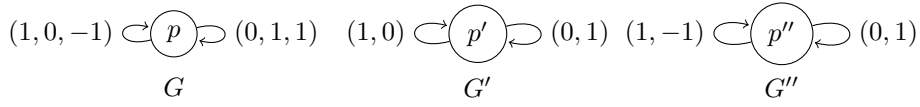
\begin{figure}[t]
        \centering
        \begin{tikzpicture}
            \path (0, 0) node[draw, circle] (p) {$p$};
            \path (p) edge [loop left] node[left]{$(1, 0, -1)$} (p);
            \path (p) edge [loop right] node[right]{$(0, 1, 1)$} (p);
            \path (0, -0.7) node {$G$};

            \path (4, 0) node[draw, circle] (p1) {$p'$};
            \path (p1) edge [loop left] node[left]{$(1, 0)$} (p1);
            \path (p1) edge [loop right] node[right]{$(0, 1)$} (p1);
            \path (4, -0.7) node {$G'$};

            \path (8, 0) node[draw, circle] (p2) {$p''$};
            \path (p2) edge [loop left] node[left]{$(1, -1)$} (p2);
            \path (p2) edge [loop right] node[right]{$(0, 1)$} (p2);
            \path (8, -0.7) node {$G''$};
        \end{tikzpicture}
        \caption{A geometrically $2$-dimensional VASS $G$ and two possible projections of it.}
        \label{fig:eg-sign-refl-proj}
    \end{figure}

    Consider the geometrically $2$-dimensional $3$-VASS $G$ as shown in Figure \ref{fig:eg-sign-refl-proj}. 
    It consists of a single state $p$ and two transitions with effects $(1, 0, -1)$ and $(0, 1, 1)$. 
    In order to apply the results of \cite{BEF+21}, one would like to derive a $2$-VASS from $G$ that reflects runs in $G$. A simple idea is to discard one coordinate of $G$. Two possibilities of this idea are shown in Figure \ref{fig:eg-sign-refl-proj} as $G'$, which discards the third coordinate, and $G''$, which discards the second coordinate. However, not all of them are satisfactory. For example, the legal run $p(0, 0) \xrightarrow{(1, 0)} p(1, 0)$ in $G'$ reflects an illegal run $p(0, 0, 0) \xrightarrow{(1, 0, -1)} p(1, 0, -1)$ in $G$ where the third coordinate goes negative. On the other hand, all runs in $G''$ can be safely projected back to a run in $G$. To see this, just observe that for any vector $\Vec{v}$ in the linear span of $(1, 0, -1)$ and $(0, 1, 1)$, $\Vec{v} \ge \Vec{0}$ if and only if $\Vec{v}(1) \ge 0$ and $\Vec{v}(3) \ge 0$. Thus we can safely discard the second coordinate.

    In general, the ``sign-reflecting projection'' developed in Section \ref{sec:srp} shows that any geometrically $2$-dimensional VASS can be projected onto two coordinates so that the signs of these two coordinates reflects the signs of other coordinates.
\end{example}

In the rest of this section we just state formally our main technical results. The detailed proofs are placed in the appendix.

\subsection{Linear Path Schemes}
\label{sec:lps-def}

A \emph{linear path scheme (LPS)} is a pair $(G, \Lambda)$ where $G$ is a VASS and $\Lambda$ is a regular expression of the form
$
    \Lambda = \alpha_0\beta_1^*\alpha_1\ldots\beta_k^*\alpha_k
$
such that $\alpha_0, \ldots, \alpha_k$ are paths in $G$ and $\beta_1, \ldots, \beta_k$ are cycles in $G$, and $\alpha_0\beta_1\alpha_1\ldots \beta_k\alpha_k$ is a path in $G$.
We say an LPS $(G, \Lambda)$ is \emph{compatible to} a VASS $G'$ if $G'$ contains all states and transitions in $\Lambda$.
Very often we shall omit the VASS $G$ and say that $\Lambda$ on its own is an LPS, with $G$ understood as any VASS to which $\Lambda$ is compatible.
We write $|\Lambda| = |\alpha_0\beta_1\alpha_1\ldots\beta_k\alpha_k|$ for the length of $\Lambda$, $\norm{\Lambda} = \norm{\alpha_0\beta_1\alpha_1\ldots\beta_k\alpha_k}$ for its norm, and $|\Lambda|_* = k$ for the number of cycles in $\Lambda$.

We also treat $\Lambda$ as the language defined by it, and thus for two configurations $p(\Vec{u})$ and $q(\Vec{v})$ we write $p(\Vec{u}) \xrightarrow{\Lambda}_{\mathbb{D}} q(\Vec{v})$ if and only if there exists $e_1, \ldots, e_k \in \mathbb{N}$ such that
\begin{equation}
    p(\Vec{u}) \xrightarrow{\alpha_0\beta_1^{e_1}\alpha_1\ldots\beta_k^{e_k}\alpha_k}_{\mathbb{D}} q(\Vec{v}).
\end{equation}


\paragraph*{Positive linear path schemes.}
A \emph{positive LPS} is a regular expression of the form
$\Lambda^+ = \alpha_0\beta_1^+\alpha_1\ldots\beta_k^+\alpha_k$
which is similar to a linear path scheme except that we require each cycle to be used at least once. We write $p(\Vec{u}) \xrightarrow{\Lambda^+}_{\mathbb{D}} q(\Vec{v})$ if and only if there are positive integers $e_1, \ldots, e_k \in \mathbb{N}_{>0}$ such that
\begin{equation}
    p(\Vec{u}) \xrightarrow{\alpha_0\beta_1^{e_1}\alpha_1\ldots\beta_k^{e_k}\alpha_k}_{\mathbb{D}} q(\Vec{v}).
\end{equation}
A path $\pi$ is said to be \emph{admitted by $\Lambda^+$} if $\pi = \alpha_0\beta_1^{e_1}\alpha_1\ldots\beta_k^{e_k}\alpha_k$ for some $e_1, \ldots, e_k \in \mathbb{N}_{>0}$.
We prefer positive LPSes as they can be easily characterized by linear inequality systems (see Section \ref{sec:lps-system} for details). In fact, positive LPSes can be easily obtained from LPSes:

\begin{lemma}
    \label{lem:plps-from-lps}
    For every linear path scheme $\Lambda$ there exists a finite set $S$ of positive linear path schemes compatible to the same VASSes with $\Lambda$, such that $\xrightarrow{\Lambda} {}={} \bigcup_{\Lambda^+\in S}\xrightarrow{\Lambda^+}$ and $|\Lambda^+| \le |\Lambda|$ for every $\Lambda^+\in S$.
\end{lemma}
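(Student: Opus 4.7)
The plan is to construct $S$ by enumerating, for each $\Lambda = \alpha_0\beta_1^*\alpha_1\ldots\beta_k^*\alpha_k$, all $2^k$ subsets $I \subseteq [k]$ that record which of the cycles $\beta_i$ will be taken at least once along a run, and to merge the remaining $\alpha$-pieces accordingly. Concretely, for $I = \{i_1 < i_2 < \cdots < i_m\} \subseteq [k]$, I would define
\begin{equation}
    \Lambda^+_I \;:=\; \alpha'_0\, \beta_{i_1}^+\, \alpha'_1\, \beta_{i_2}^+\, \alpha'_2\, \cdots\, \beta_{i_m}^+\, \alpha'_m,
\end{equation}
where $\alpha'_0 = \alpha_0\alpha_1\cdots\alpha_{i_1-1}$, $\alpha'_j = \alpha_{i_j}\alpha_{i_j+1}\cdots\alpha_{i_{j+1}-1}$ for $1 \le j < m$, and $\alpha'_m = \alpha_{i_m}\alpha_{i_m+1}\cdots\alpha_k$. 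The key observation enabling this fusion is that each $\beta_i$ is a cycle at the common state where $\alpha_{i-1}$ ends and $\alpha_i$ begins, so removing $\beta_i$ entirely still yields a valid concatenation $\alpha_{i-1}\alpha_i$ in $G$; consequently each $\alpha'_j$ is indeed a path of $G$ and $\Lambda^+_I$ is a well-formed positive LPS compatible with the same VASSes as $\Lambda$. Finally, I would set $S := \{\Lambda^+_I : I \subseteq [k]\}$.

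The two verifications are routine. For the length bound, note that $|\Lambda^+_I| = \sum_{j=0}^{m} |\alpha'_j| + \sum_{i \in I}|\beta_i| = \sum_{j=0}^{k} |\alpha_j| + \sum_{i\in I}|\beta_i| \le \sum_{j=0}^{k}|\alpha_j| + \sum_{i=1}^{k}|\beta_i| = |\Lambda|$. For the semantic identity, inclusion $\bigcup_{I}\!\xrightarrow{\Lambda^+_I}\,\subseteq\,\xrightarrow{\Lambda}$ is clear because any choice of positive exponents in $\Lambda^+_I$ corresponds to choosing exponent $0$ for $i \notin I$ and positive exponents for $i \in I$ in $\Lambda$. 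Conversely, given any run realizing $p(\Vec{u})\xrightarrow{\Lambda} q(\Vec{v})$ with exponents $e_1,\ldots,e_k \in \mathbb{N}$, let $I := \{i \in [k] : e_i > 0\}$; the very same run is then admitted by $\Lambda^+_I$, giving $p(\Vec{u})\xrightarrow{\Lambda^+_I} q(\Vec{v})$.

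There is no real obstacle here: the statement is essentially the combinatorial identity $\beta^* = \epsilon \cup \beta^+$ applied simultaneously at each starred factor, and the only thing needing care is to confirm that concatenating the surrounding $\alpha$-segments across a deleted cycle yields a genuine path in the underlying VASS, which follows immediately from the cycle condition in the definition of an LPS. The size of $S$ is $2^{|\Lambda|_*}$, which is finite as required.
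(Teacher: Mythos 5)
Your construction is exactly the paper's: replace each $\beta_i^*$ by either $\beta_i^+$ or the empty word (equivalently, enumerate the subsets $I\subseteq[k]$ of cycles used at least once), and the verification of the length bound and the semantic identity is routine and correct. The paper dispatches this in one line, while you additionally confirm that the merged $\alpha$-segments remain genuine paths, which is a fair point of care but the same argument.
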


\begin{proof}
    Suppose $\Lambda = \alpha_0\beta_1^*\alpha_1\ldots \beta_k^*\alpha_k$. For each cycle component $\beta_i^*$ in $\Lambda$ we replace it by either $\beta_i^+$ or an empty word nondeterministically. Let $S$ be the set of all resulting positive LPSes. It is obvious that $S$ satisfies the desired properties.
\end{proof}

\subsection{Main Results}

The main results of this section is stated as follows.

\begin{restatable}{theorem}{geoTwoVassFlat}
    \label{thm:geo-2vass-flat-plps}
    Let $G = (Q, T)$ be a geometrically $2$-dimensional $d$-VASS. For every pair of configurations $p(\Vec{u}), q(\Vec{v}) \in Q\times \mathbb{N}^d$ with $p(\Vec{u}) \xrightarrow{*} q(\Vec{v})$ there exists a positive LPS $\Lambda^+$ compatible to $G$ such that $p(\Vec{u}) \xrightarrow{\Lambda^+} q(\Vec{v})$ and $|\Lambda^+| \le |G|^{O(1)}$.
\end{restatable}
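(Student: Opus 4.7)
The plan is to first reduce to the non-positive case: by Lemma~\ref{lem:plps-from-lps} it suffices to construct a (non-positive) LPS $\Lambda$ with $|\Lambda| \le |G|^{O(1)}$ and $p(\Vec{u}) \xrightarrow{\Lambda} q(\Vec{v})$, since each positive LPS obtained from $\Lambda$ has length at most $|\Lambda|$. I would then induct on the dimension $d$. The base case $d \le 2$ is directly the LPS characterization of $2$-VASSes from~\cite{BEF+21}. For the inductive step $d \ge 3$, fix thresholds $D < D'$, both polynomially bounded in $|G|$, and partition $\mathbb{N}^d$ into the ``high'' region $\mathbb{O}$ and the ``low'' region $\mathbb{L}$ as in the text.

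Given a run $\rho$ witnessing $p(\Vec{u}) \xrightarrow{*} q(\Vec{v})$, I would decompose $\rho$ greedily: repeatedly extract the maximal prefix staying entirely in $\mathbb{O}$ or entirely in $\mathbb{L}$ (depending on the current region), whose endpoint must lie in $\mathbb{O} \cap \mathbb{L}$ unless it is the end of $\rho$; then extract a maximal cycle from that endpoint whose endpoints are again in $\mathbb{O} \cap \mathbb{L}$. By the maximality of these boundary cycles, at most $|Q|$ of them can occur, so $\rho$ breaks into $O(|Q|)$ atomic pieces of three kinds, and it remains to produce a short LPS witness for each kind and concatenate them.

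For cycles starting and ending in $\mathbb{O}$, I would exploit that the cumulative effect of any cycle lies in $\CycleSpace(G)$, a subspace of dimension at most $2$. Invoking the sign-reflecting projection of Section~\ref{sec:srp}, I pick two coordinates $i, j \in [d]$ that faithfully capture nonnegativity on $\CycleSpace(G)$, and project $G$ onto those coordinates to obtain a $2$-VASS $G'$. Applying the LPS characterization of~\cite{BEF+21} to $G'$ yields a short LPS $\Lambda'$ whose cumulative effect matches the original cycle. Choosing $D$ sufficiently large (still polynomial in $|G|$ and $\norm{\Lambda'}$) ensures that the $d-2$ discarded coordinates remain nonnegative along the lifted run, so $\Lambda'$ lifts back to a short LPS in $G$. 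Runs staying entirely in $\mathbb{O}$ reduce to this case via a standard decomposition into a simple path plus at most $|Q|$ maximal cycles. For runs confined to $\mathbb{L}$, some coordinate $\Vec{x}(i)$ is bounded by $D'$ throughout; tracking the value $\Vec{x}(i) \in \{0, 1, \ldots, D'\}$ inside the control state yields an equivalent run in a $(d{-}1)$-VASS with state set $Q \times \{0, \ldots, D'\}$, whose geometric dimension is still at most $2$, so the inductive hypothesis supplies a short LPS that lifts back to $G$.

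The main obstacle I anticipate is justifying the sign-reflecting projection step. Although cycle effects lie in the $2$-dimensional subspace $\CycleSpace(G)$, individual intermediate configurations along a cycle do not, so one must carefully bound the excursion of the discarded coordinates during a lifted cycle and choose $D$ large enough to absorb this excursion while keeping $D$ polynomial in $|G|$. Closely related is the need to verify that the norm and length of $\Lambda'$ returned by~\cite{BEF+21}, together with the LPSes produced recursively in the $\mathbb{L}$ case, compose after the $O(|Q|)$ concatenations over boundary cycles to a single polynomial-size LPS, which is what will deliver the final $|G|^{O(1)}$ bound.
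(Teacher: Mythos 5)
Your overall architecture matches the paper's: reduce to non-positive LPSes via Lemma~\ref{lem:plps-from-lps}, induct on $d$ with base case from~\cite{BEF+21}, split $\mathbb{N}^d$ into the overlapping regions $\mathbb{O}$ and $\mathbb{L}$, decompose the run into at most $|Q|$ boundary cycles plus region-confined pieces, and handle high cycles via the sign-reflecting projection. The concern you raise about the projection step is also resolved essentially as you suspect: the paper first makes the projected LPS \emph{zigzag-free} in an orthant containing $\Vec{v}-\Vec{u}$ (Corollary~\ref{cor:zigzag-free-in-Z-cycle-lps}), the sign-reflecting property forces every lifted cycle effect into that same orthant, and Lemma~\ref{lem:zigzag-free-LPS-high-reach} then lets one pass from $\mathbb{Z}^d$-reachability to $\mathbb{N}^d$-reachability once $D \ge |\Lambda|\cdot\norm{T}$, which stays polynomial.

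The genuine gap is in your treatment of runs confined to $\mathbb{L}$. You assert that such a run has ``some coordinate $\Vec{x}(i)$ bounded by $D'$ throughout,'' but membership in $\mathbb{L} = \{\Vec{u} : \Vec{u}(i)\le D' \text{ for some } i\}$ only guarantees that at \emph{each} configuration \emph{some} coordinate is small --- the identity of that coordinate can change repeatedly along the run, so there need not be any single coordinate that stays bounded. Consequently the direct encoding of one counter value into the control state does not apply, and this is precisely where the paper's ``long and tedious case analysis'' (Section~\ref{sec:flat-low-runs}) lives. The paper proves the stronger Lemma~\ref{lem:flat-near-axes-subset} by an inner induction on the set $\mathcal{I}$ of coordinates allowed to be small: it splits the run into alternating sub-runs in $\mathbb{L}_i[D]$ and $\mathbb{L}_{\mathcal{I}'}[D]$, classifies the $\mathbb{L}_i$-pieces into those that stay low in a second coordinate (absorbed into the $\mathcal{I}'$ side with a slightly larger threshold $D'$) and those that escape, and then uses pigeonhole arguments on the escaping pieces to show that either the number of alternations is polynomially bounded or $\CycleSpace(G)$ contains two independent vectors forcing $i\notin\Supp(\CycleSpace(G))$, in which case Lemma~\ref{lem:flat-cyc-degenerate} shows the $i$-th coordinate is globally confined to a polynomial window and the single-bounded-coordinate reduction (Lemma~\ref{lem:bounded-counter-reduce-dim}) finally applies. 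Your proposal needs this entire layer of argument; without it the inductive step does not go through.
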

We remark that the big-$O$ term here and elsewhere in the paper hides constant that may depend on the dimension $d$, but does not depend on $G, \Vec{u}, \Vec{v}$ or anything else.

By Lemma \ref{lem:plps-from-lps} we know that positive LPSes can be obtained from LPSes. Thus theorem \ref{thm:geo-2vass-flat-plps} follows easily from the following relaxed theorem, which will be proved in the appendix.
\begin{restatable}{theorem}{flatGeoTwoVASS}
    \label{thm:geo-2vass-flat-lps}
    Let $G = (Q, T)$ be a geometrically $2$-dimensional $d$-VASS. For every pair of configurations $p(\Vec{u}), q(\Vec{v}) \in Q\times \mathbb{N}^d$ with $p(\Vec{u}) \xrightarrow{*} q(\Vec{v})$ there exists an LPS $\Lambda$ compatible to $G$ such that $p(\Vec{u}) \xrightarrow{\Lambda} q(\Vec{v})$ and $|\Lambda| \le |G|^{O(1)}$.
\end{restatable}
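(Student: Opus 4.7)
My plan is to follow the roadmap sketched in the preceding text. First I would fix a threshold $D$ polynomial in $|G|$ and set $D'$ slightly larger (say $D' = D + \norm{T}$) so that the overlap $\mathbb{L}\cap\mathbb{O}$ is nonempty and no single transition can jump across $\mathbb{O}\setminus\mathbb{L}$ without passing through $\mathbb{L}\cap\mathbb{O}$. Given any run $\pi$ from $p(\Vec{u})$ to $q(\Vec{v})$, I would decompose it greedily: starting from the initial configuration, repeatedly extract a maximal prefix that lies entirely in whichever of $\mathbb{O}$ or $\mathbb{L}$ contains the current configuration; whenever such a prefix is not the whole remainder, it ends in $\mathbb{L}\cap\mathbb{O}$, and one extracts a maximal cycle that returns to $\mathbb{L}\cap\mathbb{O}$ before continuing. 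Because these cycles are maximal and revisit the same state in $\mathbb{L}\cap\mathbb{O}$, their total number is $O(|Q|)$. This reduces the theorem to producing polynomial-size LPSes for three kinds of sub-runs.

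For the first kind, cycles starting and ending in $\mathbb{O}$, I would invoke the sign-reflecting projection of Section~\ref{sec:srp}. Since $\dim(\CycleSpace(G))\le 2$, the effect of the cycle lies in a $2$-dimensional subspace of $\mathbb{Q}^d$, and the projection selects two coordinates $i,j$ such that for every vector in this subspace nonnegativity is witnessed by the $i$th and $j$th entries alone. Projecting $G$ onto these two coordinates yields a $2$-VASS $G^{(i,j)}$ of size $|G|^{O(1)}$; the original cycle projects to a run of $G^{(i,j)}$, and conversely, because $D$ is chosen large enough to absorb any transient dip in the unobserved coordinates, a $2$-VASS run can be lifted back to a cycle of $G$ in $\mathbb{O}$. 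The main theorem of~\cite{BEF+21} provides a polynomial-size LPS for the projected run, which then lifts, through the correspondence of transitions, to an LPS in $G$ of polynomial size. Runs of the second kind, those lying entirely in $\mathbb{O}$, reduce to the first kind by factoring them as a simple path of length at most $|Q|$ decorated with maximal cycles and applying the cycle construction to each.

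Runs of the third kind, lying entirely in $\mathbb{L}$, I would handle by induction on the dimension $d$, with base $d=2$ given directly by~\cite{BEF+21}. In $\mathbb{L}$ some coordinate $i$ stays bounded by $D'$ throughout, so the behaviour is essentially that of a $(d-1)$-VASS obtained by deleting coordinate $i$ and absorbing its bounded value into a polynomially larger finite control; the geometric dimension of the resulting VASS does not increase, so the inductive hypothesis applies. A case analysis on which coordinate is small at the endpoints (and on whether the endpoints lie in $\mathbb{L}\cap\mathbb{O}$ or in $\mathbb{L}\setminus\mathbb{O}$) gives the polynomial bound with the constants chosen to swallow the blow-up of the finite control.

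I would then concatenate the resulting LPSes along the decomposition of $\pi$: with $O(|Q|)$ segments each of polynomial length in $|G|$, the overall LPS $\Lambda$ witnessing $p(\Vec{u})\xrightarrow{\Lambda}q(\Vec{v})$ still has $|\Lambda|\le |G|^{O(1)}$. I expect the main obstacle to be twofold. The sign-reflecting projection is the novel geometric ingredient: showing that, for any $2$-dimensional subspace of $\mathbb{Q}^d$, two coordinates suffice to detect nonnegativity of every vector in it, and then tracking quantitatively how large $D$ must be for the lift to preserve $\mathbb{N}$-runs, is the conceptual heart of the proof. The induction for the $\mathbb{L}$ case is where most of the bookkeeping lives: one must verify that the polynomial bound is preserved at each level, which forces $D$ and $D'$ to grow only polynomially in $|G|$ rather than compounding exponentially through the recursion.
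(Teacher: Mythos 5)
Your outline of the overall decomposition (splitting the run into $\mathbb{O}$-segments, $\mathbb{L}$-segments, and at most $|Q|$ maximal cycles anchored in $\mathbb{L}\cap\mathbb{O}$) and your treatment of the first two kinds of sub-runs via the sign-reflecting projection match the paper's proof. One caveat worth flagging even there: after projecting a cycle to a $2$-VASS and obtaining an LPS from \cite{BEF+21}, the lift back to $G$ needs the LPS to be \emph{zigzag-free} in an orthant containing $\Vec{v}-\Vec{u}$ (so that Lemma~\ref{lem:zigzag-free-LPS-high-reach} applies and the unobserved coordinates stay nonnegative), plus the injectivity of the projection on $\CycleSpace(G)\cap Z$ (Lemma~\ref{lem:spp-inject}) to conclude that the lifted path has the right effect; ``$D$ large enough to absorb any transient dip'' is not by itself a proof that the lift is an $\mathbb{N}^d$-run.

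The genuine gap is in your handling of runs in $\mathbb{L}$. You write that in $\mathbb{L}$ ``some coordinate $i$ stays bounded by $D'$ throughout,'' so the run is essentially a run of a $(d-1)$-VASS. That is not what membership in $\mathbb{L}$ gives you: $\mathbb{L}$ is the \emph{union} over $i$ of the slabs $\{\Vec{u}:\Vec{u}(i)\le D'\}$, and a run confined to $\mathbb{L}$ may switch arbitrarily often between which coordinate is currently small. If a single coordinate were bounded along the whole run, the reduction to a $(d-1)$-VASS by encoding that coordinate in the control (the paper's Lemma~\ref{lem:bounded-counter-reduce-dim}) would indeed finish the job; but that is only the base case. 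The heart of the paper's argument for this region is an induction over subsets $\mathcal{I}\subseteq[d]$ of ``potentially small'' coordinates, together with a case analysis bounding the number of alternations between $\mathbb{L}_i$ and $\mathbb{L}_{\mathcal{I}\setminus\{i\}}$: if there are too many alternations, pigeonhole arguments produce two linearly independent cycle effects whose supports both avoid coordinate $i$, which (using $\dim\CycleSpace(G)\le 2$ in an essential way) forces $i\notin\Supp(\CycleSpace(G))$ and collapses the problem via Lemma~\ref{lem:flat-cyc-degenerate}; otherwise the number of segments is polynomial and each segment is handled by the inductive hypothesis. Your proposal skips this entirely, and it is precisely the step where the geometric-dimension hypothesis (rather than mere boundedness) does the work; without it the polynomial bound on $|\Lambda|$ for $\mathbb{L}$-runs does not follow.
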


\section{Characteristic Systems for Linear Path Schemes}
\label{sec:lps-system}

The property that linear path schemes can be fully characterized by linear inequality systems is exploited in \cite{BFGHM15} to derive the $\mathsf{PSPACE}$ upper bound of the reachability problem in 2-VASSes. Here we recall this linear inequality system and its properties.

We mainly focus on positive linear path schemes. Fix $\Lambda = \alpha_0\beta_1^+\alpha_1\ldots\beta_k^+\alpha_k$ to be a positive LPS from state $p$ to $q$ that is compatible to some $d$-VASS $G = (Q, T)$, where $k = |\Lambda|_*$ is the number of cycles in $\Lambda$.
\begin{definition}[{cf.\ \cite[Lem.\ 14]{BFGHM15}}]
    \label{def:eq-lps}
    The \emph{characteristic system $\LPSSystem{\Lambda}$ of the positive LPS $\Lambda$} is the system of linear inequalities such that a triple $\Vec{h} = (\Vec{u}, \Vec{e}, \Vec{v}) \in \mathbb{N}^d \times \mathbb{N}^k \times \mathbb{N}^d$ satisfies $\LPSSystem{\Lambda}$, written $\Vec{h} \models \LPSSystem{\Lambda}$, if and only if the following conditions hold:

    \begin{enumerate}
        \item for every $i = 1, \ldots, k$, $\Vec{e}(i) \ge 1$;
        \item for every $i = 0, \ldots, k$ and every $j = 1, \ldots, |\alpha_i|$,
        \begin{equation}
            \Vec{u} + \Delta(\alpha_0\beta_1^{\Vec{e}(1)}\alpha_1\ldots \alpha_{i - 1}\beta_i^{\Vec{e}(i)}) + \Delta(\alpha_i[1 \ldots j]) \ge\Vec{0};
        \end{equation}
        \item for every $i = 1, \ldots, k$ and every $j = 1, \ldots, |\beta_i|$,
        \begin{alignat}{3}
            \Vec{u} + \Delta(\alpha_0\beta_1^{\Vec{e}(1)}\alpha_1\ldots &\beta_{i-1}^{\Vec{e}(i-1)}\alpha_{i-1}) &{}+{} \Delta(\beta_i[1 \ldots j]) \ge\Vec{0},\\
            \Vec{u} + \Delta(\alpha_0\beta_1^{\Vec{e}(1)}\alpha_1\ldots &\beta_{i-1}^{\Vec{e}(i-1)}\alpha_{i-1}\beta_i^{\Vec{e}(i) - 1}) &{}+{} \Delta(\beta_i[1 \ldots j]) \ge\Vec{0};
        \end{alignat}
        \item and finally, $\Vec{u} + \Delta(\alpha_0\beta_1^{\Vec{e}(1)}\alpha_1\ldots \beta_k^{\Vec{e}(k)}\alpha_k) = \Vec{v}$.
    \end{enumerate}
\end{definition}

The readers can easily verify that these constraints are indeed linear in terms of $\Vec{u}, \Vec{e}, \Vec{v}$. The next lemma shows that $\LPSSystem{\Lambda}$ indeed captures all runs admitted by $\Lambda$.

\begin{restatable}{lemma}{lemLPSSystem}
    \label{lem:LPS-system}
    Let $G$ be a $d$-VASS and $\Lambda = \alpha_0\beta_1^+\alpha_1\ldots\beta_k^+\alpha_k$ be a positive LPS from state $p$ to $q$ compatible to $G$. Then for every $\Vec{u}, \Vec{v} \in \mathbb{N}^d$, $p(\Vec{u}) \xrightarrow{\Lambda} q(\Vec{v})$ if and only if there exists $\Vec{e} \in \mathbb{N}^k$ such that $(\Vec{u}, \Vec{e}, \Vec{v}) \models \LPSSystem{\Lambda}$. 
    Moreover, for every $\Vec{u}, \Vec{v} \in \mathbb{N}^d$ and every $\Vec{e} \in \mathbb{N}^k$ such that $(\Vec{u}, \Vec{e}, \Vec{v}) \models \LPSSystem{\Lambda}$, we have $p(\Vec{u}) \xrightarrow{\alpha_0\beta_1^{\Vec{e}(1)}\alpha_1\ldots \beta_k^{\Vec{e}(k)}\alpha_k} q(\Vec{v})$.
\end{restatable}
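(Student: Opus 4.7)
The plan is to prove both the biconditional and the ``moreover'' clause together by a direct unfolding of the VASS semantics. The subtle point is that $\LPSSystem{\Lambda}$ enforces non-negativity of intermediate configurations only for the first and the last iteration of each cycle $\beta_i$, yet the actual run visits $\Vec{e}(i)$ iterations of $\beta_i$; the heart of the proof is a short affine-monotonicity argument showing that these two extremes are in fact sufficient.

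For the forward direction, assume $p(\Vec{u}) \xrightarrow{\Lambda} q(\Vec{v})$. By definition of a positive LPS there exist $e_1,\ldots,e_k \ge 1$ with $p(\Vec{u}) \xrightarrow{\alpha_0\beta_1^{e_1}\alpha_1\cdots\beta_k^{e_k}\alpha_k} q(\Vec{v})$; set $\Vec{e} := (e_1,\ldots,e_k)$. Condition (1) is then immediate, condition (4) records the total effect along the run, and conditions (2) and (3) merely express the non-negativity of particular intermediate configurations visited during this run, all of which lie in $\mathbb{N}^d$ by assumption.

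For the backward direction, together with the ``moreover'' part, assume $(\Vec{u},\Vec{e},\Vec{v})\models \LPSSystem{\Lambda}$. By condition (1) the word $\pi := \alpha_0\beta_1^{\Vec{e}(1)}\alpha_1\cdots\beta_k^{\Vec{e}(k)}\alpha_k$ is admitted by $\Lambda^+$, and condition (4) fixes the final configuration at $q(\Vec{v})$. It remains to check that every configuration along the induced computation lies in $\mathbb{N}^d$. Configurations occurring inside an $\alpha_i$ block are covered by condition (2); those occurring inside the first iteration of $\beta_i$ and those occurring inside the last iteration of $\beta_i$ are covered by the two sub-conditions of (3). For a configuration inside the $m$-th iteration of $\beta_i$ with $1 \le m \le \Vec{e}(i)$ at prefix length $j$, its value equals $\Vec{w}_i + (m-1)\Delta(\beta_i) + \Delta(\beta_i[1\ldots j])$, where $\Vec{w}_i$ denotes the configuration just before $\beta_i^{\Vec{e}(i)}$ is entered. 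Coordinate-wise this is an affine function of $m$, so on each coordinate $\ell$ its minimum over $m\in\{1,\ldots,\Vec{e}(i)\}$ is attained either at $m=1$ or at $m=\Vec{e}(i)$, according to the sign of $\Delta(\beta_i)(\ell)$; both extremes are $\ge \Vec{0}$ by condition (3), so every intermediate iteration is non-negative as well.

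The only non-routine step is the affine-monotonicity observation above, and even it is immediate once one notes the linear dependence on the iteration index $m$; everything else is bookkeeping directly from the definitions of $\xrightarrow{\pi}$ and $\LPSSystem{\Lambda}$.
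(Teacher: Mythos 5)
Your proposal is correct and follows essentially the same route as the paper's proof: the forward direction is the same routine verification, and your affine-monotonicity observation (the value at the $m$-th iteration is affine in $m$, so its coordinate-wise minimum is attained at $m=1$ or $m=\Vec{e}(i)$) is exactly the paper's argument, which writes the intermediate configuration as the convex combination $\frac{e_i-1-\ell}{e_i-1}\Delta_0 + \frac{\ell}{e_i-1}\Delta_1$ of the two extremes guaranteed non-negative by condition (3).
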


We also need to introduce the homogeneous version of $\LPSSystem{\Lambda}$ for technical reasons.

\begin{definition}
    \label{def:eq-lps-0}
    The \emph{homogeneous characteristic system $\LPSSystemHomo{\Lambda}$ of $\Lambda$} is the system of linear inequalities such that a triple $\Vec{h}_0 = (\Vec{u}_0, \Vec{e}_0, \Vec{v}_0) \in \mathbb{N}^d \times \mathbb{N}^k \times \mathbb{N}^d$ satisfies $\LPSSystemHomo{\Lambda}$, written $\Vec{h}_0 \models \LPSSystemHomo{\Lambda}$, if and only if the following conditions hold:

    \begin{enumerate}
        \item for every $i = 0, \ldots, k$, $\Vec{u}_0 + \Delta(\beta_1) \cdot \Vec{e}_0(1) + \cdots + \Delta(\beta_i) \cdot \Vec{e}_0(i) \ge \Vec{0}$;
        \item $\Vec{u}_0 + \Delta(\beta_1) \cdot \Vec{e}_0(1) + \cdots + \Delta(\beta_k) \cdot \Vec{e}_0(k) = \Vec{v}_0$.
    \end{enumerate}
\end{definition}

\section{The Modified KLMST Decomposition Algorithm}
\label{sec:mod-klmst}

In this section we apply our results of Section \ref{sec:flat-geo-2d} to improve the notoriously hard KLMST decomposition algorithm for VASS reachability. Our narration will base on the work of Leroux and Schmitz \cite{LS19}. For readers familiar with \cite{LS19}, the major modifications are listed below:

\begin{itemize}
    \item The decomposition structure is now a sequence of generalized VASS reachability instances linked by (positive) linear path schemes rather than by single transitions.
    \item We introduce a new ``cleaning'' step that replaces all VASS instances which are geometrically $2$-dimensional by polynomial-length linear path schemes compatible to them.
    \item We do not guarantee the exact preservation of action languages at each decomposition step. Instead, we only preserve a subset of action languages. This is a compromise since linear path schemes capture only the reachability relation but not every possible run. Nonetheless, it is enough for the reachability problem.
\end{itemize}


In this section we focus on the effectiveness and correctness of the modified KLMST decomposition algorithm. Its complexity will be analyzed in Section \ref{sec:complexity}.

\subsection{Linear KLM Sequences}

The underlying decomposition structure in the KLMST algorithm was known as KLM sequences, named after Mayr\cite{Mayr81}, Kosaraju\cite{Kosaraju82}, and Lambert\cite{Lambert92}.

\begin{definition}
    A \emph{KLM tuple} of dimension $d$ is a tuple $\bracket{p(\Vec{x})G q(\Vec{y})}$ where $G = (Q, T)$ is a $d$-VASS and $p(\Vec{x}), q(\Vec{y})\in Q\times \mathbb{N}_\omega^d$ are two (generalized) configurations of $G$.
    A \emph{KLM sequence} of dimension $d$ is a sequence of KLM tuples interleaved by transitions of the form
    \begin{equation}
        \xi = \bracket{p_0(\Vec{x}_0)G_0 q_0(\Vec{y}_0)} t_1 \bracket{p_1(\Vec{x}_1)G_1 q_1(\Vec{y}_1)} t_2 \ldots t_k \bracket{p_k(\Vec{x}_k)G_k q_k(\Vec{y}_k)},
    \end{equation}
    where each tuple $\bracket{p_i(\Vec{x}_i)G_i q_i(\Vec{y}_i)}$ is a KLM tuple of dimension $d$ and each $t_i$ is a transition of the form $(q_{i-1}, \Vec{a}_i, p_i)$ from state $q_{i-1}$ to $p_i$ with effect $\Vec{a}_i \in \mathbb{Z}^d$.
\end{definition}

In this paper we generalize the definition of KLM sequences to allow (positive) linear path schemes to connect KLM tuples.

\begin{definition}
    A \emph{linear KLM sequence} of dimension $d$ is a sequence
    \begin{equation}
        \label{eq:lin-klm-seq-form}
        \xi = \bracket{p_0(\Vec{x}_0)G_0 q_0(\Vec{y}_0)} \Lambda_1 \bracket{p_1(\Vec{x}_1)G_1 q_1(\Vec{y}_1)} \Lambda_2 \ldots \Lambda_k \bracket{p_k(\Vec{x}_k)G_k q_k(\Vec{y}_k)},
    \end{equation}
    where each tuple $\bracket{p_i(\Vec{x}_i)G_i q_i(\Vec{y}_i)}$ is a KLM tuple of dimension $d$ and each $\Lambda_i$ is a positive linear path scheme from state $q_{i-1}$ to $p_i$.
\end{definition}

One immediately sees that KLM sequences are just special cases of linear KLM sequences. Let $\xi$ be a linear KLM sequence given as (\ref{eq:lin-klm-seq-form}), we write $\xi_i := \bracket{p_i(\Vec{x}_i)G_i q_i(\Vec{y}_i)}$ for the $i$th KLM tuple occurring in $\xi$.

\paragraph*{Action languages.}

Let $\xi$ be a linear KLM sequence given as (\ref{eq:lin-klm-seq-form}). We say a path $\pi$ from state $p_0$ to $q_k$ is \emph{admitted by} $\xi$, written $\xi\vdash \pi$, if $\pi$ can be written as $\pi = \pi_0\rho_1\pi_1\ldots \rho_k\pi_k$ where $\pi_i$ is a path from $p_i$ to $q_i$ in $G_i$ for each $i = 0, \ldots, k$, and $\rho_i$ is a path admitted by $\Lambda_i$ for each $i = 1, \ldots, k$, such that there exist vectors $\Vec{m}_0, \Vec{n}_0, \ldots, \Vec{m}_k, \Vec{n}_k \in \mathbb{N}^d$ such that
\begin{equation}
    p_0(\Vec{m}_0) \xrightarrow{\pi_0} q_0(\Vec{n}_0)
    \xrightarrow{\rho_1}
    p_1(\Vec{m}_1) \xrightarrow{\pi_1} q_1(\Vec{n}_1)
    \xrightarrow{\rho_2} \cdots \xrightarrow{\rho_k}
    p_k(\Vec{m}_k) \xrightarrow{\pi_k} q_k(\Vec{n}_k)
\end{equation}
and that $\Vec{m}_i \sqsubseteq \Vec{x}_i$, $\Vec{n}_i \sqsubseteq \Vec{y}_i$ for each $i = 0, \ldots, k$.

The \emph{action language} $L_\xi$ of $\xi$ is the language over $\mathbb{Z}^d$ defined by $L_\xi := \bigl\{ \llbracket \pi \rrbracket : \xi \vdash \pi \bigr\}$,
where we recall that $\llbracket \cdot \rrbracket$ is the word morphism mapping each transition to its effect.

We are more interested in the action languages because in some decomposition steps we have to modify the set of transitions, and only the action word of admitted runs can be preserved. Notice that action languages preserve not only the effects of admitted runs, but also their lengths.

\paragraph*{Ranks and Sizes.}

Let $t$ be a transition in a $d$-VASS $G$, we define $\CycleSpace(G/t)$ to be the vector space spanned by the effects of all cycles in $G$ containing $t$. For the VASS $G$, let $r_i$ be the number of transitions $t$ in $G$ such that $\dim(\CycleSpace(G/t)) = i$ for each $i = 0, \ldots, d$. Then the \emph{rank} of $G$ is defined as $\Rank(G) = (r_d, \ldots, r_3) \in \mathbb{N}^{d-2}$. We also define the \emph{full rank} of $G$ as $\RankFull(G) = (r_d, \ldots, r_0) \in \mathbb{N}^{d+1}$.

The following lemma was proved in \cite{LS19}, which shows that in a strongly connected VASS $G$, the space $\CycleSpace(G/t)$ corresponds to $\CycleSpace(G)$.
\begin{lemma}[{\cite[Lem. 3.2]{LS19}}]
    \label{lem:sc-vass-cycspace}
    Let $t$ be a transition of a strongly connected VASS $G$. Then $\CycleSpace(G/t) = \CycleSpace(G)$.
\end{lemma}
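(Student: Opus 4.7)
The plan is to prove the two inclusions separately, with the nontrivial direction being $\CycleSpace(G) \subseteq \CycleSpace(G/t)$. The reverse inclusion $\CycleSpace(G/t) \subseteq \CycleSpace(G)$ is immediate from the definitions, since every cycle containing $t$ is in particular a cycle of $G$, so the span of the former set of effects is contained in the span of the latter.

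For the main direction, let $t = (p, \Vec{a}, q)$ and let $\beta$ be an arbitrary cycle of $G$, based at some state $s$. Since the span is taken over \emph{all} cycles, it suffices to show that $\Delta(\beta) \in \CycleSpace(G/t)$, as $\CycleSpace(G)$ is generated by such $\Delta(\beta)$. The idea is to exploit strong connectedness: I would pick a path $\sigma$ from $q$ back to $p$ (which exists by strong connectedness), together with a path $\pi_1$ from $p$ to $s$ and a path $\pi_2$ from $s$ to $p$. Then both $\gamma_1 := t\,\sigma\,\pi_1\,\beta\,\pi_2$ and $\gamma_2 := t\,\sigma\,\pi_1\,\pi_2$ are cycles at $p$, and each uses the transition $t$ at least once, so their effects lie in $\CycleSpace(G/t)$. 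Since $\Delta(\gamma_1) - \Delta(\gamma_2) = \Delta(\beta)$, we conclude that $\Delta(\beta) \in \CycleSpace(G/t)$, which completes the inclusion.

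No major obstacle is expected; the argument is a standard piece of combinatorics of graphs, and the only subtlety is just to be careful that the concatenations written down are legitimate paths (the endpoints match up correctly) and that both assembled cycles really do pass through $t$. Strong connectedness provides everything needed for the constructions of $\sigma$, $\pi_1$, and $\pi_2$, and once those are fixed the effect calculation is a one-line cancellation.
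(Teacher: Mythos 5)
Your proof is correct, and it is essentially the standard argument: the paper itself does not reprove this lemma but cites it from Leroux and Schmitz, whose proof proceeds by the same cancellation idea (embed an arbitrary cycle $\beta$ into a cycle through $t$ and subtract off the connecting scaffold, using that $\CycleSpace(G/t)$ is a vector space and hence closed under differences). The endpoint bookkeeping in your concatenations $\gamma_1$ and $\gamma_2$ checks out, and strong connectedness supplies $\sigma$, $\pi_1$, $\pi_2$ as claimed.
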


The following corollary is immediate.
\begin{corollary}
    \label{cor:sc-vass-rank0-iff-geo-2d}
    Let $G$ be a strongly connected $d$-VASS. Then $\Rank(G) = \Vec{0}$ if and only if $G$ is geometrically $2$-dimensional.
\end{corollary}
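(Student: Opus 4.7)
The plan is to derive both directions directly from Lemma~\ref{lem:sc-vass-cycspace} by counting transitions according to the dimension of their associated cycle space. Let $k := \dim(\CycleSpace(G))$, so by definition $G$ is geometrically $2$-dimensional iff $k \le 2$. The key observation is that under strong connectedness, the dimension counter $\dim(\CycleSpace(G/t))$ is actually constant across all transitions $t \in T$: by Lemma~\ref{lem:sc-vass-cycspace}, $\CycleSpace(G/t) = \CycleSpace(G)$ for every $t \in T$, hence $\dim(\CycleSpace(G/t)) = k$ for all $t$.

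Consequently, writing $r_i$ as defined just before the corollary, we get $r_k = |T|$ and $r_i = 0$ for all $i \ne k$. Now unfold the definition of $\Rank(G) = (r_d, \ldots, r_3)$: this vector is $\Vec{0}$ precisely when $r_i = 0$ for every $i \in \{3, 4, \ldots, d\}$. Given the concentration of all mass at index $k$, and assuming $|T| > 0$, this is equivalent to $k \notin \{3, \ldots, d\}$, i.e.\ $k \le 2$ (since $k \le d$ always). The degenerate case $|T| = 0$ is handled trivially: all $r_i$ vanish and $\CycleSpace(G) = \{\Vec{0}\}$, so both sides of the equivalence hold.

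Putting the two equivalences together, $\Rank(G) = \Vec{0} \iff k \le 2 \iff G$ is geometrically $2$-dimensional, which is what we wanted.

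There is essentially no obstacle here: the content lies entirely in Lemma~\ref{lem:sc-vass-cycspace}, and once it is invoked the argument is a direct unfolding of definitions. The only thing to be slightly careful about is to note that $\Rank(G)$ truncates the full rank at index $3$, so indices $0, 1, 2$ are not constrained; this is exactly what makes the threshold between ``geometric dimension $\le 2$'' and ``geometric dimension $\ge 3$'' line up with the zero/non-zero dichotomy of $\Rank(G)$.
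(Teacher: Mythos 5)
Your proof is correct and is exactly the argument the paper has in mind: the paper states the corollary as immediate from Lemma~\ref{lem:sc-vass-cycspace}, and your unfolding (all transitions share $\dim(\CycleSpace(G/t)) = \dim(\CycleSpace(G))$, so $\Rank(G)=\Vec{0}$ iff that common dimension is at most $2$, with the $T=\emptyset$ case trivial) is the intended one.
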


Let $\xi$ be a linear KLM sequence given as (\ref{eq:lin-klm-seq-form}). We define the \emph{rank} of $\xi$ as $\Rank(\xi) = \sum_{i = 0}^k \Rank(G_i)$, and the \emph{full rank} of $\xi$ as $\RankFull(\xi) = \sum_{i = 0}^k \RankFull(G_i)$. We remark that the full rank corresponds to the rank defined in \cite{LS19}. Ranks are ordered lexicographically: let $\Vec{r} = (r_d, \ldots, r_0)$ and $\Vec{r}' = (r_d', \ldots, r_0')$, we write $\Vec{r} \lelex \Vec{r}'$ if $\Vec{r} = \Vec{r}'$ or the maximal $i$ with $r_i \ne r_i'$ satisfies $r_i < r_i'$.

Recall that for a VASS $G$ we write $|G|$ for its size as defined in (\ref{eq:vass-size}). For a linear path scheme $\Lambda$, its length $|\Lambda|$ and norm $\norm{\Lambda}$ are defined in Section \ref{sec:lps-def}.
Let $\zeta = \bracket{p(\Vec{x})Gq(\Vec{y})}$ be a KLM tuple of dimension $d$, its \emph{size} is defined to be $|\zeta| = |G| + d\cdot (\norm{\Vec{x}} + \norm{\Vec{y}} + 1)$.
Let $\xi$ be a linear KLM sequence given as (\ref{eq:lin-klm-seq-form}), we define its \emph{size} as
\begin{equation}
    |\xi| = \sum_{i = 0}^k |\xi_i| + \sum_{i = 1}^{k} d\cdot |\Lambda_i| \cdot (\norm{\Lambda_i} + 1).
\end{equation}

Note that the sizes defined in this paper reflect the sizes of unary encoding, thus have an exponential expansion in their binary encoding.

\subsection{Characteristic Systems for Linear KLM Sequences}

We define in this section the characteristic systems of linear KLM sequences, which are systems of linear inequalities that serve as an under-specification of admitted runs. Let $G = (Q, T)$ be a VASS, we first recall the \emph{Kirchhoff system} $K_{G, p, q}$ of $G$ with respect to states $p, q \in Q$, which is a system of linear equations such that a function $\phi \in \mathbb{N}^{T}$ is a model of $K_{G, p, q}$, written $\phi\models K_{G, p, q}$, if and only if
\begin{equation}
    \mathbf{1}_q - \mathbf{1}_p = \sum_{t = (r, \Vec{a}, s) \in T} \phi(t) \cdot (\mathbf{1}_s - \mathbf{1}_r),
\end{equation}
where $\mathbf{1}_p \in \{0, 1\}^Q$ is the indicator function defined by
$\mathbf{1}_p(q) = 1$ if $q = p$ and $\mathbf{1}_p(q) = 0$ otherwise.
We also need the homogeneous version of $K_{G, p, q}$, denoted by $K_{G, p, q}^0$, where a function $\phi \in \mathbb{N}^T$ is a model of it, written $\phi\models K_{G, p, q}^0$, if and only if
\begin{equation}
    \Vec{0} = \sum_{t = (r, \Vec{a}, s) \in T} \phi(t) \cdot (\mathbf{1}_s - \mathbf{1}_r).
\end{equation}

\begin{definition}
    \label{def:eq-klm}
    Let $\xi$ be a linear KLM sequence given by
    \begin{equation}
        \xi = \bracket{p_0(\Vec{x}_0)G_0q_0(\Vec{y}_0)} \Lambda_1 \bracket{p_1(\Vec{x}_1)G_1q_1(\Vec{y}_1)} \Lambda_2 \ldots \Lambda_k \bracket{p_k(\Vec{x}_k) G_k q_k(\Vec{y}_k)}
    \end{equation}
    The \emph{characteristic system} $\KLMSystem{\xi}$ is a set of linear (in)equalities such that a sequence
    \begin{equation}
        \label{eq:klm-system-solution-form}
        \Vec{h} = (\Vec{m}_0, \phi_0, \Vec{n}_0), \Vec{e}_1, (\Vec{m}_1, \phi_1, \Vec{n}_1), \Vec{e}_2, \ldots, \Vec{e}_k, (\Vec{m}_k, \phi_k, \Vec{n}_k),
    \end{equation}
    where each $(\Vec{m}_i, \phi_i, \Vec{n}_i) \in \mathbb{N}^d \times \mathbb{N}^{T_i} \times \mathbb{N}^d$ and each $\Vec{e}_i \in \mathbb{N}^{|\Lambda_i|_*}$, is a model of $\KLMSystem{\xi}$, written $\Vec{h} \models \KLMSystem{\xi}$, if and only if
    \begin{enumerate}
        \item $\Vec{m}_i \sqsubseteq \Vec{x}_i$, $\phi_i \models K_{G, p, q}$, $\Vec{n}_i \sqsubseteq \Vec{y}_i$ and $\Vec{n}_i = \Vec{m}_i + \Delta(\phi_i)$ for every $i = 0, \ldots, k$;
        \item $(\Vec{n}_{i-1}, \Vec{e}_i, \Vec{m}_i) \models \LPSSystem{\Lambda_i}$ for every $i = 1, \ldots, k$.
    \end{enumerate}

    Similarly, the \emph{homogeneous characteristic system} $\KLMSystemHomo{\xi}$ is a set of linear (in)equalities such that a sequence $\Vec{h}$
    of the form (\ref{eq:klm-system-solution-form}) is a model of $\KLMSystemHomo{\xi}$, written $\Vec{h} \models \KLMSystemHomo{\xi}$, if and only if
    \begin{enumerate}
        \item $\Vec{m}_i(j) = 0$ whenever $\Vec{x}_i(j) \ne \omega$, $\phi_i \models K_{G, p, q}^0$, $\Vec{n}_i(j) = 0$ whenever $\Vec{y}_i(j) \ne \omega$, and $\Vec{n}_i = \Vec{m}_i + \Delta(\phi_i)$ for every $i = 0, \ldots, k$;
        \item $(\Vec{n}_{i-1}, \Vec{e}_i, \Vec{m}_i) \models \LPSSystemHomo{\Lambda_i}$ for every $i = 1, \ldots, k$.
    \end{enumerate}

    The sequence $\xi$ is said to be \emph{satisfiable} if $\KLMSystem{\xi}$ has a model, otherwise it's unsatisfiable.
\end{definition}

Let $\Vec{h}$ be a model of $\KLMSystem{\xi}$ (or $\KLMSystemHomo{\xi}$), we shall write $\Vec{m}_i^{\Vec{h}}, \phi_i^{\Vec{h}}, \Vec{n}_i^{\Vec{h}}, \Vec{e}_i^{\Vec{h}}$ for the values of $\Vec{m}_i, \phi_i, \Vec{n}_i, \Vec{e}_i$ assigned by $\Vec{h}$, respectively. Recall that unsatisfiable linear KLM sequences have empty action languages.

\begin{restatable}[{cf.\ \cite[Lem.\ 3.5]{LS19}}]{lemma}{unsatKLMEmptyActLang}
    \label{lem:unsat-klm-empty-act-lang}
    For any unsatisfiable linear KLM sequence $\xi$, $L_{\xi} = \emptyset$.
\end{restatable}

\subsubsection{Bounds on Bounded Values in \texorpdfstring{$\KLMSystem{\xi}$}{Characteristic System}}

We state here a lemma similar to \cite[Lem.\ 3.7]{LS19}, which upper bounds the bounded values in the characteristic system $\KLMSystem{\xi}$. Its proof can be found in the appendix.

\begin{restatable}{lemma}{klmEquationBoundedBounds}
    \label{lem:klm-eq-bounded-bounds}
    Assume that $\xi=\bracket{p_0(\Vec{x}_0)G_0q_0(\Vec{y}_0)} \Lambda_1 \cdots \Lambda_k \bracket{p_k(\Vec{x}_k) G_k q_k(\Vec{y}_k)}$ is satisfiable. Then for every $0 \leq i \leq k$ we have:
    \begin{itemize}
        \item For every $1 \leq j \leq d$, the set of values $\Vec{m}_i^{\Vec{h}}(j)$ where $\Vec{h}\models\KLMSystem{\xi}$ is unbounded if, and only if, there exists a model $\Vec{h}_0$ of $\KLMSystemHomo{\xi}$ such that $\Vec{m}_i^{\Vec{h}_0}(j) > 0$.
        \item For every $t\in T_i$, the set of values $\phi_i^{\Vec{h}}(t)$ where $\Vec{h}\models\KLMSystem{\xi}$ is unbounded if, and only if, there exists a model $\Vec{h}_0$ of $\KLMSystemHomo{\xi}$ such that $\phi_i^{\Vec{h}}(t) > 0$.
        \item For every $1 \leq j \leq d$, the set of values $\Vec{n}_i^{\Vec{h}}(j)$ where $\Vec{h}\models\KLMSystem{\xi}$ is unbounded if, and only if, there exists a model $\Vec{h}_0$ of $\KLMSystemHomo{\xi}$ such that $\Vec{n}_i^{\Vec{h}_0}(j) > 0$.
    \end{itemize}
    Moreover, every bounded value of $\KLMSystem{\xi}$ is bounded by $(10|\xi|)^{12|\xi|}$.
\end{restatable}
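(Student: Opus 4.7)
The plan is to cast $\KLMSystem{\xi}$ and $\KLMSystemHomo{\xi}$ as standard integer linear programs and appeal to a Pottier-style bound on the size of small solutions. First I would unfold Definitions~\ref{def:eq-lps} and~\ref{def:eq-klm}, together with the Kirchhoff conditions, into scalar inequalities of the form $A\Vec{x}\le \Vec{b}$, $\Vec{x}\ge\Vec{0}$ (respectively $A\Vec{x}\le \Vec{0}$, $\Vec{x}\ge\Vec{0}$ for the homogeneous version), where $\Vec{x}$ stacks all components of the unknowns $\Vec{m}_i,\phi_i,\Vec{n}_i,\Vec{e}_i$. The crucial observation is that expanding $\Delta\bigl(\alpha_0\beta_1^{\Vec{e}(1)}\ldots\beta_i^{\Vec{e}(i)}\bigr) = \Delta(\alpha_0)+\Vec{e}(1)\Delta(\beta_1)+\cdots+\Vec{e}(i)\Delta(\beta_i)+\cdots$ yields a genuinely linear expression in $\Vec{x}$, so every nonzero entry of $A$ is either $\pm 1$ or a component of some $\Delta(\beta_l)$, and every entry of $\Vec{b}$ is a bounded sum of components of $\Vec{x}_i,\Vec{y}_i$ and prefix effects of $\alpha_l,\beta_l$. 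All such quantities are at most $|\xi|$ in absolute value, while both the number of rows and the number of columns of $A$ are $O(|\xi|)$.

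Next I would invoke the classical Hilbert-basis decomposition of integer solution sets: the set of non-negative integer solutions to $A\Vec{x}\le\Vec{b}$, $\Vec{x}\ge\Vec{0}$ is a finite union of cosets $\Vec{s}_0 + \mathbb{N}\,H$, where each $\Vec{s}_0$ is a minimal solution of the inhomogeneous system and $H$ is the Hilbert basis of the homogeneous system $A\Vec{x}\le\Vec{0}$, $\Vec{x}\ge\Vec{0}$. By Pottier's theorem, every such $\Vec{s}_0$ and every element of $H$ has entries bounded by $(n\cdot\|A,b\|_\infty)^{O(m)}$, which with our parameters collapses to $(10|\xi|)^{12|\xi|}$ once the $O$-constants are chosen generously. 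The three unboundedness equivalences now follow mechanically: $\Vec{m}_i^{\Vec{h}}(j)$ ranges unboundedly over models of $\KLMSystem{\xi}$ iff some generator in $H$ is positive on this coordinate; but such generators are by construction models of $\KLMSystemHomo{\xi}$, giving the stated equivalence, and the same argument applies verbatim to $\phi_i(t)$ and $\Vec{n}_i(j)$. A bounded coordinate must be zero on every generator, hence every model assigns it a value inherited from some $\Vec{s}_0$ and thus at most $(10|\xi|)^{12|\xi|}$.

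The main obstacle is essentially bookkeeping: one must verify carefully that the ``powered'' cycle effects $\Delta(\beta_l^{\Vec{e}(l)})$ really keep the system linear (with $\Vec{e}(l)$ as a variable and $\Delta(\beta_l)$ as its coefficient) with every coefficient and constant bounded by $|\xi|$, and that the explicit form of Pottier's bound chosen actually delivers the concrete $(10|\xi|)^{12|\xi|}$ ceiling rather than just some less precise $|\xi|^{O(|\xi|)}$ quantity. Modulo this careful accounting, the argument is a direct adaptation of~\cite[Lem.~3.7]{LS19} to linear KLM sequences in place of the classical ones, the only novelty being the presence of the $\Vec{e}_i$-blocks contributed by the positive linear path schemes $\Lambda_i$.
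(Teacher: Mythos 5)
Your proposal matches the paper's proof: the paper likewise linearizes $\KLMSystem{\xi}$ into a system $A\Vec{x}\le\Vec{b}$, applies a Pottier-style decomposition theorem (from Chistikov--Haase, extended to inequalities via slack variables) to write every model as a small model of $\KLMSystem{\xi}$ plus a finite sum of small models of $\KLMSystemHomo{\xi}$, and then does exactly the parameter bookkeeping you describe ($\le 2|\xi|$ variables, $\le 6|\xi|$ inequalities, coefficients and constants $\le|\xi|$) to land on $(10|\xi|)^{12|\xi|}$. The approach and the remaining verification you flag are the same as in the paper, so the proposal is essentially correct.
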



\subsection{Cleaning of Linear KLM Sequences}

In this section we define three conditions that require a linear KLM sequence to be \emph{strongly connected}, \emph{pure}, and \emph{saturated}. Together with the satisfiability condition, they make up the so-called ``clean'' condition of linear KLM sequences. Note that the purity condition is new compared to \cite{LS19}, which requires every geometrically $2$-dimensional VASSes occur in a linear KLM sequence to be replaced by linear path schemes.

\paragraph*{Strongly Connected Sequences}

A linear KLM sequence $\xi = \bracket{p_0(\Vec{x}_0)G_0q_0(\Vec{y}_0)} \Lambda_1 \cdots \Lambda_k \bracket{p_k(\Vec{x}_k) G_k q_k(\Vec{y}_k)}$ is \emph{strongly connected} if all the VASSes $G_0, \ldots, G_k$ are strongly connected (as they are understood as directed graphs). One can easily obtain strongly connected sequences by expanding the strongly connected components of each VASS:

\begin{restatable}[{\cite[Lem.\ 4.2]{LS19}}]{lemma}{decomSCC}
    \label{lem:decom-scc}
    For any linear KLM sequence $\xi$ which is not strongly connected, one can compute in time $\ExpF{|\xi|}$ a finite set $\Xi$ of strongly connected linear KLM sequences such that $L_\xi = \bigcup_{\xi' \in \Xi}L_{\xi'}$ and that $\Rank(\xi') \lelex \Rank(\xi)$ and $|\xi'| \le (2d + 1)|\xi|$ for every $\xi' \in \Xi$.
\end{restatable}

\paragraph*{Pure Sequences}

A KLM tuple $\bracket{p(\Vec{x})Gq(\Vec{y})}$ is called \emph{trivial} if $p(\Vec{x}) = q(\Vec{y})$ and $G$ contains no transition and only a single state $p$. In this case we simply write $\bracket{p(\Vec{x})}$ for this tuple. Note that the action language of a trivial tuple contains exactly the empty word.

A linear KLM sequence $\xi = \bracket{p_0(\Vec{x}_0)G_0q_0(\Vec{y}_0)} \Lambda_1 \cdots \Lambda_k \bracket{p_k(\Vec{x}_k) G_k q_k(\Vec{y}_k)}$ is said to be \emph{pure} if $\xi$ is strongly connected and for every $i = 0, \ldots, k$, $\Rank(G_i) = \Vec{0}$ implies that the tuple $\bracket{p_i(\Vec{x}_i)G_iq_i(\Vec{y}_i)}$ is trivial. By Corollary \ref{cor:sc-vass-rank0-iff-geo-2d}, a rank-$\Vec{0}$ strongly connected VASS is geometrically $2$-dimensional, and thus can be replaced by linear path schemes in case it is not trivial.

\begin{restatable}{lemma}{decomPure}
    \label{lem:decom-pure}
    Let $\xi$ be a strongly connected linear KLM sequence. Whether $\xi$ is pure is in $\mathsf{PSPACE}$. If $\xi$ is not pure, one can compute in space $\PolyF{|\xi|}$ a finite set $\Xi$ of pure linear KLM sequences such that $\bigcup_{\xi'\in\Xi} L_{\xi'} \subseteq L_\xi$ and $\bigcup_{\xi'\in\Xi} L_{\xi'} \ne\emptyset$ whenever $L_\xi \ne \emptyset$, and such that $\Rank(\xi') = \Rank(\xi)$ and $|\xi'| \le |\xi|^{O(1)}$ for all $\xi'\in \Xi$.
\end{restatable}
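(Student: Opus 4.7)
The plan splits into a polynomial-time purity test and a non-deterministic refinement construction driven by Theorem~\ref{thm:geo-2vass-flat-plps}. For the test, examine each KLM tuple $\bracket{p_i(\Vec{x}_i)G_iq_i(\Vec{y}_i)}$ of $\xi$: since $G_i$ is strongly connected, Corollary~\ref{cor:sc-vass-rank0-iff-geo-2d} equates $\Rank(G_i)=\Vec{0}$ with $\dim(\CycleSpace(G_i))\le 2$, and $\CycleSpace(G_i)$ can be computed as the image of the circulation kernel of the node--arc incidence map under the transition-effect matrix. This dimension, together with the triviality of the tuple, is decidable in polynomial time, so purity lies in $\mathsf{P}\subseteq\mathsf{PSPACE}$.

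For the decomposition, call an index $i$ \emph{bad} if $\Rank(G_i)=\Vec{0}$ and $\bracket{p_i(\Vec{x}_i)G_iq_i(\Vec{y}_i)}$ is non-trivial. By Corollary~\ref{cor:sc-vass-rank0-iff-geo-2d} each such $G_i$ is geometrically $2$-dimensional, so Theorem~\ref{thm:geo-2vass-flat-plps} guarantees that every reachability witness $p_i(\Vec{m})\xrightarrow{*}q_i(\Vec{n})$ inside $G_i$ is realised by a positive LPS whose length and norm are bounded by $|G_i|^{O(1)}$. For each bad $i$, non-deterministically guess such an LPS $\Lambda_i^\star$ compatible to $G_i$ and replace the original tuple by
\begin{equation}
    \bracket{p_i(\Vec{x}_i)}\,\Lambda_i^\star\,\bracket{q_i(\Vec{y}_i)},
\end{equation}
where the two outer items are trivial KLM tuples. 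Each such replacement can be written down in polynomial workspace, so the set $\Xi$ of all sequences arising from admissible joint guesses is enumerable in $\mathsf{PSPACE}$.

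The easy properties follow by inspection: $\xi'$ is strongly connected (trivial tuples are vacuously so, and the unmodified $G_j$ are untouched), pure (every remaining rank-$\Vec{0}$ tuple is now trivial), of the same rank as $\xi$ (rank-$\Vec{0}$ VASSes were swapped for rank-$\Vec{0}$ trivial tuples, so the sum is preserved exactly), and of polynomial size. The inclusion $\bigcup_{\xi'\in\Xi}L_{\xi'}\subseteq L_\xi$ is direct: any path admitted by $\xi'$ that traverses an inserted $\Lambda_i^\star$ realises a path in $G_i$ whose endpoint configurations respect $\sqsubseteq\Vec{x}_i$ and $\sqsubseteq\Vec{y}_i$, hence the very same segment is admitted by the original tuple $\bracket{p_i(\Vec{x}_i)G_iq_i(\Vec{y}_i)}$.

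The main obstacle is the converse direction, namely preserving non-emptiness. Fix any path $\pi$ admitted by $\xi$; it decomposes into segments $p_i(\Vec{m}_i)\xrightarrow{\pi_i}q_i(\Vec{n}_i)$ with $\Vec{m}_i\sqsubseteq\Vec{x}_i$ and $\Vec{n}_i\sqsubseteq\Vec{y}_i$ inside each $G_i$. For every bad $i$, apply Theorem~\ref{thm:geo-2vass-flat-plps} to the specific pair $(p_i(\Vec{m}_i),q_i(\Vec{n}_i))$ to extract a polynomial positive LPS $\Lambda_i^\star$ realising that reachability, and take exactly these LPSes in the construction. The resulting $\xi'\in\Xi$ admits the reassembled run, witnessed by the very same $\Vec{m}_i,\Vec{n}_i$, so $L_{\xi'}\ne\emptyset$. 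This is precisely where Theorem~\ref{thm:geo-2vass-flat-plps} is indispensable and where the flattening step of Section~\ref{sec:flat-geo-2d} does its work; every other claim is bookkeeping.
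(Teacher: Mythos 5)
Your proposal is correct and follows essentially the same route as the paper: replace each non-trivial rank-$\Vec{0}$ tuple by $\bracket{p_i(\Vec{x}_i)}\Lambda\bracket{q_i(\Vec{y}_i)}$ over all short positive LPSes, argue $\subseteq$ via compatibility of the LPS with $G_i$, and argue non-emptiness by applying Theorem~\ref{thm:geo-2vass-flat-plps} to the concrete endpoints $p_i(\Vec{m}_i),q_i(\Vec{n}_i)$ of a witnessing run. The only (harmless) divergence is the purity test, where you compute $\dim\CycleSpace(G_i)$ in polynomial time via the image of the circulation kernel --- valid here because each $G_i$ is strongly connected, so Parikh images of cycles span that kernel --- whereas the paper enumerates simple cycles in $\mathsf{PSPACE}$; both meet the stated bound.
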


\paragraph*{Saturated Sequences}

Let $\xi = \bracket{p_0(\Vec{x}_0)G_0q_0(\Vec{y}_0)} \Lambda_1 \cdots \Lambda_k \bracket{p_k(\Vec{x}_k) G_k q_k(\Vec{y}_k)}$ be a linear KLM sequence.
We say $\xi$ is \emph{saturated} if for every $0 \le i \le k$ and every $j \in [d]$, we have
\begin{itemize}
    \item $\Vec{x}_i(j) = \omega$ implies the set of values $\Vec{m}_i^{\Vec{h}}(j)$ where $\Vec{h} \models \KLMSystem{\xi}$ is unbounded; and
    \item $\Vec{y}_i(j) = \omega$ implies the set of values $\Vec{n}_i^{\Vec{h}}(j)$ where $\Vec{h} \models \KLMSystem{\xi}$ is unbounded.
\end{itemize}

\begin{lemma}[{\cite[Lem.\ 4.4]{LS19}}]
    \label{lem:decom-satur}
    From any pure linear KLM sequence $\xi$, one can compute in time $\ExpF{|\xi|^{O(|\xi|)}}$ a finite set $\Xi$ of saturated pure linear KLM sequences such that $L_\xi = \bigcup_{\xi' \in \Xi} L_{\xi'}$, and such that $\Rank(\xi') = \Rank(\xi)$ and $|\xi'| \le |\xi|^{O(|\xi|)}$ for every $\xi' \in \Xi$.
\end{lemma}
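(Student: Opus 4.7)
}

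The plan is to follow the strategy of \cite[Lem.\ 4.4]{LS19}, adapted to linear KLM sequences by leveraging Lemma~\ref{lem:klm-eq-bounded-bounds}. The rough picture is: a sequence fails saturation exactly when some marker $\Vec{x}_i(j) = \omega$ (or $\Vec{y}_i(j) = \omega$) falsely advertises unboundedness, i.e.\ the corresponding projection $\Vec{m}_i(j)$ (or $\Vec{n}_i(j)$) in $\KLMSystem{\xi}$ is in fact bounded. For every such ``fake $\omega$'' we will replace it by all possible concrete values in its allowed range, splitting $\xi$ into finitely many refined sequences.

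First I would invoke Lemma~\ref{lem:klm-eq-bounded-bounds} to separate, for each $0 \le i \le k$ and each coordinate $j \in [d]$, the set of $\omega$-markers into \emph{genuine} (the homogeneous system $\KLMSystemHomo{\xi}$ admits a model witnessing unboundedness at that position) and \emph{spurious} (no such homogeneous witness exists). The genuine markers need no action. For each spurious marker the lemma gives a uniform bound $B := (10|\xi|)^{12|\xi|}$ on the admissible values. I would then build the set $\Xi$ by enumerating, for each spurious marker independently, every value in $\{0, 1, \dots, B\}$ and replacing $\omega$ by that value. Concretely, if there are $N = O(d \cdot k)$ spurious markers, this produces at most $(B+1)^N = |\xi|^{O(|\xi|)\cdot|\xi|}$ refined sequences, still within the stated $\ExpF{|\xi|^{O(|\xi|)}}$ bound both in number and in size.

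Next I would verify the three required properties. For rank preservation, note that the refinement only rewrites the numerical entries in the initial/final markers; the VASSes $G_i$ and linear path schemes $\Lambda_i$ are untouched, so $\Rank(\xi') = \Rank(\xi)$ and purity and strong connectedness are preserved. For the language identity $L_\xi = \bigcup_{\xi' \in \Xi} L_{\xi'}$, the inclusion $\supseteq$ is immediate since every refined sequence only restricts the original $\sqsubseteq$ conditions. For $\subseteq$, take any $\pi \in L_\xi$; by Lemma~\ref{lem:LPS-system} and the definition of $\KLMSystem{\xi}$ its witness assembles into a model $\Vec{h} \models \KLMSystem{\xi}$, and the values $\Vec{m}_i^{\Vec{h}}(j)$, $\Vec{n}_i^{\Vec{h}}(j)$ at every spurious marker lie in $[0, B]$ by the bound of Lemma~\ref{lem:klm-eq-bounded-bounds}; hence $\pi$ is admitted by the refined sequence $\xi' \in \Xi$ obtained by choosing exactly these values at the spurious positions. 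Finally I would argue saturation of each $\xi'$: by construction no spurious $\omega$-marker remains, and the refinement only tightens the characteristic system, so any $\omega$-marker surviving in $\xi'$ still enjoys an unbounded witness inherited from $\xi$ (genuine markers remain genuine because a homogeneous witness for $\xi$ is also a homogeneous witness for $\xi'$, since $\KLMSystemHomo{\cdot}$ is insensitive to the finite entries).

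The main obstacle I expect is the last subtlety: one must confirm that a homogeneous witness for unboundedness carries over from $\xi$ to $\xi'$, i.e.\ replacing $\omega$ by a large finite number in $\xi'$ does not accidentally bound a coordinate that was genuinely unbounded in $\xi$. This reduces to checking that $\KLMSystemHomo{\xi'}$ and $\KLMSystemHomo{\xi}$ differ only by setting some $\Vec{m}_i(j)$ or $\Vec{n}_i(j)$ that were previously free to zero; but those are precisely the coordinates we marked spurious, so any genuine unboundedness witness avoided those coordinates in the first place and remains a valid homogeneous model of $\xi'$. The complexity bookkeeping then falls out from the enumeration count and the polynomial-time solvability of the linear systems produced by Lemmas~\ref{lem:LPS-system} and \ref{lem:klm-eq-bounded-bounds}.
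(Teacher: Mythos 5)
Your proposal is correct and follows essentially the same route as the paper, which likewise invokes Lemma~\ref{lem:klm-eq-bounded-bounds} to identify the spurious $\omega$-components (those whose corresponding variables are bounded in $\KLMSystem{\xi}$) and replaces each by all values up to $(10|\xi|)^{12|\xi|}$. Your additional verification that genuine markers stay genuine — because every homogeneous model already vanishes at spurious positions and hence survives as a model of $\KLMSystemHomo{\xi'}$ — is exactly the right justification for saturation of the refined sequences.
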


\begin{proof}
    By Lemma \ref{lem:klm-eq-bounded-bounds}, if a variable $\Vec{m}_i(j)$ or $\Vec{n}_i(j)$ is bounded in $\KLMSystem{\xi}$, we can replace the corresponding $\omega$ component in $\xi$ by all possible values bounded by $(10|\xi|)^{12|\xi|} \le |\xi|^{O(|\xi|)}$.
\end{proof}

\paragraph*{The Cleaning Lemma}
\label{sec:clean-lemma}

A linear KLM sequence $\xi$ is called \emph{clean} if it is satisfiable, strongly connected, pure and saturated. The lemmas \ref{lem:decom-scc} through \ref{lem:decom-satur} show how to make a linear KLM sequence clean.

\begin{restatable}{lemma}{lemCleaning}
    \label{lem:cleaning}
    From any linear KLM sequence $\xi$, one can compute in time $\ExpF{g(|\xi|)}$ a finite set $\Clean{\xi}$ of clean linear KLM sequences such that $\bigcup_{\xi'\in\Clean{\xi}} L_{\xi'} \subseteq L_\xi$ and $\bigcup_{\xi'\in\Clean{\xi}} L_{\xi'} \ne\emptyset$ whenever $L_\xi \ne \emptyset$. Moreover, for every $\xi' \in \Clean{\xi}$ we have $\Rank(\xi') \lelex \Rank(\xi)$ and $|\xi'| \le g(|\xi|)$ where $g(x) = x^{x^{O(1)}}$.
\end{restatable}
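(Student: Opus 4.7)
}

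The plan is to obtain $\Clean{\xi}$ by composing the three decomposition steps, namely Lemma~\ref{lem:decom-scc}, Lemma~\ref{lem:decom-pure}, and Lemma~\ref{lem:decom-satur}, in this order, and then discarding any unsatisfiable sequence from the resulting set. Concretely, starting from $\xi$ I first apply Lemma~\ref{lem:decom-scc} to obtain a set $\Xi_1$ of strongly connected linear KLM sequences with $L_\xi = \bigcup_{\xi_1 \in \Xi_1} L_{\xi_1}$; then to each $\xi_1 \in \Xi_1$ I apply Lemma~\ref{lem:decom-pure} to obtain a set $\Xi_2^{\xi_1}$ of pure linear KLM sequences; and finally to each $\xi_2$ in $\Xi_2 := \bigcup_{\xi_1 \in \Xi_1} \Xi_2^{\xi_1}$ I apply Lemma~\ref{lem:decom-satur} to get a set $\Xi_3^{\xi_2}$ of saturated pure linear KLM sequences. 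Setting $\Clean{\xi} := \{\,\xi' \in \bigcup_{\xi_2 \in \Xi_2} \Xi_3^{\xi_2} : \xi' \text{ is satisfiable}\,\}$ gives the desired family.

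Correctness of the action-language properties is routine chaining: Lemma~\ref{lem:decom-scc} and Lemma~\ref{lem:decom-satur} preserve action languages exactly, while Lemma~\ref{lem:decom-pure} preserves only the inclusion $\bigcup L_{\xi'} \subseteq L_{\xi_1}$ together with non-emptiness. Composing these yields $\bigcup_{\xi' \in \Clean{\xi}} L_{\xi'} \subseteq L_\xi$, and the non-emptiness chain guarantees $\bigcup_{\xi' \in \Clean{\xi}} L_{\xi'} \neq \emptyset$ whenever $L_\xi \neq \emptyset$. Discarding unsatisfiable sequences is harmless by Lemma~\ref{lem:unsat-klm-empty-act-lang}. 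For the rank bound, note that Lemma~\ref{lem:decom-scc} can only lexicographically decrease (or preserve) the rank, whereas Lemma~\ref{lem:decom-pure} and Lemma~\ref{lem:decom-satur} preserve the rank exactly; hence every $\xi' \in \Clean{\xi}$ satisfies $\Rank(\xi') \lelex \Rank(\xi)$.

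The main calculation is the size and time bound, which I will carry out by tracking sizes step by step. After Lemma~\ref{lem:decom-scc}, each resulting sequence has size at most $(2d+1)|\xi| = O(|\xi|)$, produced in time $\ExpF{|\xi|}$. After Lemma~\ref{lem:decom-pure}, each sequence has size at most $(O(|\xi|))^{O(1)} = |\xi|^{O(1)}$, produced in space polynomial in $|\xi|$ and hence in time $\ExpF{|\xi|^{O(1)}}$. After Lemma~\ref{lem:decom-satur}, each sequence has size at most $(|\xi|^{O(1)})^{O(|\xi|^{O(1)})} = |\xi|^{|\xi|^{O(1)}}$, produced in time $\ExpF{|\xi|^{|\xi|^{O(1)}}}$. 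Setting $g(x) = x^{x^{O(1)}}$ then gives $|\xi'| \le g(|\xi|)$ for each $\xi' \in \Clean{\xi}$, and the overall running time is dominated by the last step, which is $\ExpF{g(|\xi|)}$. Satisfiability of each candidate sequence can then be checked in time polynomial in its size together with the bound $(10|\xi'|)^{12|\xi'|}$ from Lemma~\ref{lem:klm-eq-bounded-bounds}, which is absorbed into $\ExpF{g(|\xi|)}$.

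The only mildly delicate point, and the place where I expect to spend the most care, is making sure that the rank-preservation claims of Lemmas~\ref{lem:decom-pure} and~\ref{lem:decom-satur} compose cleanly with the strict lexicographic decrease produced by Lemma~\ref{lem:decom-scc}, so that the final rank bound $\Rank(\xi') \lelex \Rank(\xi)$ is tight enough to drive the termination argument used later. Everything else is essentially bookkeeping: verify that each intermediate sequence is still well-formed (the cycles introduced by the purity step are actually cycles in the ambient VASS because we replace trivial rank-$\Vec{0}$ components by LPSes from Theorem~\ref{thm:geo-2vass-flat-plps}), check the action-language containments hold at each intermediate step, and confirm that the three size/time expressions compose to $g(x) = x^{x^{O(1)}}$ as stated.
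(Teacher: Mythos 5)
Your proposal is correct and follows essentially the same route as the paper's own proof: compose Lemmas~\ref{lem:decom-scc}, \ref{lem:decom-pure}, and \ref{lem:decom-satur} in that order, discard unsatisfiable sequences via Lemma~\ref{lem:unsat-klm-empty-act-lang}, and chain the size bounds (linear, then polynomial, then $|\xi|^{|\xi|^{O(1)}}$) to obtain $g(x)=x^{x^{O(1)}}$. The rank-composition point you flag as delicate is immediate since $\lelex$ is reflexive and transitive, so nothing further is needed.
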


\subsection{Decomposition of Linear KLM Sequences}

In this section we recall three conditions that require a linear KLM sequence to be \emph{unbounded}, \emph{rigid}, and \emph{pumpable}. If any one of them is violated, a decomposition into a set of linear KLM sequences with strictly smaller ranks can be performed. Essentially there is nothing new in this section compared to \cite{LS19}. The decomposition operations in \cite{LS19} can be directly applied here, since they operate on a single KLM tuple and produce KLM sequences that are just special cases of linear KLM sequences. The proofs in \cite{LS19} can also be adapted easily, and we will omit the details here. Especially, the next lemma shows that the arguments of strict decrease in ranks are still valid even though we discard the lower three components of ranks.

\begin{restatable}{lemma}{pureRankDecrease}
    \label{lem:pure-rank-full-decrease-imply-rank-decrease}
    Let $\xi'$ be a pure linear KLM sequence. For any linear KLM sequence $\xi'$ with $\RankFull(\xi') \ltlex \RankFull(\xi)$, we have $\Rank(\xi')\ltlex \Rank(\xi)$.
\end{restatable}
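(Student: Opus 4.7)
The plan is to prove that for a pure linear KLM sequence $\xi$, the three low components $\RankFull(\xi)(j)$ for $j\in\{0,1,2\}$ are all zero. Once this is established, any $\xi'$ with $\RankFull(\xi')\ltlex\RankFull(\xi)$ must differ from $\xi$ at some index $\ge 3$, and this difference is inherited by the restricted vector $\Rank$, giving the desired strict inequality. The statement's small notational slip — the ``$\xi'$'' in the hypothesis ``Let $\xi'$ be a pure linear KLM sequence'' should read ``$\xi$'' — will be silently corrected.

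The core observation is proved component by component on each VASS $G_i$ occurring in $\xi$. Since $\xi$ is pure, every $G_i$ is strongly connected. By Lemma~\ref{lem:sc-vass-cycspace}, for every transition $t$ of $G_i$ we have $\CycleSpace(G_i/t)=\CycleSpace(G_i)$, so $\dim(\CycleSpace(G_i/t))$ equals the constant $k_i:=\dim(\CycleSpace(G_i))$ for all transitions of $G_i$. Consequently the only possibly non-zero entry of $\RankFull(G_i)$ sits at coordinate $k_i$. I would then split into two cases. If $k_i\ge 3$, the non-zero coordinate is at index at least $3$, so $\RankFull(G_i)(j)=0$ for $j\in\{0,1,2\}$. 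If $k_i\le 2$, Corollary~\ref{cor:sc-vass-rank0-iff-geo-2d} gives $\Rank(G_i)=\Vec{0}$, and the purity hypothesis then forces the KLM tuple $\bracket{p_i(\Vec{x}_i)G_iq_i(\Vec{y}_i)}$ to be trivial. By the very definition of a trivial tuple, $G_i$ has no transitions at all, so $\RankFull(G_i)=\Vec{0}$ and in particular its low coordinates vanish.

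Summing these component-wise contributions, $\RankFull(\xi)(j)=\sum_{i=0}^k \RankFull(G_i)(j)=0$ for each $j\in\{0,1,2\}$. Now suppose $\RankFull(\xi')\ltlex\RankFull(\xi)$ and let $j_\star$ be the largest coordinate at which they disagree; by definition of the lexicographic order, $\RankFull(\xi')(j_\star)<\RankFull(\xi)(j_\star)$. Because all rank entries lie in $\mathbb{N}$ and $\RankFull(\xi)(j)=0$ for $j\le 2$, the index $j_\star$ cannot be $0$, $1$, or $2$. Hence $j_\star\ge 3$, which means $j_\star$ lies in the range of coordinates retained by $\Rank(\cdot)=(r_d,\dots,r_3)$, and at all coordinates strictly above $j_\star$ the two $\Rank$-vectors still agree. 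Therefore $\Rank(\xi')\ltlex\Rank(\xi)$, completing the argument.

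There is no real obstacle here: the whole content is the structural remark that strong connectedness together with Lemma~\ref{lem:sc-vass-cycspace} concentrates all of $\RankFull(G_i)$ in a single coordinate $k_i$, and that purity kills exactly the case where $k_i\le 2$ could produce a non-zero low coordinate. The only point to be careful about is correctly invoking the definition of \emph{trivial} (in particular that a trivial tuple's VASS has no transitions, not merely no cycles) to conclude $\RankFull(G_i)=\Vec{0}$ rather than only $\Rank(G_i)=\Vec{0}$.
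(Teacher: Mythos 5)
Your proof is correct and follows essentially the same route as the paper's: establish that purity forces $r_2=r_1=r_0=0$ in $\RankFull(\xi)$ (the paper asserts this directly from the definition of purity and Lemma~\ref{lem:sc-vass-cycspace}, while you spell out the case analysis via Corollary~\ref{cor:sc-vass-rank0-iff-geo-2d} and triviality), and then observe that the maximal disagreeing coordinate must therefore lie at index $\ge 3$, so the strict lexicographic inequality survives the restriction to $\Rank$. Your added detail — in particular noting that a trivial tuple's VASS has no transitions, hence $\RankFull(G_i)=\Vec{0}$ — is a faithful expansion of the step the paper leaves implicit.
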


\paragraph*{Unbounded Sequences}

Let $\xi = \bracket{p_0(\Vec{x}_0)G_0q_0(\Vec{y}_0)} \Lambda_1 \cdots \Lambda_k \bracket{p_k(\Vec{x}_k) G_k q_k(\Vec{y}_k)}$ be a linear KLM sequence.
We say $\xi$ is \emph{unbounded} if for all $i = 0, \ldots, k$ and every transition $t \in T_i$ where $T_i$ is the set of transitions of $G_i$, the set of values $\phi_i^{\Vec{h}}(t)$ where $\Vec{h} \models \KLMSystem{\xi}$ is unbounded. Bounded transitions can be expanded exhaustively according to the bounds given by Lemma \ref{lem:klm-eq-bounded-bounds}.

\begin{lemma}[{\cite[Lem.\ 4.6]{LS19}}]
    \label{lem:decom-unb}
    Whether a linear KLM sequence $\xi$ is unbounded is decidable in $\mathsf{NP}$. Moreover, if $\xi$ is pure and bounded, one can compute in time $\ExpF{|\xi|^{O(|\xi|)}}$ a finite set $\Xi$ of linear KLM sequences such that $L_\xi = \bigcup_{\xi' \in \Xi}L_{\xi'}$ and such that $\Rank(\xi') \ltlex \Rank(\xi)$ and $|\xi'| < |\xi|^{O(|\xi|)}$ for every $\xi' \in \Xi$.
\end{lemma}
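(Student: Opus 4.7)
The plan is to reduce the lemma to the single observation that for any pure linear KLM sequence $\xi$, the three lowest-index components of $\RankFull(\xi)$—namely $r_2(\xi)$, $r_1(\xi)$ and $r_0(\xi)$—are all zero. Once this is in hand, the lemma follows in one line: if $\RankFull(\xi')\ltlex\RankFull(\xi)$, then the greatest index $j$ at which the two full ranks disagree satisfies $r_j(\xi')<r_j(\xi)$, hence $r_j(\xi)>0$, forcing $j\ge 3$. Since the top $d-2$ components of $\RankFull$ are by definition exactly $\Rank$, the same strict inequality at index $j\ge 3$ yields $\Rank(\xi')\ltlex\Rank(\xi)$.

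To prove the observation, I would first record the following general fact about a strongly connected VASS $G$: by Lemma~\ref{lem:sc-vass-cycspace} one has $\CycleSpace(G/t)=\CycleSpace(G)$ for every transition $t$, so every transition contributes to the same count $r_k(G)$ with $k:=\dim(\CycleSpace(G))$, while $r_j(G)=0$ for every other $j$. Now examine each component $G_i$ of the pure sequence $\xi$. Purity guarantees that $G_i$ is strongly connected, so the fact above applies. If $\Rank(G_i)\neq\Vec{0}$, the contrapositive of Corollary~\ref{cor:sc-vass-rank0-iff-geo-2d} gives $\dim(\CycleSpace(G_i))\ge 3$, so $r_j(G_i)=0$ for every $j\in\{0,1,2\}$. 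If instead $\Rank(G_i)=\Vec{0}$, then purity forces the corresponding KLM tuple to be trivial, so $G_i$ has no transitions at all and $r_j(G_i)=0$ for every $j$. Summing $\RankFull(G_i)$ over $i$ yields $r_0(\xi)=r_1(\xi)=r_2(\xi)=0$, proving the observation.

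There is no real obstacle here: the argument is pure bookkeeping. The only subtlety worth stating is that the purity condition is designed precisely to eliminate strongly connected VASSes of geometric dimension at most $2$ from a sequence, which is exactly what prevents them from contributing to $r_0$, $r_1$ or $r_2$ of $\RankFull(\xi)$. Without purity, a strongly connected geometrically $2$-dimensional $G_i$ with $|T_i|>0$ could produce $r_2(G_i)=|T_i|>0$, and a decrease in $\RankFull$ could then occur entirely inside the low components, breaking the implication.
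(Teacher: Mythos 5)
Your proposal is correct and takes essentially the same route as the paper: the paper states this lemma as an import of \cite[Lem.\ 4.6]{LS19} (whose decomposition of bounded transitions, NP membership, language preservation and size bounds carry over verbatim to linear KLM sequences via Lemma~\ref{lem:klm-eq-bounded-bounds}), and the only new ingredient it supplies is exactly your ``observation'', which is Lemma~\ref{lem:pure-rank-full-decrease-imply-rank-decrease}. Your derivation of $r_0=r_1=r_2=0$ for pure sequences from Lemma~\ref{lem:sc-vass-cycspace} and Corollary~\ref{cor:sc-vass-rank0-iff-geo-2d}, and the resulting transfer of the $\RankFull$ decrease to a $\Rank$ decrease, matches the paper's argument in full.
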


\paragraph*{Rigid Sequences}

A coordinate $j\in [d]$ is said to be \emph{fixed} by a VASS $G = (Q, T)$ if there exists a function $f_j : Q \to \mathbb{N}$ such that $f_j(q) = f_j(p) + \Vec{a}(j)$ for every transition $(p, \Vec{a}, q) \in T$. We also say that $f_j$ \emph{fixes} $G$ at coordinate $j$ in this case.

A KLM tuple $\bracket{p(\Vec{x}) G q(\Vec{y})}$ is said to be \emph{rigid} if for every coordinate $j$ fixed by $G = (Q, T)$, there exists a function $g_j : Q \to \mathbb{N}$ that fixes $G$ at coordinate $j$ and such that $g_j(p) \sqsubseteq \Vec{x}(j)$ and $g_j(q) \sqsubseteq \Vec{y}(j)$.

A linear KLM sequence $\xi = \bracket{p_0(\Vec{x}_0)G_0q_0(\Vec{y}_0)} \Lambda_1 \cdots \Lambda_k \bracket{p_k(\Vec{x}_k) G_k q_k(\Vec{y}_k)}$
is said to be \emph{rigid} if every tuple $\bracket{p_i(\Vec{x}_i) G_i q_i(\Vec{y}_i)}$ in $\xi$ is rigid.

\begin{lemma}[{\cite[Lem.\ 4.9]{LS19}}]
    \label{lem:decom-rigid}
    From any pure linear KLM sequence $\xi$ one can decide in polynomial time whether $\xi$ is not rigid. Moreover, in that case one can compute in polynomial time a linear KLM sequence $\xi'$ such that $L_{\xi} = L_{\xi'}$, $\Rank( \xi' ) \ltlex \Rank( \xi )$, and $|\xi'| \le |\xi|$.
\end{lemma}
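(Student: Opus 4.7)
The plan is to follow the argument of \cite[Lem.\ 4.9]{LS19} closely, observing that rigidity is a property solely of the KLM tuples $\bracket{p_i(\Vec{x}_i) G_i q_i(\Vec{y}_i)}$ appearing in $\xi$; the intermediate positive linear path schemes $\Lambda_j$ play no role in the definition. Hence the decomposition operates on the tuples only and leaves the LPS components entirely untouched. This is what makes the adaptation from single-transition KLM sequences to linear KLM sequences straightforward.

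First I describe the polynomial-time rigidity check, processing each tuple $\bracket{p_i(\Vec{x}_i) G_i q_i(\Vec{y}_i)}$ and each coordinate $j \in [d]$ independently. Since $\xi$ is pure, each $G_i$ is strongly connected, so I can test whether $j$ is fixed by $G_i$ as follows: set $f_j(p_i) = 0$, propagate values $f_j(q) = f_j(p) + \Vec{a}(j)$ along a spanning tree of $G_i$, and verify the same equation on all non-tree transitions. This costs $O(|G_i|)$ per coordinate. If $j$ is fixed, the set of valid fixing functions is the translation family $\{f_j + c : c \in \mathbb{Z}\}$, and the conditions $g_j \ge 0$, $g_j(p_i) \sqsubseteq \Vec{x}_i(j)$, and $g_j(q_i) \sqsubseteq \Vec{y}_i(j)$ collapse to a single-variable linear feasibility problem in $c$, solvable in time $O(|Q_i|)$.

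When $\xi$ is not rigid, I locate the first tuple $i$ and coordinate $j$ witnessing non-rigidity. I construct $\xi'$ by modifying only the $i$-th tuple, following the recipe of \cite{LS19}: replace each $\omega$-entry of $\Vec{x}_i(j), \Vec{y}_i(j)$ by the value forced by any $g_j$ satisfying the weakest admissible subset of constraints, and prune $G_i$ to remove transitions incompatible with the fixed relationship. The LPSes $\Lambda_1, \ldots, \Lambda_k$ and the remaining tuples are copied verbatim, so the size bound $|\xi'| \le |\xi|$ is immediate. Action-language preservation $L_\xi = L_{\xi'}$ holds because every run admitted by $\xi$ already satisfied the fixed relationship along coordinate $j$, so neither the $\omega$-refinement nor the transition pruning removes any admitted run.

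The main obstacle is verifying the strict decrease $\Rank(\xi')\ltlex\Rank(\xi)$. In \cite{LS19} this is established at the level of full rank: the pruned VASS $G_i'$ has strictly fewer transitions at the relevant cycle-space dimension, so $\RankFull(G_i')\ltlex\RankFull(G_i)$, whence $\RankFull(\xi')\ltlex\RankFull(\xi)$. Since $\xi$ is pure, Lemma~\ref{lem:pure-rank-full-decrease-imply-rank-decrease} then lifts this to $\Rank(\xi')\ltlex\Rank(\xi)$ for the truncated rank vector $(r_d, \ldots, r_3)$ used here. This lifting step is precisely where purity is indispensable: without it, the strict decrease in the truncated rank could be hidden in the bottom three components that we have discarded.
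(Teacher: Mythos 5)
Your proposal is correct and matches the paper's intended argument: the paper itself gives no proof of this lemma but defers to \cite[Lem.\ 4.9]{LS19}, remarking that the decomposition operates on individual KLM tuples (so the interleaving linear path schemes are irrelevant) and that Lemma~\ref{lem:pure-rank-full-decrease-imply-rank-decrease} lifts the full-rank decrease to the truncated rank $(r_d,\ldots,r_3)$ using purity --- exactly the two points you identify. Your fleshed-out rigidity test and the discussion of why pruning preserves $L_\xi$ while strictly decreasing $\RankFull$ are consistent with the cited construction, so nothing further is needed.
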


\paragraph*{Pumpable Sequences}

Given a KLM tuple $\bracket{p(\Vec{x})Gq(\Vec{y})}$, recall the \emph{forward and backward acceleration vectors} $\Facc_{G, p}(\Vec{x}), \Bacc_{G, q}(\Vec{y}) \in \mathbb{N}_\omega^d$ defined by
\begin{align}
    \Facc_{G, p}(\Vec{x})(j) &= \begin{cases}
        \omega & \text{if } p(\Vec{x}) \xrightarrow{*} p(\Vec{x}') \text{ for some } \Vec{x}' \text{ with } \Vec{x}' \ge \Vec{x}, \Vec{x}'(j) > \Vec{x}(j)\\
        \Vec{x}(j) & \text{otherwise}
    \end{cases}\\
    \Bacc_{G, q}(\Vec{y})(j) &= \begin{cases}
        \omega & \text{if } q(\Vec{y}') \xrightarrow{*} q(\Vec{y}) \text{ for some } \Vec{y}' \text{ with } \Vec{y}' \ge \Vec{y}, \Vec{y}'(j) > \Vec{y}(j)\\
        \Vec{y}(j) & \text{otherwise}
    \end{cases}
\end{align}

A tuple $\bracket{p(\Vec{x}) G q(\Vec{y})}$ is said to be \emph{pumpable} if $\Facc_{G, p}(\Vec{x})(j) = \Bacc_{G, q}(\Vec{y})(j) = \omega$ for every coordinate $j$ \emph{not} fixed by $G$.

A linear KLM sequence given by $\xi = \bracket{p_0(\Vec{x}_0)G_0q_0(\Vec{y}_0)} \Lambda_1 \cdots \Lambda_k \bracket{p_k(\Vec{x}_k) G_k q_k(\Vec{y}_k)}$
is said to be \emph{pumpable} if every tuple $\bracket{p_i(\Vec{x}_i) G_i q_i(\Vec{y}_i)}$ in $\xi$ is pumpable.

\begin{lemma}[{\cite[Lem.\ 4.15]{LS19}}]
    \label{lem:decom-pump}
    Whether a linear KLM sequence $\xi$ is pumpable is decidable in $\mathsf{EXPSPACE}$. Moreover, if $\xi$ is pure and unpumpable, one can compute in time $\ExpF{|\xi|^{O(1)}}$ a finite set $\Xi$ of linear KLM sequences such that $L_\xi = \bigcup_{\xi' \in \Xi}L_{\xi'}$ and such that $\Rank(\xi') \ltlex \Rank(\xi)$ and $|\xi'| < |\xi|^{O(1)}$ for every $\xi' \in \Xi$.
\end{lemma}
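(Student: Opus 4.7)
The plan is to follow closely the proof of \cite[Lem.\ 4.15]{LS19}, exploiting the fact that both the pumpability predicate and its refuting decomposition operate locally on a single KLM tuple within $\xi$. The intervening linear path schemes $\Lambda_i$ play no role in the definitions of $\Facc$ and $\Bacc$, and the admittance relation $\xi\vdash\pi$ quantifies independently over each pair $(\Vec{m}_i,\Vec{n}_i)$, so any modification of a single tuple $\bracket{p_i(\Vec{x}_i)G_iq_i(\Vec{y}_i)}$ can be substituted back into $\xi$ without disturbing the surrounding structure.

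For decidability, I would reduce checking pumpability tuple-by-tuple to the computation of $\Facc_{G_i,p_i}(\Vec{x}_i)$ and $\Bacc_{G_i,q_i}(\Vec{y}_i)$. Each entry test reduces to a coverability query in $G_i$, which sits in $\mathsf{EXPSPACE}$ by Rackoff's theorem, and the polynomially many such queries compose to stay in $\mathsf{EXPSPACE}$.

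For the decomposition step, suppose $\xi$ is pure and unpumpable, and pick a witnessing tuple $\bracket{p_i(\Vec{x}_i)G_iq_i(\Vec{y}_i)}$ together with a coordinate $j$ not fixed by $G_i$ where, say, $\Facc_{G_i,p_i}(\Vec{x}_i)(j)\neq\omega$. Then $\Vec{x}_i(j)$ is necessarily finite, and in every admitted run through this tuple the starting value $\Vec{m}_i(j)$ is bounded by some $B=|\xi|^{O(1)}$ via the Rackoff-style estimate used in \cite{LS19}. Since $j$ is not fixed by the strongly connected $G_i$, some cycle of $G_i$ has nonzero effect on coordinate $j$; the non-pumpability at $j$ forces at least one transition $t$ on such a cycle to appear a bounded number of times in any admitted run. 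Enumerating the pair consisting of the value of $\Vec{m}_i(j)$ and the use-count of $t$ produces a polynomial family $\Xi$ of replacement sequences; each replacement drops $t$ from at least one relevant cycle and therefore satisfies $\RankFull(\xi')\ltlex\RankFull(\xi)$. Because $\xi$ is pure, Lemma~\ref{lem:pure-rank-full-decrease-imply-rank-decrease} upgrades this to $\Rank(\xi')\ltlex\Rank(\xi)$, and the size bound $|\xi'|<|\xi|^{O(1)}$ follows because in fixed dimension all Rackoff-style bounds are polynomial.

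The main obstacle will be establishing the exact equality $L_\xi=\bigcup_{\xi'\in\Xi}L_{\xi'}$ rather than mere containment. Containment $\supseteq$ is transparent from the construction, since every $\xi'$ is obtained by restricting the $i$-th tuple of $\xi$. For $\subseteq$, I would argue that any admitted run of $\xi$ realizes concrete finite values of $\Vec{m}_i(j)$ and the use-count of $t$, both of which lie inside the enumerated ranges by the Rackoff bound; that run is therefore admitted by the corresponding $\xi'\in\Xi$. No admitted runs are lost, and the lemma follows.
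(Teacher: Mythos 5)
Your framing of the problem is the same as the paper's: the paper does not reprove this lemma but imports \cite[Lem.~4.15]{LS19}, justified precisely by the observations you make --- pumpability and its refuting decomposition act on a single KLM tuple, the interleaving $\Lambda_i$ are untouched, and deciding pumpability reduces to coverability queries answerable in $\mathsf{EXPSPACE}$ via Rackoff. That part of your proposal is fine. The gap is in the decomposition step. Your pivotal claim, that non-pumpability at a coordinate $j$ ``forces at least one transition $t$ on such a cycle to appear a bounded number of times in any admitted run,'' is false: even when coordinate $j$ can never be pumped above $\Vec{x}_i(j)$, an admitted run may alternate arbitrarily often between a cycle with positive effect on $j$ and one with negative effect on $j$ while $j$ oscillates inside a bounded window, so no individual transition need have a bounded use-count. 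Consequently ``dropping $t$ from a relevant cycle'' is not available as the source of the rank decrease, and your derivation of $\RankFull(\xi')\ltlex\RankFull(\xi)$ does not go through. (Note also that $\Facc_{G_i,p_i}(\Vec{x}_i)(j)\ne\omega$ already forces $\Vec{x}_i(j)\in\mathbb{N}$, hence $\Vec{m}_i(j)=\Vec{x}_i(j)$ is a single fixed value; there is nothing to enumerate there.)

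The actual mechanism in \cite{LS19} is different in kind: if $\Facc_{G_i,p_i}(\Vec{x}_i)(j)\ne\omega$ for some $j$ not fixed by $G_i$ (or the symmetric $\Bacc$ case), a Rackoff-style analysis bounds the \emph{value of coordinate $j$ at every configuration} of every run admitted by the $i$-th tuple by some $B\le|\xi|^{O(1)}$ (polynomial because sizes here are unary and $d$ is fixed). One then replaces $G_i$ by the product VASS whose states additionally record the current value of coordinate $j$, exactly as in Lemma~\ref{lem:bounded-counter-reduce-dim}. In the resulting VASS every cycle has zero effect on coordinate $j$, whereas $j$ not being fixed by the strongly connected $G_i$ means $\CycleSpace(G_i)\not\subseteq\{\Vec{v}:\Vec{v}(j)=0\}$; so for every transition $t'$ of the new VASS, $\CycleSpace(G'/t')$ is contained in $\CycleSpace(G_i)\cap\{\Vec{v}:\Vec{v}(j)=0\}$ and hence has dimension strictly smaller than $\dim(\CycleSpace(G_i/t))$ for the original transitions, by Lemma~\ref{lem:sc-vass-cycspace}. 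This is what yields $\RankFull(\xi')\ltlex\RankFull(\xi)$, which purity upgrades to $\Rank(\xi')\ltlex\Rank(\xi)$ via Lemma~\ref{lem:pure-rank-full-decrease-imply-rank-decrease}. Exact preservation of the action language then comes for free: every admitted run keeps coordinate $j$ within $[0,B]$ and is reproduced verbatim in the product, and conversely. I would rewrite your decomposition paragraph around this state-encoding construction rather than around transition use-counts.
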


Note that the $O(1)$ term here hides a constant depending on $d$, which essentially arises from a result on the coverability problem by Rackoff \cite{Rackoff78}. The $O(1)$ term also captures the difference between the sizes of linear KLM sequences defined here and that in \cite{LS19}.

\paragraph*{The Decomposition Lemma}

A linear KLM sequence is \emph{normal} if it is clean, unbounded, rigid, and pumpable. The lemmas \ref{lem:decom-unb} through \ref{lem:decom-pump} show that when a clean linear KLM sequence is not normal, we are able to decompose it into a finite set of linear KLM sequences of strictly smaller ranks.

\begin{restatable}{lemma}{klmSeqDecomposition}
    \label{lem:klm-seq-decomposition}
    From any clean linear KLM sequences $\xi$ that is not normal, one can compute in time $\ExpF{h(\xi)}$ a finite set $\Dec{\xi}$ of clean linear KLM sequence such that $\bigcup_{\xi'\in\Dec{\xi}} L_{\xi'} \subseteq L_\xi$ and $\bigcup_{\xi'\in\Dec{\xi}} L_{\xi'} \ne\emptyset$ whenever $L_\xi \ne \emptyset$. Moreover, for every $\xi' \in \Dec{\xi}$ we have $\Rank(\xi') \ltlex \Rank(\xi)$ and $|\xi'| \le h(\xi)$ where $h(x) = x^{x^{x^{O(1)}}}$.
\end{restatable}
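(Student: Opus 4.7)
The plan is to reduce this to a two-step procedure: apply whichever of the three normality decomposition lemmas (Lemmas \ref{lem:decom-unb}, \ref{lem:decom-rigid}, or \ref{lem:decom-pump}) matches the failing condition, and then run the cleaning lemma (Lemma \ref{lem:cleaning}) on each resulting linear KLM sequence to restore cleanness.

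In more detail, since $\xi$ is clean it is in particular pure, and since it is not normal at least one of unboundedness (decidable in $\mathsf{NP}$), rigidity (decidable in $\mathsf{P}$), and pumpability (decidable in $\mathsf{EXPSPACE}$) must fail. I would test these in order, all well within the $\ExpF{h(|\xi|)}$ budget, and then invoke the corresponding decomposition lemma to obtain a finite set $\Xi$ of linear KLM sequences satisfying $L_\xi = \bigcup_{\xi_0 \in \Xi}L_{\xi_0}$, with $\Rank(\xi_0) \ltlex \Rank(\xi)$ and $|\xi_0| \le |\xi|^{O(|\xi|)}$ for each $\xi_0 \in \Xi$ (the dominant size coming from Lemma \ref{lem:decom-unb}). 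The rank-decrease guarantees of these three lemmas are stated in terms of $\Rank$, not $\RankFull$; they rely on the purity of $\xi$ via Lemma \ref{lem:pure-rank-full-decrease-imply-rank-decrease}.

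Since the sequences in $\Xi$ need no longer be clean—saturation, strong connectivity, and purity can all be broken by these decompositions—I would then feed each $\xi_0 \in \Xi$ into Lemma \ref{lem:cleaning} to obtain $\Clean{\xi_0}$ and set $\Dec{\xi} := \bigcup_{\xi_0 \in \Xi}\Clean{\xi_0}$. For any $\xi'' \in \Clean{\xi_0}$ we get $|\xi''| \le g(|\xi_0|) \le g\bigl(|\xi|^{O(|\xi|)}\bigr) = |\xi|^{|\xi|^{|\xi|^{O(1)}}} = h(|\xi|)$, and chaining $\Rank(\xi'') \lelex \Rank(\xi_0) \ltlex \Rank(\xi)$ gives the required strict rank decrease. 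For the language condition, I would compose $\bigcup_{\xi'' \in \Clean{\xi_0}} L_{\xi''} \subseteq L_{\xi_0}$ from the cleaning lemma with $\bigcup_{\xi_0 \in \Xi}L_{\xi_0} = L_\xi$ from the decomposition lemma to obtain $\bigcup_{\xi'' \in \Dec{\xi}} L_{\xi''} \subseteq L_\xi$. Non-emptiness is preserved because if $L_\xi \ne \emptyset$ then some $L_{\xi_0} \ne \emptyset$, and then some $L_{\xi''} \ne \emptyset$ by the non-emptiness guarantee of Lemma \ref{lem:cleaning}.

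The total running time is dominated by the cleaning step, contributing $\ExpF{g(|\xi_0|)} = \ExpF{h(|\xi|)}$ per sequence with $|\Xi| \le |\xi|^{O(|\xi|)}$ sequences to clean, so the overall bound is $\ExpF{h(|\xi|)}$. The proof itself involves no new deep idea; it is a layered combination of existing results, and the main obstacle is simply the careful bookkeeping of these nested tower-exponential size bounds together with the alignment of the $\Rank$ (as opposed to $\RankFull$) ordering across the decomposition and cleaning steps.
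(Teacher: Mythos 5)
Your proposal is correct and matches the paper's own proof essentially verbatim: apply whichever of Lemmas \ref{lem:decom-unb}, \ref{lem:decom-rigid}, \ref{lem:decom-pump} corresponds to the failing normality condition to obtain $\Xi$, then set $\Dec{\xi} := \bigcup_{\xi_0 \in \Xi}\Clean{\xi_0}$ using Lemma \ref{lem:cleaning}, with the size bound $g\bigl(|\xi|^{O(|\xi|)}\bigr) = h(|\xi|)$ and the rank and language conditions chained exactly as you describe.
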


\subsection{Normal Sequences}

The following lemma shows that a normal linear KLM sequence is guaranteed to have non-empty action language, thus one can terminate the decomposition process once a normal sequence is produced.

\begin{restatable}{lemma}{klmNormalBoundedWitness}
    \label{lem:klm-normal-bounded-witness}
    Let $\xi$ be a normal linear KLM sequence, then there is a word $\sigma \in L_\xi$ whose length is bounded by $|\sigma| \le \ell(|\xi|)$ where $\ell(x) \le x^{O(x)}$.
\end{restatable}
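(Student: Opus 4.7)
The plan is to exhibit a concrete run $\pi$ admitted by $\xi$ of length bounded by $|\xi|^{O(|\xi|)}$ and take $\sigma = \ActWord{\pi}$. I would assemble $\pi = \pi_0 \rho_1 \pi_1 \cdots \rho_k \pi_k$, with each $\pi_i$ a run in $G_i$ from $p_i$ to $q_i$ and each $\rho_i$ a run admitted by $\Lambda_i$, by first picking a small solution of the characteristic system and then turning each component into an actual non-negative run via pumping.

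First I would invoke Lemma \ref{lem:klm-eq-bounded-bounds} together with standard bounds on integer solutions of linear systems to obtain a model $\Vec{h} \models \KLMSystem{\xi}$ whose bounded components are at most $(10|\xi|)^{12|\xi|}$. Unboundedness of $\xi$ lets me ensure $\phi_i^{\Vec{h}}(t) \ge 1$ for every $t \in T_i$, and saturation of $\xi$ means that every $\omega$-entry in $\Vec{x}_i$ or $\Vec{y}_i$ is a genuine unbounded component of $\KLMSystem{\xi}$, witnessed by some homogeneous solution in $\KLMSystemHomo{\xi}$. For each $i$, strong connectivity of $G_i$ (from cleanness) together with $\phi_i^{\Vec{h}} \models K_{G_i, p_i, q_i}$ gives, by Euler's theorem on directed multigraphs, a path $\pi_i^{\circ}$ in $G_i$ from $p_i$ to $q_i$ of length $\sum_t \phi_i^{\Vec{h}}(t) \le |\xi|^{O(|\xi|)}$ with Parikh image $\phi_i^{\Vec{h}}$.

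Next I would upgrade each $\pi_i^\circ$ to a valid $\mathbb{N}^d$-run from $p_i(\Vec{m}_i^{\Vec{h}})$ to $q_i(\Vec{n}_i^{\Vec{h}})$ by padding it with acceleration cycles. For every coordinate $j$ not fixed by $G_i$, pumpability provides $\Facc_{G_i, p_i}(\Vec{x}_i)(j) = \omega$ and $\Bacc_{G_i, q_i}(\Vec{y}_i)(j) = \omega$. Rackoff-style bounds yield a cycle at $p_i$ of length at most $2^{|\xi|^{O(1)}}$ whose effect is componentwise non-negative and strictly positive in coordinate $j$; iterating it $|\xi|^{O(|\xi|)}$ times before $\pi_i^\circ$ pumps that coordinate above the maximum cumulative drop of $\pi_i^\circ$. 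A symmetric backward-acceleration cycle at $q_i$ after $\pi_i^\circ$ then drives coordinate $j$ down to the prescribed value $\Vec{n}_i^{\Vec{h}}(j)$. For coordinates $j$ fixed by $G_i$, rigidity supplies a potential $g_j$ with $g_j(p_i) \sqsubseteq \Vec{x}_i(j)$ and $g_j(q_i) \sqsubseteq \Vec{y}_i(j)$, so coordinate $j$ follows $g_j$ exactly along $\pi_i^\circ$ and is unaffected by the acceleration cycles (which have effect $0$ in every fixed coordinate), hence is automatically non-negative. Finally, by Lemma \ref{lem:LPS-system} the restriction $(\Vec{n}_{i-1}^{\Vec{h}}, \Vec{e}_i^{\Vec{h}}, \Vec{m}_i^{\Vec{h}})$ of $\Vec{h}$ models $\LPSSystem{\Lambda_i}$, so it yields directly a run $\rho_i$ of length $|\Lambda_i|(1 + \max_j \Vec{e}_i^{\Vec{h}}(j)) \le |\xi|^{O(|\xi|)}$.

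Summing over the $O(|\xi|)$ segments gives $|\pi| \le |\xi|^{O(|\xi|)}$, matching the claimed $\ell(x) = x^{O(x)}$. The main obstacle is the pumping step: one must exhibit, simultaneously for every $\omega$-coordinate, acceleration cycles whose iteration counts can be chosen so that all non-$\omega$ coordinates end at exactly the values prescribed by $\Vec{h}$. Here the roles of saturation (non-$\omega$ entries correspond to $\KLMSystem{\xi}$-bounded values, hence are forced to take specific finite values) and rigidity (the fixed coordinates evolve deterministically according to $g_j$) ensure that this bookkeeping is consistent; the length analysis then only needs the crude Rackoff bound $2^{|\xi|^{O(1)}}$ on a single pumping cycle, which is well within the target $|\xi|^{O(|\xi|)}$.
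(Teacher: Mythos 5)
Your skeleton is the right one --- a small model $\Vec{h}$ of $\KLMSystem{\xi}$ with all $\phi_i^{\Vec{h}}(t)>0$, Euler paths from the Kirchhoff solutions, acceleration runs from pumpability, LPS runs from Lemma \ref{lem:LPS-system} --- but the step you yourself flag as ``the main obstacle'' is exactly where the proof lives, and your claim that saturation and rigidity make the bookkeeping consistent does not hold up. The difficulty is that the forward acceleration run at $p_i$ and the backward acceleration run at $q_i$ have nonzero effects on the $\omega$-coordinates, and these effects cannot be cancelled by tuning iteration counts: the per-iteration increments in the various coordinates are coupled inside a single cycle, so there is in general no choice of exponents that lands \emph{exactly} on $\Vec{n}_i^{\Vec{h}}$. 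You need such an exact landing in order to glue $\pi_i$ to the run $\rho_{i+1}$ built from the model values $(\Vec{n}_i^{\Vec{h}},\Vec{e}_{i+1}^{\Vec{h}},\Vec{m}_{i+1}^{\Vec{h}})$, since $\LPSSystem{\Lambda_{i+1}}$ prescribes its source vector exactly and cannot be started from a perturbed configuration. Saturation and rigidity do not repair this; they only tell you which coordinates are $\omega$ and how the fixed coordinates evolve.

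The missing ingredient is a single homogeneous model $\Vec{h}_0\models\KLMSystemHomo{\xi}$ that is simultaneously positive on every $\omega$-coordinate of every $\Vec{x}_i,\Vec{y}_i$ and on every transition (obtained by summing at most $2|\xi|$ witnesses supplied by saturation and unboundedness, each of norm at most $(10|\xi|)^{12|\xi|}$ by Lemma \ref{lem:klm-eq-bounded-bounds}). One then does \emph{not} aim for the configurations of $\Vec{h}$ but for those of the shifted model $\Vec{h}'=\Vec{h}+sr\Vec{h}_0$, which is still a model of $\KLMSystem{\xi}$ and agrees with $\Vec{h}$ on all non-$\omega$ coordinates (so admissibility $\Vec{m}_i\sqsubseteq\Vec{x}_i$, $\Vec{n}_i\sqsubseteq\Vec{y}_i$ is preserved). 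Inside each tuple one inserts, between the acceleration prefix $u_i^{s}$ and the Euler path $\sigma_i$, a compensating Eulerian cycle $w_i^{s}$ whose Parikh image is $r\phi_i^{\Vec{h}_0}-(\psi_{u_i}+\psi_{v_i})$; this is positive by the choice of $r$ and is a homogeneous Kirchhoff solution, hence realizable in the strongly connected $G_i$. Its effect exactly absorbs the effects of $u_i$ and $v_i$, so that $u_i^{s}w_i^{s}\sigma_i v_i^{s}$ carries $p_i(\Vec{m}_i^{\Vec{h}}+sr\Vec{m}_i^{\Vec{h}_0})$ to $q_i(\Vec{n}_i^{\Vec{h}}+sr\Vec{n}_i^{\Vec{h}_0})$ on the nose, with $s$ chosen large enough that all intermediate configurations stay non-negative; the runs $\rho_i$ are then taken from the shifted model via Lemma \ref{lem:LPS-system}. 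Without this compensating-cycle construction your argument does not close; with it, the $|\xi|^{O(|\xi|)}$ length bound goes through as you computed.
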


\subsection{Putting All Together: The Modified KLMST Algorithm}

Here we describe the modified KLMST decomposition algorithm for VASS reachability problem. Suppose we are given a $d$-VASS $G = (Q, T)$ and two configurations $p(\Vec{m}), q(\Vec{n}) \in Q\times \mathbb{N}^d$. To decide whether $p(\Vec{m}) \xrightarrow{*} q(\Vec{n})$ holds in $G$, it is enough to decide whether $L_\xi$ is non-empty where $\xi = \bracket{p(\Vec{m}) G q(\Vec{n})}$. To start with, we use Lemma \ref{lem:cleaning} to clean the sequence $\xi$, and then choose $\xi^0 \in \Clean{\xi}$ non-deterministically. If $\xi^0$ is normal then we are done by Lemma \ref{lem:klm-normal-bounded-witness}. Otherwise, we decompose $\xi^0$ using Lemma \ref{lem:klm-seq-decomposition} and choose $\xi^1 \in \Dec{\xi^0}$ non-deterministically. The procedure continues to produce a series of linear KLM sequences $\xi^0, \xi^1, \xi^2, \ldots$ where $\xi^{i + 1} \in \Dec{\xi^i}$, until either we finally obtain a normal sequence $\xi^n$, or at some point we have to abort because the decomposition of a linear KLM sequence is the empty set. The procedure terminates because $\Rank(\xi^0) >_{\text{lex}} \Rank(\xi^1) >_{\text{lex}} \Rank(\xi^2) >_{\text{lex}} \cdots$ form a decreasing chain of the well-order $(\mathbb{N}^{d-2}, \ltlex)$, which must be finite. If $L_\xi = \emptyset$ then we cannot get a normal sequence since the action languages $L_\xi \supseteq L_{\xi^0} \supseteq L_{\xi^1} \supseteq \cdots$ are all empty. On the other hand, if $L_\xi \ne\emptyset$ then there are non-deterministic choices that always choose the linear KLM sequences with non-empty action languages, which finally lead to a normal sequence. This shows the correctness of the algorithm.

\section{Complexity Upper Bound}
\label{sec:complexity}

The termination of the modified KLMST decomposition algorithm is guaranteed by a ranking function that decreases along a well-ordering. In order to analyze the length of this decreasing chain, we recall the so-called ``length function theorems'' by Schmitz \cite{Schmitz14} in Section \ref{sec:length-func-thm}. After that, we can locate the complexity upper bound of the algorithm in the fast-growing complexity hierarchy \cite{Schmitz16} which we recall in Section \ref{sec:fast-grow-hierar}. Readers familiar with \cite{LS19} may realize that the complexity upper bound for $d$-VASS can be improved to $\mathsf{F}_{d + 1}$, i.e.\ the $(d{ +}1)$-th level in the fast-growing hierarchy, with our ranking function. In fact, by a careful analysis on a property of fast-growing functions, we further improve this bound to $\mathsf{F}_d$.

In this section we assume some familiarity with ordinal numbers (see, e.g.\ \cite{Jech03}). We write $\omega$ here for the first infinite ordinal, not to be confused with the infinite element in previous sections.

\subsection{Length of Sequences of Decreasing Ranks}
\label{sec:length-func-thm}


Let $\xi$ be a linear KLM sequence of dimension $d$ with $\Rank(\xi) = (r_d, \ldots, r_3)$, we define the \emph{ordinal rank} $\alpha_\xi$ of $\xi$ as the ordinal number given by
\begin{equation}
    \alpha_\xi := \omega^{d - 3} \cdot r_d + \omega^{d - 4}\cdot r_{d - 1} + \cdots + \omega^0 \cdot r_3.
\end{equation}
Note that $\Rank(\xi) \ltlex \Rank(\xi')$ if and only if $\alpha_\xi < \alpha_{\xi'}$. With this reformulation, we now focus on the decreasing chain of ordinal ranks.

Let $\alpha < \omega^{\omega}$ be an ordinal given in Cantor Normal Form as $\alpha = \omega^n \cdot c_n + \cdots + \omega^0 \cdot c_0$ where $n, c_0, \ldots, c_n \in \mathbb{N}$, we define the \emph{size} of $\alpha$ as $N\alpha := \max\{n, \max_{0\le i\le n}c_i\}$. For the linear KLM sequence $\xi$ with $\Rank(\xi) = (r_d, \ldots, r_3)$, we have $N\alpha_{\xi} = \max\{d - 3, \max_{3\le i\le d}r_i\} \le |\xi|$.

Given a number $n_0 \in \mathbb{N}$ and a function $h: \mathbb{N} \to \mathbb{N}$ that is monotone inflationary (that is, $x \le h(x)$ and $h(x) \le h(y)$ whenever $x\le y$), we say a sequence of ordinals $\alpha_0, \alpha_1, \ldots$ is \emph{$(n_0, h)$-controlled} if $N\alpha_i \le h^i(n_0)$ for all $i \in \mathbb{N}$, where $h^i(n_0)$ is the $i$th iteration of $h$ on $n_0$.

Let $\xi^0, \xi^1, \ldots$ be the linear KLM sequences produced in the modified KLMST algorithm, by Lemma \ref{lem:klm-seq-decomposition} we know that the sequence of ordinal ranks
\begin{equation}
    \label{eq:decrease-ordinal-ranks}
    \alpha_{\xi^0} > \alpha_{\xi^1} > \cdots
\end{equation}
is $(|\xi^0|, h)$-controlled where $h$ is defined in Lemma \ref{lem:klm-seq-decomposition}. Recall that $\xi^0 \in \Clean{\xi}$ where $\xi = \bracket{p(\Vec{m}) G q(\Vec{n})}$ corresponds to the input reachability instance. Then $|\xi^0| \le g(|\xi|)$ where $g$ is defined in Lemma \ref{lem:cleaning}, and (\ref{eq:decrease-ordinal-ranks}) is indeed $(g(n), h)$-controlled where $n := |\bracket{p(\Vec{m}) G q(\Vec{n})}|$.

\paragraph*{Length function theorem.}

The length of the controlled sequence of ordinals (\ref{eq:decrease-ordinal-ranks}) can be bounded in terms of the hierarchies of fast-growing functions of Hardy and Cicho\'n \cite{CB98}. First recall that given a limit ordinal $\lambda \le \omega^\omega$, the standard \emph{fundamental sequence} of $\lambda$ is a sequence $(\lambda(x))_{x<\omega}$ defined inductively by
\begin{equation}
    \omega^\omega(x) := \omega^{x + 1},
    \qquad
    (\beta + \omega^{k + 1})(x) := \beta + \omega^{k} \cdot (x + 1)
\end{equation}
where $\beta + \omega^{k+1}$ is in Cantor Normal Form. Now given a function $h : \mathbb{N} \to \mathbb{N}$ that is monotone inflationary, we define the \emph{Hardy hierarchy} $(h^\alpha)_{\alpha\le \omega^\omega}$ and the \emph{Cicho\'n hierarchy} $(h_\alpha)_{\alpha\le \omega^\omega}$ as two families of functions $h^\alpha, h_\alpha: \mathbb{N}\to \mathbb{N}$ indexed by ordinals $\alpha \le \omega^\omega$ given inductively by
\begin{alignat}{3}
    h^0(x) &:= x, \qquad
    h^{\alpha+1}(x) :={}& h^\alpha(h(x)), \qquad
    h^\lambda(x) &:= h^{\lambda(x)}(x), \\
    h_0(x) &:= 0, \qquad
    h_{\alpha+1}(x) :={}& 1 + h_{\alpha}(h(x)), \qquad
    h_\lambda(x) &:= h_{\lambda(x)}(x).
\end{alignat}
Observe that Cicho\'n hierarchy counts the number of iterations of $h$ in Hardy hierarchy, that is, $h^{h_\alpha(x)}(x) = h^\alpha(x)$.
Also note that as $h$ is monotone inflationary, by induction on $\alpha$ we have $h_{\alpha}(x) \le h^{\alpha}(x)$.
Now we give the length function theorem as follows.
\begin{theorem}[Length function theorem, {\cite[Thm.\ 3.3]{Schmitz14}}]
    \label{thm:length-func-thm}
    Let $n_0 \ge d - 2$, then the maximal length of $(n_0, h)$-controlled decreasing sequences of ordinals in $\omega^{d-2}$ is $h_{\omega^{d-2}}(n_0)$.
\end{theorem}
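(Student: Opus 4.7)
The plan is to prove, by transfinite induction on $\alpha \le \omega^{d-2}$, the stronger statement that for every $n \in \mathbb{N}$ with $N\alpha \le n$, the maximum length of an $(n,h)$-controlled strictly decreasing sequence starting at $\alpha$ equals exactly $h_\alpha(n)$. The theorem is then obtained by specializing to $\alpha = \omega^{d-2}$, using that $N(\omega^{d-2}) = d-2 \le n_0$ by the hypothesis of the theorem.

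I would establish the upper and lower bounds in parallel, case-splitting on the form of $\alpha$. The base $\alpha = 0$ is immediate, since $h_0(n) = 0$. For the successor case $\alpha = \beta + 1$, any controlled descent begins with a single step to some $\alpha_1 \le \beta$, after which the tail is $(h(n), h)$-controlled and bounded by $\beta$; the induction hypothesis gives the tail length at most $h_\beta(h(n))$, so the total is at most $1 + h_\beta(h(n)) = h_{\beta+1}(n)$, with equality achieved by choosing $\alpha_1 = \beta$ and invoking the induction hypothesis for the lower bound.

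The main case — and the main obstacle — is the limit $\alpha = \gamma + \omega^{k+1}$, whose standard fundamental sequence is $\alpha(x) = \gamma + \omega^k \cdot (x+1)$. A short computation on Cantor normal forms shows that every $\alpha_1 < \alpha$ with $N\alpha_1 \le h(n)$ must satisfy $\alpha_1 < \alpha(h(n))$, while one can pick $\alpha_1 = \alpha(n)$ as an optimal first descent (valid because $N\alpha(n) \le n + 1 \le h(n)$ after accounting for the control step). Combining this with the induction hypothesis applied to $\alpha(n) < \alpha$ with control $n$ yields length $h_{\alpha(n)}(n)$; by the defining recursion $h_\alpha(n) = h_{\alpha(n)}(n)$ of the Cichoń hierarchy, this is exactly the desired bound. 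The lower bound at a limit is witnessed by first descending from $\alpha$ to $\alpha(n)$ and then concatenating the controlled sequence supplied by the induction hypothesis.

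The delicate point, shared by all length function theorems of this shape, is reconciling the elementary ordinal arithmetic that governs controlled descents with the particular fundamental sequence used to define $h_\alpha$, so that the upper and lower bounds agree without an off-by-one loss. This is essentially the content of the original argument in \cite{Schmitz14}; it relies only on the monotonicity and inflationary nature of $h$ together with the fact that subtracting one $\omega^k$-block from the top of a Cantor normal form strictly decreases the ordinal while raising the size budget in a predictable way, exactly as encoded by the fundamental sequence $\alpha(x) = \gamma + \omega^k \cdot (x+1)$.
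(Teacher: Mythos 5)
The paper offers no proof of this statement: it is imported wholesale from Schmitz's length function theorem, so there is nothing in-paper to compare your argument against. Your transfinite induction is the standard route to this result, and its skeleton --- peel one step at a successor, route a limit through its fundamental sequence, and use the norm bound to confine the possible first descents below $\alpha(x)$ --- is the correct one.

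Two steps in your sketch are genuinely incomplete rather than merely ``delicate''. First, in the successor case you bound the tail by $h_\beta(h(n))$ ``by the induction hypothesis'', but the induction hypothesis only yields $h_{\alpha_1}(h(n))$ for the actual first target $\alpha_1 \le \beta$; to finish you need the monotonicity $h_{\alpha_1}(x) \le h_\beta(x)$ for all $\alpha_1 \le \beta$ with $N\alpha_1 \le x$. This is false without the norm restriction (e.g.\ $h_{x+2}(x) = x+2 > x+1 = h_\omega(x)$, with $N(x+2)=x+2>x$) and with it is a separate lemma: one shows that any $\beta' < \beta$ with $N\beta' \le x$ is reachable from $\beta$ by iterating predecessors and $\gamma \mapsto \gamma(x)$, along which $h_{(\cdot)}(x)$ is monotone. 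The same lemma is needed in your limit case after the first descent, so it cannot be waved away. Second, your lower-bound witness at a limit ``first descends from $\alpha$ to $\alpha(n)$'', but $\alpha(n) = \gamma + \omega^k\cdot(n+1)$ has $N\alpha(n) = n+1$, while the budget at the next position is only $h(n) \ge n$; for $h = \mathrm{id}$, which the hypotheses permit, $n+1 > h(n)$ and $\alpha(n)$ is simply not an admissible term (it is also typically again a limit). The correct construction never visits $\alpha(n)$: it recurses through the fundamental sequence without emitting a term, mirroring the fact that the clause $h_\lambda(x) = h_{\lambda(x)}(x)$ contributes no ``$1+$'', and only emits a term upon reaching a successor. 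Both points are handled in Schmitz's proof, but as written your sketch fails already on the $h = \mathrm{id}$ instance.
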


\paragraph*{Small witness property.}

By Theorem \ref{thm:length-func-thm} we can bound the length of (\ref{eq:decrease-ordinal-ranks}), which then yields a bound on the minimal length of runs witnessing reachability.

\begin{restatable}[Small witnesses]{lemma}{lemSmallWitness}
    \label{lem:small-witness}
    Let $G = (Q, T)$ be a $d$-VASS where $d \ge 3$, let $p(\Vec{m}), q(\Vec{n}) \in Q\times \mathbb{N}^d$ be two configurations, and let $n := |\bracket{p(\Vec{m}) G q(\Vec{n})}|$. If $p(\Vec{m}) \xrightarrow{*} q(\Vec{n})$ holds in $G$, then there is a path $\sigma$ such that $p(\Vec{m}) \xrightarrow{\sigma} q(\Vec{n})$ and
    $|\sigma| \le \ell(h^{\omega^{d-2}}(g(n)))$,
    where $g, h, \ell$ are defined in lemmas \ref{lem:cleaning}, \ref{lem:klm-seq-decomposition}, and \ref{lem:klm-normal-bounded-witness}.
\end{restatable}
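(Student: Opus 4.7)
The plan is to run the modified KLMST algorithm on the input $\xi := \bracket{p(\Vec{m})Gq(\Vec{n})}$, track the ordinal ranks of the linear KLM sequences produced along a successful branch, and apply Theorem~\ref{thm:length-func-thm} to bound the length of that branch; then use the normal sequence at its tip together with Lemma~\ref{lem:klm-normal-bounded-witness} to produce a short path in $G$.

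First I would assemble a successful branch. Since $p(\Vec{m}) \xrightarrow{*} q(\Vec{n})$ holds, $L_\xi \ne \emptyset$. By Lemma~\ref{lem:cleaning} there exists $\xi^0 \in \Clean{\xi}$ with $L_{\xi^0} \ne \emptyset$ and $|\xi^0| \le g(n)$. While the current sequence $\xi^i$ is not normal, Lemma~\ref{lem:klm-seq-decomposition} guarantees some $\xi^{i+1} \in \Dec{\xi^i}$ with $L_{\xi^{i+1}} \ne \emptyset$, $\Rank(\xi^{i+1}) \ltlex \Rank(\xi^i)$, and $|\xi^{i+1}| \le h(|\xi^i|)$. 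Let $\xi^0, \xi^1, \ldots, \xi^N$ be the resulting chain, whose last element is normal.

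Next I would bound $N$ by the length function theorem. Translating ranks into ordinals below $\omega^{d-2}$ via $\alpha_{\xi^i}$, the chain $\alpha_{\xi^0} > \alpha_{\xi^1} > \cdots > \alpha_{\xi^N}$ is strictly decreasing. Since $N\alpha_{\xi^i} \le |\xi^i|$ and $|\xi^i| \le h^i(|\xi^0|) \le h^i(g(n))$, this chain is $(g(n), h)$-controlled. Assuming (without loss of generality) that $g(n) \ge d - 2$, Theorem~\ref{thm:length-func-thm} gives
\begin{equation}
    N \le h_{\omega^{d-2}}(g(n)).
\end{equation}
Consequently
\begin{equation}
    |\xi^N| \le h^N(g(n)) \le h^{h_{\omega^{d-2}}(g(n))}(g(n)) = h^{\omega^{d-2}}(g(n)),
\end{equation}
where the last equality uses the standard Cicho\'n/Hardy identity $h^{h_\alpha(x)}(x) = h^\alpha(x)$ recalled in Section~\ref{sec:length-func-thm}.

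Finally I would extract the witness. Since $\xi^N$ is normal, Lemma~\ref{lem:klm-normal-bounded-witness} yields some $\sigma \in L_{\xi^N}$ with $|\sigma| \le \ell(|\xi^N|) \le \ell(h^{\omega^{d-2}}(g(n)))$. By Lemmas~\ref{lem:cleaning} and~\ref{lem:klm-seq-decomposition} we have $L_{\xi^N} \subseteq L_{\xi^0} \subseteq L_\xi$, so $\sigma = \llbracket \pi \rrbracket$ for some path $\pi$ admitted by $\xi$; since $\xi = \bracket{p(\Vec{m})Gq(\Vec{n})}$ has no $\omega$-components, the definition of admission forces the initial and final configurations of $\pi$ to coincide with $p(\Vec{m})$ and $q(\Vec{n})$. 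Because action words preserve length, $|\pi| = |\sigma| \le \ell(h^{\omega^{d-2}}(g(n)))$, establishing the claim. The main technical nuisance I expect is the bookkeeping: keeping the size bound $|\xi^i| \le h^i(g(n))$ synchronized with the ordinal-control parameter $n_0 = g(n)$ demanded by Theorem~\ref{thm:length-func-thm}, and invoking the $h^{h_\alpha} = h^\alpha$ identity correctly. Everything else is a direct chaining of the previous lemmas.
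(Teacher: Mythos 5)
Your proposal is correct and follows essentially the same route as the paper's proof: build the decreasing chain of ordinal ranks along a successful branch of the modified KLMST algorithm, bound its length via Theorem~\ref{thm:length-func-thm}, bound $|\xi^N|$ using the identity $h^{h_\alpha(x)}(x) = h^\alpha(x)$, and finish with Lemma~\ref{lem:klm-normal-bounded-witness}. The extra care you take at the end (extracting the actual path from the action word and checking the endpoint configurations) and the explicit $g(n) \ge d-2$ proviso are details the paper leaves implicit, but nothing differs in substance.
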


\begin{proof}
    Suppose $p(\Vec{m}) \xrightarrow{*} q(\Vec{n})$, then there is a sequence of linear KLM sequence $\xi^0, \xi^1, \ldots, \xi^L$ produced in the modified KLMST algorithm, such that $\xi^L$ is normal. We have discussed that the sequence of their ordinal ranks $\alpha_{\xi^0} > \alpha_{\xi^1} > \cdots > \alpha_{\xi^L}$ is $(g(n), h)$-controlled, so by Theorem \ref{thm:length-func-thm} we have $L \le h_{\omega^{d-2}}(g(n))$. From Lemma \ref{lem:klm-seq-decomposition} and the fact that $h^{h_\alpha(x)} = h^\alpha(x)$, the size of $\xi^L$ is bounded by
    \begin{equation}
        |\xi^L| \le h^{L}(|\xi^0|) \le h^{h_{\omega^{d-2}}(g(n))}(g(n)) = h^{\omega^{d-2}}(g(n)).
    \end{equation}
    Now Lemma \ref{lem:klm-normal-bounded-witness} bounds the length of the minimal witnesses by $\ell(h^{\omega^{d-2}}(g(n)))$.
\end{proof}

\subsection{Fast-Growing Complexity Hierarchy}
\label{sec:fast-grow-hierar}

We recall the fast-growing hierarchy formally introduced by Schmitz \cite{Schmitz16} that captures the complexity class high above elementary. Define $H(x) := x + 1$, we shall use the Hardy hierarchy $(H^\alpha)_{\alpha < \omega^\omega}$, where for example $H^{\omega^2}(x) = 2^{x+1}(x+1)$ and $H^{\omega^3}(x)$ grows faster than the tower function. First we define the family
$\mathscr{F}_\alpha := \bigcup_{\beta < \omega^\alpha} \mathsf{FDTIME}(H^\beta(n))$
which contains functions computable in deterministic time $O(H^\beta(n))$. Observe that, for example, $\mathscr{F}_3$ contains exactly the Kalmar elementary functions. Now we define
\begin{equation}
    \mathsf{F}_\alpha := \bigcup_{p \in \mathscr{F}_\alpha} \mathsf{DTIME}(H^{\omega^\alpha}(p(n)))
\end{equation}
which is the class of decision problems solvable in deterministic time $O(H^{\omega^\alpha}(p(n)))$. Note that non-deterministic time Turing machines can be made deterministic with an exponential overhead in $\mathscr{F}_3$, thus for $\alpha \ge 3$, we have equivalently that
$\mathsf{F}_\alpha = \bigcup_{p \in \mathscr{F}_\alpha} \mathsf{NDTIME}(H^{\omega^\alpha}(p(n)))$.
Observe that $\mathsf{F}_\alpha$ is closed under reductions in $\mathscr{F}_\alpha$.

\subsubsection{Relativized Fast-Growing Functions}

In order to express the complexity of the modified KLMST algorithm in terms of the hierarchy $(\mathsf{F}_\alpha)_{\alpha < \omega}$, one needs to locate the function $h^{\omega^{d - 2}}$ in the Hardy hierarchy $(H^\alpha)_{\alpha < \omega^\omega}$ where $h \in \mathscr{F}_3$ is the elementary function from Lemma \ref{lem:klm-seq-decomposition}. Previously we can upper bound $h^{\omega^{d - 2}}$ by $H^{\omega^{d + 1}}$ with the help of \cite[Lem.\ 4.2]{Schmitz16}. Here we show a slightly better result, from which we can bound $h^{\omega^{d - 2}}(x)$ by $H^{\omega^d}(O(x))$.

\begin{restatable}[{cf.\ \cite[Lem.\ A.5]{Schmitz16}}]{lemma}{relFastGrowFunc}
    \label{lem:rel-fast-grow-func}
    Let $h: \mathbb{N}\to \mathbb{N}$ be a monotone inflationary function, let $a, b, c\ge 1$ and $x_0 \ge 0$ be natural numbers. If for all $x \ge x_0$, $h(x) \le H^{\omega^b \cdot c}(x)$, then $h^{\omega^a}(x) \le H^{\omega^{b + a}}((c + 1)x)$ for all $x \ge \max\{2c, x_0\}$.
\end{restatable}

\subsection{Upper Bounds for VASS Reachability}

Now we analyze the time complexity of the modified KLMST algorithm. Given as input the $d$-VASS $G = (Q, T)$ and two configurations $p(\Vec{m}), q(\Vec{n})$, let $\xi := \bracket{p(\Vec{m})Gq(\Vec{n})}$ and $n := |\xi|$. The initial sequence $\xi^0 \in \Clean{\xi}$ can be computed in (non-deterministic) time elementary in $n$ by Lemma \ref{lem:cleaning}. Then the algorithm produces $\xi^0, \xi^1, \ldots, \xi^L$ with $L \le h_{\omega^{d - 2}}(g(n))$, where $g, h$ are defined in lemmas \ref{lem:cleaning}, \ref{lem:klm-seq-decomposition}. Note that in each step, the sequence $\xi^{i + 1} \in \Dec{\xi^i}$ can be computed in time elementary in $|\xi^i|$ by Lemma \ref{lem:klm-seq-decomposition}, and the sizes $|\xi^i|$ are all bounded by $h^{\omega^{d-2}}(g(n))$ as we have discussed above in the proof of Lemma \ref{lem:small-witness}. To sum up, the entire algorithm finishes in non-deterministic time elementary in $h^{\omega^{d-2}}(g(n))$.

\begin{lemma}
    \label{lem:mod-klmst-time}
    On input a $d$-VASS $G = (Q, T)$ where $d\ge 3$ and $p(\Vec{m}), q(\Vec{n}) \in Q\times\mathbb{N}^d$, the modified KLMST algorithm finishes in non-deterministic time
    $e(h^{\omega^{d-2}}(g(n)))$ where $n = |\langle {p(\Vec{m})Gq(\Vec{n})} \rangle|$, g, h are defined in lemmas \ref{lem:cleaning}, \ref{lem:klm-seq-decomposition}, and $e \in \mathscr{F}_3$ is some fixed function.
\end{lemma}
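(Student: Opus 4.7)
The plan is to bound, along any non-deterministic branch of the algorithm, three quantities: the preprocessing cost, the length $L$ of the decomposition chain $\xi^0, \xi^1, \ldots, \xi^L$, and a uniform bound on the sizes $|\xi^i|$; and then multiply them through. First I would handle preprocessing: Lemma~\ref{lem:cleaning} produces $\xi^0 \in \Clean{\xi}$ in non-deterministic time $\ExpF{g(n)}$ with $|\xi^0| \le g(n)$, which is well inside the target bound.

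Next I would bound $L$ using the length function theorem. Each decomposition step strictly decreases the ordinal rank and satisfies $|\xi^{i+1}| \le h(|\xi^i|)$ by Lemma~\ref{lem:klm-seq-decomposition}, and the inequality $N\alpha_{\xi^i} \le |\xi^i|$ noted in Section~\ref{sec:length-func-thm} ensures that the chain of ordinals $\alpha_{\xi^0} > \alpha_{\xi^1} > \cdots$ in $\omega^{d-2}$ is $(g(n), h)$-controlled. Theorem~\ref{thm:length-func-thm} then yields $L \le h_{\omega^{d-2}}(g(n))$.

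The main step is a uniform size bound for every $\xi^i$ along the chain. Iterating $|\xi^{i+1}| \le h(|\xi^i|)$ gives $|\xi^i| \le h^L(g(n))$, and the defining identity $h^{h_\alpha(x)}(x) = h^\alpha(x)$ of the Cicho\'n hierarchy, applied with $\alpha = \omega^{d-2}$ and $x = g(n)$, collapses this to $|\xi^i| \le h^{\omega^{d-2}}(g(n))$. This collapse, expressing that the Cicho\'n hierarchy counts exactly the number of iterations of $h$ needed in the Hardy hierarchy, is the pivotal technical step; without it, iterating $h$ once per step would yield a nested iteration far outside the target.

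Finally I would combine the estimates. Each individual decomposition step runs in time $\ExpF{|\xi^i|^{O(1)}}$ by Lemma~\ref{lem:klm-seq-decomposition}, hence elementary in $h^{\omega^{d-2}}(g(n))$. Multiplying by the number of steps $L \le h_{\omega^{d-2}}(g(n)) \le h^{\omega^{d-2}}(g(n))$ (using $h_\alpha \le h^\alpha$) keeps the total non-deterministic running time elementary in $h^{\omega^{d-2}}(g(n))$, which I would package as $e(h^{\omega^{d-2}}(g(n)))$ for a fixed $e \in \mathscr{F}_3$, matching the statement of the lemma.
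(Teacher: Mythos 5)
Your proposal is correct and follows essentially the same route as the paper: elementary preprocessing via Lemma~\ref{lem:cleaning}, the bound $L \le h_{\omega^{d-2}}(g(n))$ from the length function theorem, the collapse $h^{h_{\omega^{d-2}}(g(n))}(g(n)) = h^{\omega^{d-2}}(g(n))$ to uniformly bound the sizes $|\xi^i|$, and the observation that an elementary per-step cost times an at-most-$h^{\omega^{d-2}}(g(n))$ number of steps stays elementary in $h^{\omega^{d-2}}(g(n))$. You correctly single out the Hardy/Cicho\'n identity as the pivotal step, exactly as the paper does in the proof of Lemma~\ref{lem:small-witness}.
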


Since $h$ is an elementary function, there is a number $c \in \mathbb{N}$ such that $h$ is eventually dominated by $H^{\omega^2\cdot c}$. By Lemma \ref{lem:rel-fast-grow-func} we can upper bound $h^{\omega^{d-2}}(x)$ by $H^{\omega^d}((c + 1)x)$. Observe that the inner part $g(n)$ is elementary in the binary encoding size of the input $G, p(\Vec{m}), q(\Vec{n})$, thus can be captured by a function $p \in \mathscr{F}_3$. Finally, \cite[Lem.\ 4.6]{Schmitz16} allows us to move the outermost function $e$ to the innermost position. Hence we have the following upper bound.

\begin{theorem}
    \label{thm:vass-Fd}
    Reachability in $d$-dimensional VASS is in $\mathsf{F}_d$ for all $d\ge 3$.
\end{theorem}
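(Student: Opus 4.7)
The plan is to chain together the runtime bound of Lemma \ref{lem:mod-klmst-time} with the improved relativization bound of Lemma \ref{lem:rel-fast-grow-func} so as to pinpoint the overall runtime at level $\omega^d$ of the Hardy hierarchy, and then invoke the closure properties of $\mathsf{F}_d$. First, since the function $h$ from Lemma \ref{lem:klm-seq-decomposition} is Kalmar elementary (it is of the form $x^{x^{x^{O(1)}}}$), there exist constants $c \ge 1$ and $x_0$ such that $h(x) \le H^{\omega^2 \cdot c}(x)$ for all $x \ge x_0$; this is a routine exercise since $H^{\omega^2}$ already dominates every finitely iterated exponential.

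Next I would apply Lemma \ref{lem:rel-fast-grow-func} with parameters $a = d - 2$ and $b = 2$, obtaining $h^{\omega^{d-2}}(x) \le H^{\omega^d}((c+1)x)$ for all sufficiently large $x$. This is exactly the step that delivers the improvement over the previously known $\mathsf{F}_{d+1}$ bound: had one used the cruder \cite[Lem.\ 4.2]{Schmitz16} here, the exponent would climb to $\omega^{d+1}$. The elementary function $g$ from Lemma \ref{lem:cleaning}, multiplied by the constant $c+1$, still lies in $\mathscr{F}_3$; and since the unary size $n = |\langle p(\Vec{m}) G q(\Vec{n})\rangle|$ of a KLM tuple is itself elementary in the binary-encoded input size, the composed argument $(c+1)\cdot g(n)$ is bounded by some function $p \in \mathscr{F}_3$.

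Combining with the outer elementary function $e \in \mathscr{F}_3$ from Lemma \ref{lem:mod-klmst-time}, the nondeterministic runtime is bounded by $e\bigl(H^{\omega^d}(p(n))\bigr)$. Invoking \cite[Lem.\ 4.6]{Schmitz16} absorbs $e$ into the argument of $H^{\omega^d}$, yielding a bound of the form $H^{\omega^d}(p'(n))$ for some $p' \in \mathscr{F}_3 \subseteq \mathscr{F}_d$. By the definition of $\mathsf{F}_d$ (using its nondeterministic characterisation, valid for $d \ge 3$), this places reachability in $d$-dimensional VASS inside $\mathsf{F}_d$.

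The hard work is already encapsulated in Lemma \ref{lem:rel-fast-grow-func}, so the theorem itself reduces to careful bookkeeping of ordinal indices and invocations of closure lemmas. The only delicate points are checking that the various inequalities hold for all sufficiently large $x$ (finitely many small inputs are absorbed into $p'$), and verifying that the nondeterministic-to-deterministic collapse for $\mathsf{F}_\alpha$ at $\alpha \ge 3$ indeed allows us to state the upper bound without an extra exponential.
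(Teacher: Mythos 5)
Your proposal is correct and follows essentially the same route as the paper: dominate $h$ by $H^{\omega^2\cdot c}$, apply Lemma~\ref{lem:rel-fast-grow-func} with $b=2$, $a=d-2$ to land at $H^{\omega^d}((c+1)x)$, absorb $g$ and the constants into some $p\in\mathscr{F}_3$, and push the outer elementary function $e$ inward via \cite[Lem.\ 4.6]{Schmitz16}. The only nitpick is your side remark that ``$H^{\omega^2}$ already dominates every finitely iterated exponential'' --- since $H^{\omega^2}(x)=2^{x+1}(x+1)$ this is false for $H^{\omega^2}$ itself, but the intended (and correct) fact is that $H^{\omega^2\cdot c}$ is a $c$-fold iterated exponential, so a suitable $c$ exists for the elementary $h$.
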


Also, by Lemma \ref{lem:small-witness} there is a simple combinatorial algorithm for $d$-VASS reachability. We fist compute the bound $B := \ell(h^{\omega^{d-2}}(g(n)))$, which can be done in time elementary in $B$ by \cite[Thm.\ 5.1]{Schmitz16}. Then we can decide reachability by just enumerate all possible paths in $G$ with length bounded by $B$.

\section{Conclusion}
\label{sec:conclusion}

We have shown that the reachability problem in $d$-dimensional vector addition system with states is in $\mathsf{F}_d$, improving the previous $\mathsf{F}_{d + 4}$ upper bound by Leroux and Schmitz \cite{LS19}. By capturing reachability in geometrically $2$-dimensional VASSes with linear path schemes, we are able to reduce significantly the number of decomposition steps in the KLMST decomposition algorithm. Combined with a careful analysis on fast-growing functions, we finally obtained the $\mathsf{F}_d$ upper bound. It should be noticed though, that our algorithm avoids computing the ``full decomposition'' \cite{LS15} of KLM sequences, thus cannot improve the complexity of problems that essentially rely on the full decomposition, e.g., the VASS downward language inclusion problem \cite{HMW10,Zetzsche16,LS19}.

It has been shown that the reachability problem in $(2d{+}3)$-VASS is $\mathsf{F}_d$-hard \cite{CJLLO23}.
In the case of $3$-VASS, it is known that the reachability problem is $\mathsf{PSPACE}$-hard. The gap between the lower bound and the upper bound $\mathsf{F}_3=\mathsf{TOWER}$~\cite{YF23} is huge.
It is very unlikely that the problem is $\mathsf{PSPACE}$-complete.
Effort to uplift the lower bound is called for. 

\bibliography{ref.bib}

\newpage
\appendix

\section{Proof of Theorem \ref{thm:geo-2vass-flat-lps}}

In this appendix we prove the main result of Section \ref{sec:flat-geo-2d}.

\flatGeoTwoVASS*

First note that for a geometrically $2$-dimensional $d$-VASS $G$, we can safely assume that $\dim(\CycleSpace(G)) = 2$ in the proof of Theorem \ref{thm:geo-2vass-flat-lps}. If it is the case that $\dim(\CycleSpace(G)) < 2$, we just add an isolated state $q_\bot$ with self-loops on it to enlarge $\CycleSpace(G)$. Note that $q_\bot$ will never be used in any LPS capturing reachability relation in the original $G$. 

We will prove Theorem \ref{thm:geo-2vass-flat-lps} by induction on $d$. The base case $d = 2$ is already studied in \cite{BEF+21} and will be reviewed in the section \ref{sec:review-flat-2vass}. First let's recall a special kind of linear path schemes, called \emph{zigzag-free} LPSes.

\subsection{Zigzag-free Linear Path Schemes}

An \emph{orthant} is one of the $2^d$ regions in $\mathbb{Q}^d$ split by the $d$ coordinate axes. Formally speaking, given a tuple $t \in \{+1, -1\}^d$, the orthant $Z_t$ defined by $t$ is the following set:
\begin{equation}
    Z_t := \{ \Vec{u} \in \mathbb{Q}^d : \Vec{u}(i) \cdot t(i) \ge 0 \text{ for all } i \in [d]\}.
\end{equation}

An LPS $\Lambda = \alpha_0\beta_1^*\alpha_1\ldots\beta_k^*\alpha_k$ is said to be \emph{zigzag-free in the orthant $Z_t$} if $\Delta(\beta_j) \in Z_t$ for all $j = 1, \ldots, k$. 
We say $\Lambda$ is \emph{zigzag-free} if it is zigzag-free in some orthant.
Zigzag free linear path schemes can cast $\mathbb{Z}^d$-reachability to $\mathbb{N}^d$-reachability when the start and end configurations are sufficiently high:

\begin{lemma}[{\cite[lemma 4.7]{BEF+21} and \cite[lemma 4.6]{LS04}}]
    \label{lem:zigzag-free-LPS-high-reach}
    Let $(G, \Lambda)$ be a zigzag-free LPS where $G = (Q, T)$ is a $d$-VASS. Let $p(\Vec{u}), q(\Vec{v}) \in Q\times \mathbb{N}^d$ be two configurations of $G$.  If $p(\Vec{u}) \xrightarrow{\Lambda}_{\mathbb{Z}^d} q(\Vec{v})$ and $\Vec{u}(i), \Vec{v}(i) \ge |\Lambda| \cdot \norm{T}$ for all $i \in [d]$, then $p(\Vec{u}) \xrightarrow{\Lambda}_{\mathbb{N}^d} q(\Vec{v})$.
\end{lemma}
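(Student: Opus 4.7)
The plan is to verify, coordinate by coordinate, that the given $\mathbb{Z}^d$-run along $\Lambda$ already keeps all counters non-negative. Fix $i \in [d]$; by zigzag-freeness of $\Lambda$ in some orthant $Z_t$, either $t(i) = +1$, in which case $\Delta(\beta_j)(i) \ge 0$ for all $j \in [k]$, or $t(i) = -1$, in which case $\Delta(\beta_j)(i) \le 0$ for all $j \in [k]$.

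Suppose first that $t(i) = +1$. At any intermediate position of the run, I would decompose the accumulated effect at coordinate $i$ into a \emph{cyclic part} (the contribution $\sum_{l=1}^{j-1} e_l \Delta(\beta_l)(i)$ of complete cycle blocks already traversed, plus a partial multiple $e' \Delta(\beta_j)(i)$ of the current cycle, if any) and an \emph{acyclic part} (the contributions of the $\alpha_l$ already completed, plus at most one partial fragment of some $\alpha_j$ or some $\beta_j$). The cyclic part is a non-negative combination of non-negative scalars, so it is $\ge 0$. To bound the acyclic part, the key observation is that the completed $\alpha_l$ together with the single partial fragment correspond to pairwise disjoint slices of the skeleton word $\alpha_0\beta_1\alpha_1\ldots\beta_k\alpha_k$; their combined length is therefore at most $|\Lambda|$, and the acyclic magnitude is at most $|\Lambda|\cdot\norm{T}$. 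Consequently the running counter at coordinate $i$ is at least $\Vec{u}(i) - |\Lambda|\cdot\norm{T} \ge 0$.

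The case $t(i) = -1$ I would handle symmetrically by analyzing the \emph{suffix effect} from the current position to the end of the run. Its cyclic part is now a non-negative combination of non-positive scalars, hence $\le 0$; its acyclic part again has magnitude at most $|\Lambda|\cdot\norm{T}$ by the same disjoint-slice argument. The running counter then equals $\Vec{v}(i)$ minus this suffix effect, which is at least $\Vec{v}(i) - |\Lambda|\cdot\norm{T} \ge 0$.

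The main technical point is the bookkeeping that bounds the acyclic part by $|\Lambda|\cdot\norm{T}$ rather than a larger multiple. This requires a uniform treatment of the two sub-cases — the current position lying inside some $\alpha_j$ (after having completed $\beta_j^{e_j}$), versus lying inside some $\beta_j$-block (after $e' < e_j$ full copies of $\beta_j$) — together with the observation that in each case the completed $\alpha_l$ together with the single partial fragment occupy disjoint positions of the underlying skeleton word, so that the bound on $\sum_l |\alpha_l| + |\text{partial}|$ telescopes down to $|\Lambda|$.
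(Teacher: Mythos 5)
The paper does not prove this lemma itself; it imports it verbatim from \cite{BEF+21} (Lemma~4.7) and \cite{LS04} (Lemma~4.6). Your argument is a correct reconstruction of the standard proof from those sources: the split of each intermediate accumulated effect into a signed cyclic part (controlled by zigzag-freeness) and an acyclic part of magnitude at most $|\Lambda|\cdot\norm{T}$ (because the completed $\alpha_l$'s and the one partial fragment occupy disjoint slices of the skeleton word), with the prefix analysis for coordinates where the cycles are non-decreasing and the symmetric suffix analysis from $\Vec{v}$ for coordinates where they are non-increasing, is exactly the intended argument and the bookkeeping is right.
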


\subsection{Review of Known Results in Dimension 2}
\label{sec:review-flat-2vass}

The following is the main result of \cite{BEF+21} and is the base case of Theorem \ref{thm:geo-2vass-flat-lps}:

\begin{theorem}[{\cite[Thm.\ 3.1]{BEF+21}}]
    \label{thm:2vass-flat-lps}
    Let $G = (Q, T)$ be a $2$-VASS. For every pair of configurations $p(\Vec{u}), q(\Vec{v}) \in Q\times \mathbb{N}^2$ with $p(\Vec{u}) \xrightarrow{*} q(\Vec{v})$ there exists an LPS $\Lambda$ compatible to $G$ such that $p(\Vec{u}) \xrightarrow{\Lambda} q(\Vec{v})$ and $|\Lambda| \le |G|^{O(1)}$.
\end{theorem}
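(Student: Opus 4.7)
The plan is to prove the statement by combining a zigzag-free LPS construction in the ``high'' region of $\mathbb{N}^2$ with a dimension-reduction argument in the ``low'' region. Fix a threshold $D$ polynomial in $|G|$, and partition $\mathbb{N}^2$ into $\mathbb{H} := \{\Vec{u} : \Vec{u}(1), \Vec{u}(2) \ge D\}$ and $\mathbb{L} := \mathbb{N}^2 \setminus \mathbb{H}$, allowing a small buffer overlap so that a witness run can transition cleanly between them. Given a run $\pi : p(\Vec{u}) \xrightarrow{*} q(\Vec{v})$ I would carve $\pi$ into maximal segments that stay entirely in $\mathbb{H}$ or entirely in $\mathbb{L}$, separated by boundary cycles that pass through $\mathbb{H}\cap \mathbb{L}$. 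A pigeonhole argument on states visited at the interface, together with extraction of repeated cycles whose effect can be absorbed elsewhere, would bound the number of segments by $|G|^{O(1)}$. It then suffices to capture by short LPSes three kinds of pieces: boundary cycles in $\mathbb{H}$, path traversals in $\mathbb{H}$, and full runs inside $\mathbb{L}$.

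For the segments living in $\mathbb{H}$, I would replace $\mathbb{N}^2$-reachability by $\mathbb{Z}^2$-reachability and use the cycle structure: $\CycleSpace(G) \subseteq \mathbb{Q}^2$ is at most two-dimensional, and its intersection with each of the four orthants of $\mathbb{Q}^2$ is a finitely generated cone. For the specific target effect $\Vec{v} - \Vec{u}$, a Carathéodory-type argument in dimension $2$ guarantees that the effect can be written as an $\mathbb{N}$-combination of the Parikh images of $O(1)$ cycles whose effects all lie in a single orthant (or in two orthants lined up along $\Vec{v}-\Vec{u}$). Plugging these cycles into a short simple path between the involved states produces a zigzag-free positive LPS of length polynomial in $|G|$. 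Lemma \ref{lem:zigzag-free-LPS-high-reach} then lifts this $\mathbb{Z}^2$-witness to an $\mathbb{N}^2$-witness as long as the endpoints have both coordinates $\ge |\Lambda|\cdot \norm{T}$, which is forced by choosing $D$ just above this threshold.

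For a segment living in $\mathbb{L}$, at least one coordinate stays bounded by $D$ throughout. I would split the case by which coordinate is low, absorb its value into the finite control to obtain a $1$-VASS $G'$ of size polynomial in $|G|$, and invoke the classical short-LPS results for $1$-dimensional VASS (due to Haase and coauthors) to get a polynomial LPS in $G'$. Pulling this LPS back through the bookkeeping state yields a polynomial LPS in $G$ for the $\mathbb{L}$-segment. Concatenating the polynomially many short LPSes coming from the $\mathbb{H}$- and $\mathbb{L}$-segments and the boundary cycles produces a single LPS $\Lambda$ of length $|G|^{O(1)}$ that captures $\pi$.

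The main obstacle is the accounting that keeps both the number of $\mathbb{H}/\mathbb{L}$ transitions and the per-segment LPS length polynomial: a naive decomposition is too long because $\pi$ may oscillate between the two regions many times, and each oscillation may traverse a cycle whose two halves we do not want to split. Handling this requires showing that excess oscillations can be reorganized into a small number of pumpable cycles — essentially a careful two-dimensional cycle-extraction argument. A secondary difficulty is showing, in the $\mathbb{H}$-step, that the polynomial coefficient bounds on cycle multiplicities coming from integer programming are compatible with zigzag-freeness, so that one can simultaneously control the orthant, the length of the LPS, and the magnitudes of the cycle repetitions.
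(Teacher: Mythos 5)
First, a point of reference: the paper does not prove this theorem at all. It is imported verbatim as \cite[Thm.\ 3.1]{BEF+21} and used only as the base case of the induction establishing Theorem \ref{thm:geo-2vass-flat-lps}, so your text is really an attempted reconstruction of the main theorem of \cite{BEF+21}. Your architecture --- split $\mathbb{N}^2$ into a region $\mathbb{H}$ far from both axes and a region $\mathbb{L}$ near some axis, bound the number of alternations by a pigeonhole on interface states, handle $\mathbb{H}$ via $\mathbb{Z}^2$-reachability plus zigzag-free lifting, handle $\mathbb{L}$ by folding the bounded coordinate into the control --- is indeed the architecture of \cite{BEF+21} and of this paper's generalization in the appendix. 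But two of your steps have genuine gaps, and they are precisely where the work of \cite{BEF+21} lives.

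The first gap is in the high region. A Carath\'eodory argument expresses $\Vec{v}-\Vec{u}$ as an $\mathbb{N}$-combination of few cycle effects, but it cannot force those effects into a single orthant. Take cycles with effects $(5,-3)$ and $(-3,5)$ and target effect $(2,2)$: the only representation uses one copy of each, and neither effect lies in the orthant containing $(2,2)$. Your fallback ``two orthants lined up along $\Vec{v}-\Vec{u}$'' is exactly the case where Lemma \ref{lem:zigzag-free-LPS-high-reach} does not apply, since that lemma requires all cycle effects in one orthant. The repair in \cite{BEF+21} (quoted here as Lemma \ref{lem:zigzag-free-cycle-lps}) recombines cycles from opposite orthants into composite cycles whose effects do land in a common orthant while keeping the scheme short; this is the combinatorial core of the theorem, and you have relegated it to a ``secondary difficulty'' without an argument.

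The second gap is in the low region. You assert that in a maximal segment staying in $\mathbb{L}$ ``at least one coordinate stays bounded by $D$ throughout.'' That is false: $\mathbb{L}$ is the union of the two strips $\{\Vec{w}:\Vec{w}(1)\le D\}$ and $\{\Vec{w}:\Vec{w}(2)\le D\}$, and a single $\mathbb{L}$-segment may pass from one strip to the other (say from $(0,1000)$ to $(1000,0)$ through the corner), so no fixed coordinate is globally bounded and the reduction to a $1$-VASS does not apply to the segment as a whole. You must further split each $\mathbb{L}$-segment by which coordinate is currently low and bound the number of strip-to-strip alternations --- doable in dimension $2$ by a pigeonhole on $Q\times[0,D+\norm{T}]^2$ for a run without repeated configurations, but it is a step you need and do not have; it is the two-dimensional instance of the case analysis carried out in Section \ref{sec:flat-low-runs}. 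As written, your proposal is a correct outline of the known strategy rather than a proof.
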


We also need some other lemmas from \cite{BEF+21} to handle cyclic runs far away from axes.

\begin{lemma}[{\cite[Prop.\ 4.2]{BEF+21}}]
    \label{lem:flat-2d-high-cycle}
    For every $2$-VASS $G = (Q, T)$ there exists a natural number $D\le |G|^{O(1)}$ such that for any state $q\in Q$ and any $\Vec{u}, \Vec{v} \in [D, \infty)^2$ with $q(\Vec{u}) \xrightarrow{*}_{\mathbb{N}^2} q(\Vec{v})$ there exists an LPS $\Lambda$ compatible to $G$ such that $q(\Vec{u}) \xrightarrow{\Lambda}_{\mathbb{N}^2} q(\Vec{v})$ and $|\Lambda| \le |G|^{O(1)}$.
\end{lemma}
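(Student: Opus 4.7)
My plan is to construct a short zigzag-free LPS realizing the run in $\mathbb{Z}^2$-semantics, and then invoke Lemma~\ref{lem:zigzag-free-LPS-high-reach} to lift it to $\mathbb{N}^2$-semantics. Without loss of generality, I would first restrict $G$ to the strongly connected component $G' = (Q', T')$ containing $q$, since every cyclic run at $q$ lies entirely in $G'$ and $|G'| \le |G|$. Given the run $q(\Vec{u}) \xrightarrow{\pi} q(\Vec{v})$, its Parikh image $\phi \in \mathbb{N}^{T'}$ solves the homogeneous Kirchhoff system $K^0_{G', q, q}$ and satisfies $\Delta(\phi) = \Vec{v}-\Vec{u}$. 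By classical bounds on integer solutions to homogeneous Diophantine systems (e.g.\ Pottier's theorem on minimal solutions), I can replace $\phi$ by another Parikh image $\phi'$ with $\Delta(\phi') = \Vec{v}-\Vec{u}$ whose support decomposes, via an Eulerian cycle decomposition, into at most $|G|^{O(1)}$ simple cycles $\beta_1,\ldots,\beta_k$, each of length at most $|Q|$.

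Using strong connectivity, I would choose short simple paths $\alpha_0,\ldots,\alpha_k$ in $G'$ of length at most $|Q|$ so that $\alpha_0\beta_1\alpha_1\cdots\beta_k\alpha_k$ forms a valid path from $q$ to $q$, yielding the LPS $\Lambda = \alpha_0\beta_1^*\alpha_1\cdots\beta_k^*\alpha_k$ compatible with $G$. To make $\Lambda$ zigzag-free, I would exploit the two-dimensionality of $\mathbb{Q}^2$: there is a target orthant $Z_t$ determined by the sign pattern of $\Vec{v}-\Vec{u}$, and any simple cycle $\beta_j$ whose effect lies outside $Z_t$ can be traded off, via a linear rebalancing against another cycle in the opposite half-plane, for a combination of cycles whose effects lie in $Z_t$, all while preserving the total effect. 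Iterating this cancellation within the bounded number of cycles produces a decomposition in which every $\beta_i$ has effect in $Z_t$, and thus $\Lambda$ is zigzag-free in $Z_t$. Finally, setting $D := |\Lambda| \cdot \norm{T}$, which is $|G|^{O(1)}$ by the bounds above, Lemma~\ref{lem:zigzag-free-LPS-high-reach} converts the $\mathbb{Z}^2$-realization of $\Lambda$ into an $\mathbb{N}^2$-realization whenever $\Vec{u},\Vec{v} \ge D$.

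The main obstacle will be the orthant rebalancing: simple cycles with a negative component cannot in general be cancelled using only cycles whose effects lie in $Z_t$, unless the cycle space covers the required directions. I would handle this by splitting into cases on $\dim(\CycleSpace(G'))$. When $\dim(\CycleSpace(G')) = 2$, the cycle effects span $\mathbb{Q}^2$, every orthant is reachable as a conic combination of simple-cycle effects, and the rebalancing succeeds; here the polynomial bound on $k$ survives because the rebalancing can be phrased as a linear program whose basic feasible solutions are controlled by Cramer's rule. When $\dim(\CycleSpace(G')) \le 1$, the effect $\Vec{v}-\Vec{u}$ is forced into a one-dimensional subspace, and the construction reduces to a direct pumping argument analogous to the 1-VASS case. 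Gluing the two regimes together and bounding the LPS length uniformly by $|G|^{O(1)}$ is the most delicate accounting in the proof.
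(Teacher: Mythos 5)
The paper does not actually prove this statement: it is imported verbatim as \cite[Prop.~4.2]{BEF+21}, so there is no in-paper argument to compare against. Judged on its own merits, your outline has the right general shape --- it mirrors how the paper later derives Corollary~\ref{cor:zigzag-free-in-Z-cycle-lps} by combining an LPS with the zigzag-freeing lemma and the $\mathbb{Z}^2\!\to\!\mathbb{N}^2$ lifting Lemma~\ref{lem:zigzag-free-LPS-high-reach} --- but you have pushed essentially all of the difficulty into the ``rebalancing'' step, and that step is precisely the separately cited Lemma~\ref{lem:zigzag-free-cycle-lps} (\cite[Lem.~4.15]{BEF+21}), a substantial multi-page case analysis in its own right. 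Your justification for it does not hold up. First, the claim that when $\dim(\CycleSpace(G'))=2$ ``every orthant is reachable as a conic combination of simple-cycle effects'' is false (e.g.\ cycle effects $(1,0)$ and $(0,1)$ span $\mathbb{Q}^2$ but generate only one orthant), and in any case what you need is not conic reachability of an orthant but a re-decomposition of the \emph{specific} multiset of cycles dictated by the run. Second, ``a combination of cycles whose effects lie in $Z_t$'' must be realizable as an actual cycle of $G$, which requires the constituent simple cycles to be connectable through common states; a formal linear combination of effects is not enough, and after pairing off, the leftover iterations of the unpaired cycle still have effect outside $Z_t$ and must be separately bounded (this is what Lemma~\ref{lem:zigzag-cycle-lps-orthant} does, but only \emph{after} zigzag-freeness in some orthant is already established). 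Third, even your starting point --- extracting from the Parikh image a polynomial-size LPS witnessing $\mathbb{Z}^2$-reachability --- needs care: the Pottier decomposition can involve exponentially many distinct homogeneous generators, so bounding the number of starred cycles by $|G|^{O(1)}$ requires an additional Carath\'eodory-type selection argument that you do not supply.

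A further structural remark: if your argument went through, it would prove the strictly stronger statement that $\mathbb{Z}^2$-reachability between configurations in $[D,\infty)^2$ already implies $\mathbb{N}^2$-reachability via a short LPS, bypassing Prop.~4.2 of \cite{BEF+21} entirely. That stronger statement is in fact a consequence of the cited machinery, but deriving it from scratch requires proving Lemma~\ref{lem:zigzag-free-cycle-lps}, which is exactly the part you have asserted rather than proved. So the proposal is not wrong in direction, but it has a genuine gap at its central step.
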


We can further require that the above LPS $\Lambda$ is zigzag-free.

\begin{lemma}[{\cite[Lem.\ 4.15]{BEF+21}}]
    \label{lem:zigzag-free-cycle-lps}
    For every 2-VASS $G=(Q, T)$, every $q \in Q$, and every LPS $\sigma$ from $q$ to $q$ compatible to $G$, there exists a finite set $S$ of zigzag-free LPSes from $q$ to $q$ compatible to $G$ such that $\Delta(\sigma) \subseteq \Delta(S)$, and $|\Lambda| \le (|\sigma|+\norm{T})^{O(1)}$ for every $\Lambda \in S$.
\end{lemma}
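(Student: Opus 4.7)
The plan is to decompose the effect set of $\sigma$ orthant-by-orthant and re-realize each piece by a zigzag-free LPS. Write $\sigma = \alpha_0\beta_1^*\alpha_1\ldots\beta_k^*\alpha_k$, set $\Vec{a} := \Delta(\alpha_0\alpha_1\cdots\alpha_k)$ and $\Vec{b}_i := \Delta(\beta_i) \in \mathbb{Z}^2$, so that $\Delta(\sigma) = \Vec{a} + \{\sum_{i=1}^k e_i\Vec{b}_i : e_i \in \mathbb{N}\}$. I would fix an orthant $Z_t \subseteq \mathbb{Z}^2$ (of which there are only $O(1)$ in dimension $2$) and, for each $t$, construct finitely many LPSes that are zigzag-free in $Z_t$ and that together cover those effects of $\sigma$ whose cycle contribution $\sum e_i\Vec{b}_i$ lands in $Z_t$. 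Ranging over the four orthants then yields the full set $S$.

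The crucial two-dimensional ingredient is that $\Cone_{\mathbb{N}}(\Vec{b}_1,\dots,\Vec{b}_k) \cap Z_t$ is a planar integer cone, hence by an integer Carath\'eodory-type argument every $\Vec{w}$ in it can be written as $\Vec{w} = f_1\Vec{c}_1 + f_2\Vec{c}_2 + \Vec{r}$ with $f_1,f_2 \in \mathbb{N}$, where $\Vec{c}_1,\Vec{c}_2$ are composite vectors lying in $Z_t$, each a non-negative integer combination of the $\Vec{b}_i$'s with coefficients bounded polynomially in $k$ and $\norm{T}$, and the remainder $\Vec{r}$ lies in a polynomially-bounded box inside $Z_t$. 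Each composite $\Vec{c}_j$ can then be realized as the effect of a concrete cycle $\gamma_j$ based at $q$ in $G$, obtained by stringing together the prescribed copies of the $\beta_i$'s with short spine segments of $\alpha_0\cdots\alpha_k$ that route from $q$ to each $\beta_i$'s basepoint and back. Since $\sigma$ is itself a loop at $q$ these detours exist inside $G$ and each contributes length $O(|\sigma|)$, so $|\gamma_j| \le (|\sigma|+\norm{T})^{O(1)}$. For each residue $\Vec{r}$, realize $\Vec{a}+\Vec{r}$ as the effect of a prefix path $\pi$ from $q$ to $q$ by running through $\sigma$'s spine and inserting a bounded number of $\beta_i$'s prescribed by the Carath\'eodory decomposition of $\Vec{r}$.

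I would then assemble $S$ by collecting, for each orthant $Z_t$, each admissible pair of composite directions $\Vec{c}_1,\Vec{c}_2$, and each remainder $\Vec{r}$, the LPS $\Lambda = \pi\cdot \gamma_1^* \cdot \gamma_2^*$. By construction $\Delta(\gamma_1),\Delta(\gamma_2) \in Z_t$, so $\Lambda$ is zigzag-free in $Z_t$; every effect $\Vec{v} \in \Delta(\sigma)$ with $\Vec{v}-\Vec{a}\in Z_t$ is captured because the Carath\'eodory decomposition of $\Vec{v}-\Vec{a}$ directly chooses the iteration counts for $\gamma_1^*$ and $\gamma_2^*$; and the length bound $|\Lambda| \le |\pi|+|\gamma_1|+|\gamma_2| \le (|\sigma|+\norm{T})^{O(1)}$ follows from the bounds above. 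Finiteness of $S$ is immediate since only polynomially many composite/remainder choices arise per orthant.

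The main obstacle is the quantitative planar integer Carath\'eodory step: one must collapse the $k$ generators $\Vec{b}_1,\dots,\Vec{b}_k$ into just two composite directions inside each orthant, with polynomially-bounded coefficients, while simultaneously keeping the remainder confined to a polynomially-sized box. The planarity of the cone is what makes this simultaneous bound possible---only two extremal rays per orthant need to be identified, and the ``gap'' between an arbitrary $\Vec{w}\in Z_t$ and the sublattice spanned by the two composites has controlled size. This dimension-two phenomenon is precisely what fails in higher dimensions and is the reason the lemma is stated for $2$-VASSes.
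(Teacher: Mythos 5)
First, note that the paper does not prove this statement at all: Lemma \ref{lem:zigzag-free-cycle-lps} is imported verbatim from \cite{BEF+21} (their Lemma 4.15), so there is no in-paper proof to compare against, and your proposal has to be judged as a reconstruction of the external argument. Your high-level strategy --- split $\Delta(\sigma)=\Vec{a}+\{\sum_i e_i\Vec{b}_i : e_i\in\mathbb{N}\}$ according to the orthant of the cycle contribution, decompose each piece into finitely many linear sets with at most two periods lying in a common orthant and polynomially bounded descriptions, and realize each piece as a zigzag-free LPS --- is the right shape of argument in dimension two.

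However, there is a genuine gap in the realization step, and it is not the obstacle you flag. You claim each composite direction $\Vec{c}_j=\sum_i g_{j,i}\Vec{b}_i$ can be realized as the effect of a concrete cycle $\gamma_j$ based at $q$ by stringing together copies of the $\beta_i$ with spine segments that route from $q$ to each basepoint ``and back''. But VASS transitions are not reversible: the only way back from the basepoint of $\beta_i$ to $q$ inside the support of $\sigma$ is to complete the spine, so every such round trip contributes the full spine effect $\Vec{a}=\Delta(\alpha_0\cdots\alpha_k)$, and one gets $\Delta(\gamma_j)=m_j\Vec{a}+\Vec{c}_j$ for some $m_j\ge 1$ whenever $\Vec{c}_j$ mixes cycles based at distinct states. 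Consequently (i) the effect set of $\pi\gamma_1^*\gamma_2^*$ is $\Delta(\pi)+f_1(m_1\Vec{a}+\Vec{c}_1)+f_2(m_2\Vec{a}+\Vec{c}_2)$, which misses the target set $\Vec{a}+\Vec{r}+f_1\Vec{c}_1+f_2\Vec{c}_2$ by the uncontrolled term $(f_1m_1+f_2m_2)\Vec{a}$, and (ii) zigzag-freeness is a condition on $\Delta(\gamma_j)$, not on $\Vec{c}_j$, so placing $\Vec{c}_1,\Vec{c}_2$ in a common orthant no longer guarantees it. Repairing this forces the spine effect $\Vec{a}$, with multiplicity at least one per composite cycle, into the period decomposition itself, which changes the combinatorial problem you have to solve; this, together with the quantitative planar decomposition that you correctly identify as delicate but only assert, is where the actual work of \cite{BEF+21} lies.
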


If the effects of cycles in a zigzag-free LPS $\Lambda$ capturing $q(\Vec{u}) \xrightarrow{*} q(\Vec{v})$ does not belong to any orthant containing $\Vec{v} - \Vec{u}$, they can only be fired for a bounded number of times. We can remove those cycles by expanding them.

\begin{lemma}
    \label{lem:zigzag-cycle-lps-orthant}
    Let $G=(Q, T)$ be a $2$-VASS, let $q \in Q$ and $\Vec{u}, \Vec{v} \in \mathbb{N}^2$. Let $Z$ be any orthant containing $\Vec{v} - \Vec{u}$. For every zigzag-free linear path scheme $\sigma$ compatible to $G$ such that $q(\Vec{u}) \xrightarrow{\sigma} q(\Vec{v})$, there exists an LPS $\Lambda$ from $q$ to $q$ compatible to $G$ that is zigzag-free in $Z$, and such that $q(\Vec{u}) \xrightarrow{\Lambda} q(\Vec{v})$, and $|\Lambda| \leq(|\sigma|+\norm{T})^{O(1)}$.
\end{lemma}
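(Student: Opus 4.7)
The plan is to fix a concrete witness run, classify the cycles of $\sigma$ according to whether their effects already lie in the target orthant $Z$, and then unroll the offending cycles in place while showing that they contribute only polynomially to the total length.

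Write $\sigma = \alpha_0 \beta_1^* \alpha_1 \cdots \beta_k^* \alpha_k$ and fix exponents $e_1,\dots,e_k$ realising $q(\Vec{u}) \xrightarrow{\alpha_0 \beta_1^{e_1} \alpha_1 \cdots \beta_k^{e_k} \alpha_k} q(\Vec{v})$; set $c := \sum_i \Delta(\alpha_i)$, so that $c + \sum_j e_j \Delta(\beta_j) = \Vec{v}-\Vec{u}$. Say $\sigma$ is zigzag-free in $Z' = Z_{t'}$ and write $Z = Z_t$. I would call a cycle $\beta_j$ \emph{good} if $\Delta(\beta_j)\in Z$ and \emph{bad} otherwise. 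Since the effect of any bad cycle already lies in $Z'$ but escapes $Z$, there must be a coordinate $i$ with $t(i) \neq t'(i)$ at which $\Delta(\beta_j)(i)\cdot t(i) \leq -1$.

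The main step is to bound the total firing count of the bad cycles. For every coordinate $i$ with $t(i)=-t'(i)$, multiply the $i$-th component of the effect equation by $t(i)$. The right-hand side $(\Vec{v}-\Vec{u})(i)\cdot t(i)$ is nonnegative because $\Vec{v}-\Vec{u}\in Z$, while every summand $e_j\Delta(\beta_j)(i)\cdot t(i)$ is nonpositive by zigzag-freeness in $Z'$. Rearranging yields
\[
    \sum_{j} e_j\cdot \bigl(-\Delta(\beta_j)(i)\cdot t(i)\bigr) \;\leq\; c(i)\cdot t(i) \;\leq\; |\sigma|\cdot\norm{T}.
\]
Since every bad cycle contributes at least $1$ to this sum for at least one such coordinate, and in dimension two there are at most two differing coordinates, I obtain $\sum_{\text{bad }j} e_j \leq 2\,|\sigma|\,\norm{T}$.

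Finally, I would build $\Lambda$ from $\sigma$ by replacing each bad $\beta_j^*$ with its explicit unrolling $\beta_j^{e_j}$, absorbing it into the adjacent $\alpha_{j-1}$ and $\alpha_j$, while leaving every good $\beta_j^*$ intact. The result is again an LPS $\alpha'_0\beta_{j_1}^*\alpha'_1\cdots\beta_{j_m}^*\alpha'_m$ in which every remaining starred cycle has effect in $Z$, so $\Lambda$ is zigzag-free in $Z$; choosing exponents $e_{j_l}$ for the surviving cycles witnesses $q(\Vec{u})\xrightarrow{\Lambda}q(\Vec{v})$ along the same underlying run. The length satisfies $|\Lambda| \leq |\sigma| + \bigl(\sum_{\text{bad }j} e_j\bigr)\cdot \max_j |\beta_j| \leq |\sigma| + 2\,|\sigma|^2\,\norm{T}$, which is $(|\sigma|+\norm{T})^{O(1)}$. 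The main obstacle is the sign-projection step: one must exploit simultaneously that $\sigma$ is zigzag-free in $Z'$ (so all cycles point consistently) and that the net effect lies in $Z$, and translate this into a coordinate-wise cancellation bound on the bad exponents; once that bound is established, the construction and length estimate are routine.
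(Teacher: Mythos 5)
Your proposal is correct and follows essentially the same route as the paper: use the effect equation together with zigzag-freeness in the original orthant to show that any cycle whose effect leaves $Z$ can only be fired $O(|\sigma|\cdot\norm{T})$ times, then unroll those cycles in place. The only cosmetic difference is that you bound the sum of the bad exponents by aggregating over the (at most two) coordinates where the two sign vectors differ, whereas the paper bounds each bad exponent individually via its own witnessing coordinate; both give the same polynomial length bound.
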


\begin{proof}
    Let $\sigma = \alpha_0\beta_1^*\alpha_1\ldots \beta_k^*\alpha_k$ be a zigzag-free LPS such that $q(\Vec{u}) \xrightarrow{\sigma} q(\Vec{v})$. There exists numbers $e_1, \ldots, e_k \in \mathbb{N}$ such that 
    $ q(\Vec{u}) \xrightarrow{\alpha_0\beta_1^{e_1}\alpha_1\ldots \beta_k^{e_k}\alpha_k} q(\Vec{v}) $.
    In particular, 
    \begin{equation}
        \label{eq:effect-zigzag-free-lps-cycle}
        \Vec{v} - \Vec{u} = \Delta(\alpha_0 \ldots \alpha_k) + \sum_{i = 1}^k e_i \cdot \Delta(\beta_i).
    \end{equation}
    Let $Z$ be an orthant containing $\Vec{v} - \Vec{u}$. Assume w.l.o.g.\ that $Z = \mathbb{Q}^2$ is the first orthant. We claim that for all $j$ such that $\Delta(\beta_j) \notin Z$, $e_j \le |\sigma|\cdot \norm{T}$. Indeed, since $\Delta(\beta_j)\notin Z$ and $\Vec{v} - \Vec{u} \in Z$, there must be an index $\ell\in \{1, 2\}$ such that $(\Vec{v} - \Vec{u})(\ell) \cdot \Delta(\beta_j)(\ell) \le 0$ and $\Delta(\beta_j)(\ell)\ne 0$. We assume w.l.o.g.\ that $(\Vec{v} - \Vec{u})(\ell) \ge 0$ and $\Delta(\beta_j)(\ell) < 0$. Then $\Delta(\beta_i)(\ell)\le 0$ for all $i\in [k]$ by the zigzag-free property of $\sigma$. In particular, $\Delta(\beta_j)(\ell) \le -1$. From the $\ell$-th component of equation (\ref{eq:effect-zigzag-free-lps-cycle}),
    \begin{equation}
        0\le (\Vec{v} - \Vec{u})(\ell) = \Delta(\alpha_0 \ldots \alpha_k)(\ell) + \sum_{i = 1}^k e_i \cdot \Delta(\beta_i)(\ell) \le |\sigma|\cdot \norm{T} - e_j
    \end{equation}
    we have $e_j \le |\sigma|\cdot \norm{T}$ as we claimed. The proof is completed by taking $\Lambda$ to be the LPS obtained from $\sigma$ by replacing all cycles $\beta_j^*$ where $\Delta(\beta_j)\notin Z$ with the path $\beta_j^{e_j}$.
\end{proof}

We combine Lemmas \ref{lem:flat-2d-high-cycle}, \ref{lem:zigzag-free-cycle-lps}, \ref{lem:zigzag-cycle-lps-orthant} to get the following one:

\begin{corollary}
    \label{cor:zigzag-free-in-Z-cycle-lps}
    For every $2$-VASS $G = (Q, T)$ there exists a natural number $D \le |G|^{O(1)}$ such that, for any state $q\in Q$, any $\Vec{u}, \Vec{v} \in [D, \infty)^2$ with $q(\Vec{u}) \xrightarrow{*}_{\mathbb{N}^2} q(\Vec{v})$ and for any orthant $Z$ containing $\Vec{v} - \Vec{u}$, there exists an LPS $\Lambda$ compatible to $G$ that is zigzag-free in $Z$ such that $q(\Vec{u}) \xrightarrow{\Lambda}_{\mathbb{N}^2} q(\Vec{v})$ and $|\Lambda| \le |G|^{O(1)}$.
\end{corollary}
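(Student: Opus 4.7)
The statement is a direct synthesis of the three preceding lemmas, together with Lemma~\ref{lem:zigzag-free-LPS-high-reach}. My plan is to chain them as follows. Let $D_0$ be the threshold from Lemma~\ref{lem:flat-2d-high-cycle}, and take $D \ge D_0$ to be specified below (polynomial in $|G|$). Fix $q\in Q$, $\Vec{u},\Vec{v}\in[D,\infty)^2$ with $q(\Vec{u})\xrightarrow{*}_{\mathbb{N}^2}q(\Vec{v})$, and an orthant $Z$ containing $\Vec{v}-\Vec{u}$. Applying Lemma~\ref{lem:flat-2d-high-cycle} first yields an LPS $\sigma$ compatible with $G$ with $q(\Vec{u})\xrightarrow{\sigma}_{\mathbb{N}^2}q(\Vec{v})$ and $|\sigma|\le |G|^{O(1)}$. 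This LPS $\sigma$ need not be zigzag-free, so the next step is to enforce this property.

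To make $\sigma$ zigzag-free, I invoke Lemma~\ref{lem:zigzag-free-cycle-lps} on $\sigma$ to produce a finite set $S$ of zigzag-free LPSes from $q$ to $q$ with $\Delta(\sigma)\subseteq\Delta(S)$ and $|\Lambda'|\le(|\sigma|+\norm{T})^{O(1)}\le|G|^{O(1)}$ for every $\Lambda'\in S$. Since $q(\Vec{u})\xrightarrow{\sigma}_{\mathbb{N}^2}q(\Vec{v})$ gives $\Vec{v}-\Vec{u}\in\Delta(\sigma)$, there exists $\sigma'\in S$ with $\Vec{v}-\Vec{u}\in\Delta(\sigma')$. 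In particular, $\sigma'$ admits some path $\pi$ with $\Delta(\pi)=\Vec{v}-\Vec{u}$, and since $\sigma'$ is cyclic at $q$, this immediately gives $q(\Vec{u})\xrightarrow{\sigma'}_{\mathbb{Z}^2}q(\Vec{v})$. This is where Lemma~\ref{lem:zigzag-free-LPS-high-reach} enters: provided the threshold $D$ is chosen large enough so that $\min\{\Vec{u}(i),\Vec{v}(i)\}\ge|\sigma'|\cdot\norm{T}$ for every possible $\sigma'\in S$, the lemma lifts the $\mathbb{Z}^2$-run to an $\mathbb{N}^2$-run, yielding $q(\Vec{u})\xrightarrow{\sigma'}_{\mathbb{N}^2}q(\Vec{v})$.

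Finally, $\sigma'$ is zigzag-free (in some orthant, not necessarily $Z$), so I apply Lemma~\ref{lem:zigzag-cycle-lps-orthant} to $\sigma'$ and the target orthant $Z$. This produces the desired LPS $\Lambda$ compatible with $G$ that is zigzag-free in $Z$, with $q(\Vec{u})\xrightarrow{\Lambda}_{\mathbb{N}^2}q(\Vec{v})$ and $|\Lambda|\le(|\sigma'|+\norm{T})^{O(1)}\le|G|^{O(1)}$. Composing the polynomial bounds at each step, the final size is still $|G|^{O(1)}$.

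\textbf{Choice of $D$ and main concern.} The only nontrivial bookkeeping is that $D$ must be chosen \emph{before} seeing $\Vec{u},\Vec{v}$, while the bound $|\sigma'|\cdot\norm{T}$ demanded by Lemma~\ref{lem:zigzag-free-LPS-high-reach} depends on the particular run. However, Lemma~\ref{lem:flat-2d-high-cycle} bounds $|\sigma|$ uniformly in terms of $|G|$, and Lemma~\ref{lem:zigzag-free-cycle-lps} bounds $|\sigma'|$ polynomially in $|\sigma|+\norm{T}$, so the worst case over all $\sigma'$ arising this way is still $|G|^{O(1)}$. Taking $D:=\max\{D_0,\,c\cdot|G|^{c}\}$ for a sufficiently large constant $c$ (depending only on the exponents hidden in the above $O(1)$'s) is therefore enough. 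This is the only delicate point; everything else is a routine composition of the cited lemmas.
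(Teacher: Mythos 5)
Your proposal is correct and follows essentially the same route as the paper's own proof: apply Lemma~\ref{lem:flat-2d-high-cycle}, then Lemma~\ref{lem:zigzag-free-cycle-lps} to extract a zigzag-free $\sigma'$ realizing the effect $\Vec{v}-\Vec{u}$ over $\mathbb{Z}^2$, lift back to $\mathbb{N}^2$ via Lemma~\ref{lem:zigzag-free-LPS-high-reach}, and finish with Lemma~\ref{lem:zigzag-cycle-lps-orthant}. Your choice of $D$ as the maximum of $D_0$ and a uniform polynomial bound on $|\sigma'|\cdot\norm{T}$ matches the paper's $D=\max\{D_0,B\cdot\norm{T}\}$, and you correctly identify and resolve the one delicate point, namely that $D$ must be fixed before the run is seen.
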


\begin{proof}
    Let $D_0$ be the parameter $D$ given by Lemma \ref{lem:flat-2d-high-cycle} and $B \le |G|^{O(1)}$ be the bound on $|\Lambda|$ composed from Lemmas \ref{lem:flat-2d-high-cycle} and \ref{lem:zigzag-free-cycle-lps}. 
    We put $D = \max\{D_0, B \cdot \norm{T}\} \le |G|^{O(1)}$ here.

    Fix $q\in Q$ and $\Vec{u}, \Vec{v} \in [D, \infty)^2$ with $q(\Vec{u}) \xrightarrow{*}_{\mathbb{N}^2} q(\Vec{v})$. 
    By Lemma \ref{lem:flat-2d-high-cycle} there is an LPS $\sigma$ compatible to $G$ such that $q(\Vec{u}) \xrightarrow{\sigma}_{\mathbb{N}^2} q(\Vec{v})$. Apply Lemma \ref{lem:zigzag-free-cycle-lps} to $\sigma$, we get a zigzag-free LPS $\sigma'$ compatible to $G$ with $q(\Vec{u}) \xrightarrow{\sigma'}_{\mathbb{Z}^2} q(\Vec{v})$ and $|\sigma'| \le B$.
    Lemma \ref{lem:zigzag-free-LPS-high-reach} and our choice of $D$ then show that indeed we have $q(\Vec{u}) \xrightarrow{\sigma'}_{\mathbb{N}^2} q(\Vec{v})$.
    Let $Z$ be an orthant containing $\Vec{v} - \Vec{u}$.
    We finally apply Lemma \ref{lem:zigzag-cycle-lps-orthant} to $\sigma'$ to obtain an LPS $\Lambda$ compatible to $G$ that is zigzag-free in $Z$ such that $q(\Vec{u}) \xrightarrow{\Lambda}_{\mathbb{N}^2} q(\Vec{v})$ and $|\Lambda| \le (B + \norm{T})^{O(1)} \le |G|^{O(1)}$.
\end{proof}

\subsection{Sign Reflecting Projection}
\label{sec:srp}

In this section we develop a technical tool called the ``sign reflecting projection'' which will be used to project a geometrically $2$-dimensional $d$-VASS to a $2$-VASS so that cyclic runs with effect belonging to a certain orthant can be safely projected back. We first introduce the notion of projection of vectors.

\paragraph*{Projection.} 
Let $I\subseteq [d]$ be a subset of indices. Given a vector $\Vec{u} \in \mathbb{X}^d$ where $\mathbb{X}$ can be one of $\mathbb{Q}, \mathbb{N}, \mathbb{Z}$, we define the \emph{projection of $\Vec{u}$ onto indices in $I$} as a function $\Vec{u}|_I \in \mathbb{X}^{I}$ given by 
\begin{equation}
    (\Vec{u}|_I)(i) = \Vec{u}(i) \text{ for all } i\in I.
\end{equation}
We tacitly identify the function $\Vec{u}|_I \in \mathbb{X}^I$ as a vector in $\mathbb{X}^{|I|}$.
Let $i\in[d]$, we abbreviate $\Vec{u}^{-i}$ for the vector $\Vec{u}|_{[d]\setminus \{i\}}$. The definition of projection is naturally extended to set of vectors $V$:%
\begin{equation}
    V|_I := \bigl\{ 
        \Vec{v}|_I : \Vec{v}\in V
    \bigr\}.
\end{equation}
It should be clear that the projection of a vector space onto indices in $I$ is again a vector space in $\mathbb{Q}^{|I|}$, and the projection of an orthant $Z_t$ onto $I$ is an orthant in $\mathbb{Q}^{|I|}$ defined by $t|_I$.

\paragraph*{Sign reflecting projection.}
\begin{definition}
    Let $P \subseteq \mathbb{Q}^d$ be a vector space and $Z$ be an orthant in $\mathbb{Q}^d$. A set of indices $I\subseteq [d]$ is called a \emph{sign-reflecting projection for $P$ with respect to $Z$} if for any $\Vec{v} \in P$, $\Vec{v}|_I \in Z|_I$ implies $\Vec{v} \in Z$.
\end{definition}

Sign-reflecting projection helps us project the vectors in a vector space to some of its components so that the pre-image of a certain orthant still belongs to one orthant. Moreover, we can show that such a projection is one-to-one when restricted to that orthant. 

\begin{lemma}
    \label{lem:spp-inject}
    Let $P \subseteq \mathbb{Q}^d$ be a vector space and $Z$ be an orthant in $\mathbb{Q}^d$. Let $I\subseteq [d]$ be a sign-reflecting projection for $P$ w.r.t.\ $Z$. Then every vector $\Vec{v} \in P\cap Z$ is uniquely determined by $\Vec{v}|_I$. In other words, for any $\Vec{v}, \Vec{v}' \in P\cap Z$, $\Vec{v}|_I = \Vec{v}'|_I$ implies $\Vec{v} = \Vec{v}'$.
\end{lemma}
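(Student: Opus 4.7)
The plan is to exploit that both a vector and its negative can only simultaneously lie in an orthant if they are zero. Concretely, suppose $\Vec{v}, \Vec{v}' \in P \cap Z$ satisfy $\Vec{v}|_I = \Vec{v}'|_I$. Set $\Vec{w} := \Vec{v} - \Vec{v}'$. Since $P$ is a vector space, $\Vec{w} \in P$ and likewise $-\Vec{w} \in P$. Moreover $\Vec{w}|_I = \Vec{0}$, and the zero vector lies in every orthant (in particular $\Vec{0} \in Z|_I$), so both $\Vec{w}|_I$ and $(-\Vec{w})|_I$ belong to $Z|_I$.

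Now I invoke the sign-reflecting hypothesis on $\Vec{w}$ and on $-\Vec{w}$: each lies in $P$ and each projects into $Z|_I$, so both $\Vec{w}$ and $-\Vec{w}$ lie in $Z$. Writing $Z = Z_t$ for some sign tuple $t \in \{+1,-1\}^d$, membership in $Z$ means $\Vec{w}(i)\cdot t(i) \ge 0$ for every $i \in [d]$, while $-\Vec{w} \in Z$ gives $\Vec{w}(i)\cdot t(i) \le 0$ for every $i$. Together, $\Vec{w}(i)\cdot t(i) = 0$, and since $t(i) \ne 0$, we conclude $\Vec{w}(i) = 0$ for all $i$, i.e.\ $\Vec{v} = \Vec{v}'$.

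There is no serious obstacle here; the argument is a one-liner once one realizes that the sign-reflecting property can be applied to both $\Vec{w}$ and $-\Vec{w}$, which is the one small but essential observation. The only place where care is needed is to remember that sign-reflection is stated for any vector in $P$ (not just those in $P \cap Z$), so applying it to the possibly ``off-orthant'' vector $\Vec{w}$ is legitimate precisely because the zero projection $\Vec{w}|_I = \Vec{0}$ trivially belongs to every orthant image $Z|_I$.
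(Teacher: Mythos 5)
Your proof is correct and follows essentially the same route as the paper's: both arguments apply the sign-reflecting property to the difference $\Vec{w} = \Vec{v} - \Vec{v}'$ and to $-\Vec{w}$, using $\Vec{w}|_I = \Vec{0} \in Z|_I$ to place both in $Z$ and conclude $\Vec{w} = \Vec{0}$ (the paper phrases this as a contradiction, you argue directly, but the content is identical). No issues.
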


\begin{proof}
    Suppose otherwise, there are vectors $\Vec{v} \ne \Vec{v}' \in P\cap Z$ with $\Vec{v}|_I = \Vec{v}'|_I$. Then there must be an index $j \in [d]$ with $\Vec{v}(j) \ne \Vec{v}'(j)$. Note that $\Vec{v} - \Vec{v}' \in P$ and $\Vec{v}' - \Vec{v} \in P$ cannot belong to the same orthant, as the signs of their $j$th component differ. On the other hand, $(\Vec{v} - \Vec{v}')|_I = (\Vec{v}' - \Vec{v})|_I = \Vec{0} \in Z|_I$. Since $I$ is a sign-reflecting projection, we must have $\Vec{v} - \Vec{v}' \in Z$ and $\Vec{v}' - \Vec{v} \in Z$, a contradiction.
\end{proof}

We mainly care about finding sign-reflecting projections for a plane $P$, that is, a $2$-dimensional subspace of $\mathbb{Q}^d$. The main technical theorem is stated as follows.

\begin{restatable}{theorem}{ThmTwoDProj}
    \label{thm:2d-proj}
    For any dimension $d \ge 2$, let $P \subseteq \mathbb{Q}^d$ be a plane (i.e.\ a $2$-dimensional subspace), and $Z$ be an orthant in $\mathbb{Q}^d$ such that $P\cap Z$ contains two linearly independent vectors. Then there is a sign-reflecting projection $I$ for $P$ w.r.t.\ $Z$ such that $|I| = 2$.
\end{restatable}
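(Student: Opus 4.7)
The plan is to reduce the problem to plane geometry inside $P$ and do a short case analysis. By flipping coordinate signs one may assume without loss of generality that $Z = \mathbb{Q}_{\ge 0}^d$. Writing $f_k : P \to \mathbb{Q}$ for the $k$-th coordinate functional restricted to $P$, and $H_k := \{\Vec{v} \in P : f_k(\Vec{v}) \ge 0\}$, each $H_k$ is either all of $P$ (when $f_k|_P \equiv 0$) or a closed half-plane of $P$ bounded by the line $\ker(f_k|_P)$ through the origin. The cone $C := P \cap Z = \bigcap_{k=1}^d H_k$ is a closed convex cone in the plane $P$ which, by hypothesis, has nonempty two-dimensional interior. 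Finding a sign-reflecting $I = \{i,j\}$ amounts to exhibiting two distinct indices with $H_i \cap H_j \subseteq C$, since the condition $\Vec{v}|_I \in Z|_I$ unwinds precisely to $\Vec{v} \in H_i \cap H_j$ for $\Vec{v} \in P$.

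I would then split on the shape of $C$ inside $P$, for which the only options for a closed convex cone with 2D interior are: the whole plane, a closed half-plane, or a proper wedge bounded by two non-opposite rays from the origin. If $C = P$, the claim is trivial: any pair of distinct indices works, because $\Vec{v} \in P \subseteq Z$ holds unconditionally. If $C$ is a closed half-plane, then $C = H_i$ for some $i$, and any $j \ne i$ satisfies $H_i \cap H_j \subseteq H_i = C$. The substantive case is when $C$ is a proper wedge, bounded by two distinct, non-opposite rays $r_1, r_2$ emanating from the origin. Since $\partial C \subseteq \bigcup_k \partial H_k$ is contained in a finite union of lines through the origin, each connected extreme ray $r_\ell$ must lie entirely on a single boundary line $\partial H_{k_\ell}$. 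I would then pick $i$ with $r_1 \subseteq \partial H_i$ and $j$ with $r_2 \subseteq \partial H_j$; because $r_1, r_2$ are neither equal nor opposite, $\partial H_i \ne \partial H_j$, so $H_i \cap H_j$ is itself a wedge bounded by one half-ray from each of these two lines.

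The final step, which I expect to be the main technical point, is to verify $H_i \cap H_j = C$ and not one of the three other wedges cut out by the two lines $\partial H_i, \partial H_j$. The bounding ray of $H_i \cap H_j$ along $\partial H_i$ is by definition the half of $\partial H_i$ lying inside $H_j$; this must equal $r_1$, because $r_1 \subseteq \partial H_i$ and $r_1 \subseteq C \subseteq H_j$. Symmetrically, the other bounding ray equals $r_2$. Hence $H_i \cap H_j$ is exactly the wedge bounded by $r_1$ and $r_2$, which is $C$. Everything else in the proof is a direct unfolding of definitions; only this side-of-line identification is mildly delicate, and it is pinned down by the observation that each extreme ray of $C$ already sits in the opposite half-plane, a fact built into the way $i$ and $j$ are selected.
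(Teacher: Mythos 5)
Your proof is correct, but it takes a genuinely different route from the paper's. The paper invokes the Farkas--Minkowski--Weyl theorem to write $P\cap Z=\Cone(\{\Vec{u},\Vec{v}\})$ for two explicit generators, shows that $\Supp(\Vec{u})$ and $\Supp(\Vec{v})$ are incomparable (otherwise $\lambda\Vec{u}-\Vec{v}$ would lie in the cone for large $\lambda$, contradicting linear independence of a minimal generating set), takes $i\in\Supp(\Vec{u})\setminus\Supp(\Vec{v})$ and $j\in\Supp(\Vec{v})\setminus\Supp(\Vec{u})$, and finishes with a two-line computation: $\Vec{w}=\alpha\Vec{u}+\beta\Vec{v}$ with $\Vec{w}(i)=\alpha\Vec{u}(i)\ge0$ and $\Vec{w}(j)=\beta\Vec{v}(j)\ge0$ forces $\alpha,\beta\ge0$. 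You replace all of this with intrinsic planar convex geometry: a case split on the shape of $C=\bigcap_k H_k$ and an identification of its extreme rays with boundary lines $\partial H_k$. Both arguments ultimately rest on the same fact --- each extreme ray of $P\cap Z$ lies in the trace on $P$ of a coordinate hyperplane whose functional is strictly positive on the other extreme ray --- but you extract it from the covering $\partial C\subseteq\bigcup_k\partial H_k$ rather than from the supports of explicit generators. What your route buys is the elimination of the FMW black box and of the minimal-generator lemma; what it costs is some care with facts that are obvious over $\mathbb{R}$ but need restating over $\mathbb{Q}$ or for polyhedral cones: the appeal to connectedness of a ray fails in $\mathbb{Q}^d$ (replace it by the remark that a line through the origin containing one nonzero point of a ray contains the whole ray); the trichotomy plane/half-plane/proper-wedge should be justified for the rational polyhedral cone $\bigcap_k H_k$ rather than for an arbitrary closed convex cone; the claim that the hypothesis gives $C$ nonempty two-dimensional interior needs the one-line observation that a convex cone containing linearly independent $\Vec{u},\Vec{v}$ contains $\Cone(\{\Vec{u},\Vec{v}\})$; and at the end one should note that $H_i\cap H_j$ and $C$, both being convex, are each the convex one of the two wedges bounded by $r_1\cup r_2$, hence equal. (Incidentally, your first two cases are vacuous: the functionals $f_k|_P$ span the full dual of $P$, so $C$ can be neither $P$ nor a half-plane; but handling them trivially as you do is harmless.) None of these points affects correctness.
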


The proof of \ref{thm:2d-proj} makes use of the Farkas-Minkowski-Weyl's Theorem. We first introduce some additional definitions about \emph{cones}.

A \emph{polyhedral cone} is a set of the form $\{\Vec{v} \in \mathbb{Q}^d : A\Vec{v}\le 0\}$ where $A$ is a rational matrix. Let $V\subseteq \mathbb{Q}^d$ be a set of vectors, the \emph{cone generated by $V$} is the following set:
\begin{equation}
    \Cone(V) := \bigcup_{k\in\mathbb{N}}\{\lambda_1 x_1 + \cdots + \lambda_k x_k : \lambda_1, \ldots, \lambda_k\in \mathbb{Q}_{\ge0}, x_1, \ldots, x_k\in V\}.
\end{equation}
A \emph{finitely generated cone} is a cone generated by a finite set. The Farkas-Minkowski-Weyl's Theorem states that:

\begin{theorem}[Farkas-Minkowski-Weyl]
    \label{thm:FMW}
    A set $C\subseteq \mathbb{Q}^d$ is a polyhedral cone if and only if it is a finitely generated cone.
\end{theorem}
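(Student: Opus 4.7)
The plan is to prove both directions using Fourier--Motzkin elimination as the workhorse, bridging them via polar duality.

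For the direction finitely generated $\Rightarrow$ polyhedral, given $C = \Cone(\{v_1, \ldots, v_k\})$, I would realize $C$ as the coordinate projection onto the first $d$ entries of the trivially polyhedral cone
\[
\widetilde{C} := \bigl\{ (\Vec{v}, \lambda_1, \ldots, \lambda_k) \in \mathbb{Q}^{d + k} : \Vec{v} - \textstyle\sum_{i = 1}^{k} \lambda_i v_i = \Vec{0},\ \lambda_i \ge 0 \text{ for all } i \bigr\}.
\]
The core technical lemma is then that the coordinate projection of a polyhedral cone is again polyhedral: given a defining system $A\Vec{v} \le \Vec{0}$, one eliminates the last variable by partitioning the rows according to the sign of their last coefficient and forming all nonnegative combinations of opposite-signed pairs that cancel that coordinate. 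Iterating this $k$ times on $\widetilde{C}$ eliminates the $\lambda_i$ variables and yields a polyhedral description of $C$.

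For the converse, I would use the polar cone $C^\circ := \{w \in \mathbb{Q}^d : \langle w, \Vec{v}\rangle \le 0 \text{ for all } \Vec{v} \in C\}$ as a bridge, establishing two facts. First, Farkas' Lemma: if $C = \{\Vec{v} : A\Vec{v} \le \Vec{0}\}$ with rows $a_1, \ldots, a_m$, then $C^\circ = \Cone(\{a_1, \ldots, a_m\})$. The inclusion $\supseteq$ is immediate; the reverse is what Fourier--Motzkin delivers, since applying it to the auxiliary system $\{\Vec{v} : A\Vec{v} \le \Vec{0},\ \langle w, \Vec{v}\rangle \ge 1\}$ produces an infeasibility certificate that is an explicit nonnegative combination of rows of $A$ summing to $w$. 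Second, bi-duality $(C^\circ)^\circ = C$ for closed convex cones, via the separating hyperplane theorem. Combining these: if $C$ is polyhedral, then $C^\circ$ is finitely generated by Farkas, hence polyhedral by the forward direction, say $C^\circ = \{w : B\Vec{w}\le \Vec{0}\}$ with rows $b_1, \ldots, b_n$. Applying Farkas once more gives $(C^\circ)^\circ = \Cone(\{b_1, \ldots, b_n\})$, which by bi-duality equals $C$, exhibiting $C$ as finitely generated.

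The main obstacle will be the careful bookkeeping inside Fourier--Motzkin elimination: one has to track that every inequality produced in the elimination remains an explicit nonnegative combination of the original rows of $A$, so that the Farkas certificate can be extracted. A secondary subtlety is bi-duality in the rational setting, which I would handle by first verifying that rational polyhedral cones are closed (as finite intersections of closed rational halfspaces) and that the separating hyperplane can be chosen with rational coefficients, exploiting the rationality of the Fourier--Motzkin combinations.
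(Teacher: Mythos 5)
The paper does not actually prove this statement: it is quoted as the classical Farkas--Minkowski--Weyl theorem and used as a black box (its only role is to feed Lemma~\ref{lem:2-vecs}), so there is no in-paper argument to compare against. Your proposal is a correct, standard proof of the classical result. The forward direction (finitely generated implies polyhedral) via lifting to $\widetilde{C}$ and projecting with Fourier--Motzkin is sound, and the homogeneity and rationality of the system are preserved at every elimination step, so working over $\mathbb{Q}$ causes no trouble. The converse via Farkas' lemma plus the forward direction applied to $C^\circ$ is the usual duality argument and also works. One remark: the bi-duality step $(C^\circ)^\circ = C$ does not require the separating hyperplane theorem at all in your setting, so the ``secondary subtlety'' you flag is a non-issue. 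Indeed, $C \subseteq (C^\circ)^\circ$ is immediate, and for the reverse inclusion note that each row $a_i$ of the defining matrix $A$ of $C$ lies in $C^\circ$; hence any $x \in (C^\circ)^\circ$ satisfies $\langle a_i, x\rangle \le 0$ for all $i$, i.e.\ $x \in C$. This purely algebraic argument sidesteps both closedness and the rationality of separating hyperplanes, and I would recommend it over the topological route you sketch.
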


Let $P$ be a plane and $Z$ be an orthant in $\mathbb{Q}^d$, one easily verifies that $P\cap Z$ is a polyhedral cone, where the inequality $A\Vec{v} \le 0$ asserts that (i) $\Vec{v}$ is orthogonal to all normal vectors of $P$, and (ii) $\Vec{v}$ belongs to the orthant $Z$.
Theorem \ref{thm:FMW} then shows that the intersection of a plane and an orthant is a finitely generated cone. We need a stronger fact that it is generated by at most 2 vectors.

\begin{lemma}
    \label{lem:2-vecs}
    Let $P\subseteq\mathbb{Q}^d$ be a plane and $Z$ be an orthant in $\mathbb{Q}^d$. Then there exists $V\subseteq \mathbb{Q}^d$ such that $P\cap Z = \Cone(V)$ and $|V| \le 2$.
\end{lemma}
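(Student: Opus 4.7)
The plan is to show that $P \cap Z$ is a \emph{pointed} polyhedral cone living inside the two-dimensional space $P$, and then to conclude via the structural fact that a pointed polyhedral cone in 2-space has at most two extremal rays, hence is generated by at most two vectors.

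First, I would observe that $P \cap Z$ is a polyhedral cone. Writing $P$ as the kernel of a rational matrix and $Z = Z_t$ for some $t \in \{+1,-1\}^d$, the set $P \cap Z$ is defined by finitely many linear (in)equalities, so by the Farkas--Minkowski--Weyl Theorem (Theorem \ref{thm:FMW}) it is also finitely generated: pick any finite $V_0 \subseteq \mathbb{Q}^d$ with $P \cap Z = \Cone(V_0)$. Next, I would establish pointedness, meaning that $P \cap Z$ contains no line through the origin. Suppose a line $L$ through $\Vec{0}$ were contained in $P \cap Z$. Since $L$ is a subspace, both $L$ and $-L = L$ would lie in $Z$, so $L \subseteq Z \cap (-Z)$. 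But $Z_t \cap (-Z_t) = \{\Vec{v} \in \mathbb{Q}^d : t(i)\Vec{v}(i) = 0 \text{ for all } i \in [d]\} = \{\Vec{0}\}$, using that every $t(i)$ is nonzero; contradiction. Thus $P \cap Z$ is a pointed convex cone inside the plane $P$.

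Finally, by fixing a basis of $P$ I identify $P$ with $\mathbb{Q}^2$, and the problem reduces to: every pointed, finitely generated convex cone $C \subseteq \mathbb{Q}^2$ is $\Cone(V)$ for some $V$ with $|V| \le 2$. The three cases are $C = \{\Vec{0}\}$ (take $V = \emptyset$), $C$ a single ray (take $V = \{v\}$ for any generator), or $C$ a proper wedge (take $V = \{v_1, v_2\}$ generating the two extremal rays). Concretely, after discarding $\Vec{0}$ from $V_0$, pointedness supplies a linear functional $\ell$ on $P$ that is strictly positive on $C \setminus \{\Vec{0}\}$; normalize every nonzero $v \in V_0$ so that $\ell(v) = 1$, then pick $v_1$ and $v_2$ in $V_0$ that respectively minimize and maximize a second linear functional on the normalized set. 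A short verification then shows every other normalized generator is a convex combination of $v_1$ and $v_2$, which lifts to a non-negative rational combination of $v_1$ and $v_2$ on the un-normalized side, yielding $C = \Cone(\{v_1, v_2\})$.

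The main obstacle is this last step: formalizing ``pick the two extremal rays'' in $\mathbb{Q}^2$ without slipping into real-analytic intuition about angles. The plan above circumvents the issue by working with two linear functionals on the finite set of normalized generators, turning the geometric picture into a purely combinatorial minimum/maximum selection; the existence of the strictly positive functional $\ell$ is exactly a restatement of pointedness, and convexity inside a 2-dimensional space ensures that the extremal pair $(v_1, v_2)$ really does bracket every other generator.
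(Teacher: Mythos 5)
Your proof is correct, but it takes a genuinely different route from the paper's. The paper applies Farkas--Minkowski--Weyl to get a finite generating set $V$ of \emph{minimal} size and then argues directly: if $V$ contained three distinct nonzero vectors, they would be linearly dependent inside the $2$-dimensional space $P$, so (after renaming) one of them equals $\lambda_1\Vec{u}+\lambda_2\Vec{v}$ with $\lambda_1,\lambda_2$ of the same sign; if both are nonnegative that vector is redundant, contradicting minimality, and if both are nonpositive the three vectors cannot all lie in one orthant. You instead observe that $Z\cap(-Z)=\{\Vec{0}\}$ makes $P\cap Z$ a \emph{pointed} cone and then invoke the structure theory of pointed polyhedral cones in a $2$-dimensional space (at most two extremal rays), implemented via a strictly positive functional $\ell$ and a min/max selection on the normalized generators. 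Both arguments are sound; yours is more structural and identifies $v_1,v_2$ as the actual extremal rays, while the paper's is shorter and never needs pointedness at all. The one step you should not leave as a bare assertion is the existence of $\ell$: ``pointed implies a strictly positive functional exists'' is not literally a restatement of pointedness but an equivalence that holds for \emph{finitely generated} cones and requires an argument (e.g.\ Gordan's theorem: if no such $\ell$ exists, some nontrivial nonnegative combination of the generators vanishes, which puts a full line inside the cone). With that justification supplied, and the trivial cases $C=\{\Vec{0}\}$ and $C$ a single ray handled as you do, the proof is complete.
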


\begin{proof}
    Let $P$ be a plane and $Z$ be an orthant in $\mathbb{Q}^d$. The existence of a finite set $V\subseteq\mathbb{Q}^d$ such that $P\cap Z = \Cone(V)$ is guaranteed by Theorem \ref{thm:FMW}. Let $V$ have minimal size. Towards a contradiction we assume that $V$ contains 3 distinct non-zero vectors $\Vec{u}, \Vec{v}, \Vec{w}$. These vectors must be linearly dependent since $V\subseteq P$ and $P$ is 2-dimensional. Suppose $\Vec{w} = \lambda_1 \Vec{u} + \lambda_2 \Vec{v}$. We can make sure that $\lambda_1$ and $\lambda_2$ are of the same sign since otherwise we just rename $\Vec{u}, \Vec{v}, \Vec{w}$ properly. If $\lambda_1, \lambda_2 \ge 0$ then $\Cone(V) = \Cone(V\setminus\{\Vec{w}\})$, contradicting to the minimality of $|V|$. Thus $\lambda_1, \lambda_2 \le 0$. But then $\Vec{u}, \Vec{v}$ and $\Vec{w}$ cannot belong to the same orthant, contradicting to $V\subseteq Z$. Therefore, we must have $|V| \le 2$.
\end{proof}

We can now prove Theorem \ref{thm:2d-proj}.

\begin{proof}[Proof of Theorem \ref{thm:2d-proj}]
    Let $P \subseteq \mathbb{Q}^d$ be a plane, and $Z$ be an orthant in $\mathbb{Q}^d$ such that $P\cap Z$ contains two linearly independent vectors. Without loss of generality we assume that $Z = \mathbb{Q}_{\ge0}$ is the first orthant. By Lemma \ref{lem:2-vecs} there are non-zero vectors $\Vec{u}, \Vec{v}\in\mathbb{Q}^d$ such that $P\cap Z = \Cone(\{\Vec{u}, \Vec{v}\})$. We emphasize that $\Vec{u}, \Vec{v}\ge \Vec{0}$.

    Clearly $\Vec{u}, \Vec{v}$ are linearly independent since the cone generated by them contains two linearly independent vectors. We claim that $\Supp(\Vec{u})$ and $\Supp(\Vec{v})$ are incomparable under $\subseteq$. 
    Otherwise, suppose $\Supp(\Vec{v})\subseteq \Supp(\Vec{u})$, then $\lambda \Vec{u} - \Vec{v} \in P\cap Z$ for sufficiently large $\lambda > 0$. As $P\cap Z = \Cone(\{\Vec{u}, \Vec{v}\})$, we have $\lambda \Vec{u} - \Vec{v} = \alpha \Vec{u} + \beta \Vec{v}$ for some $\alpha, \beta\ge 0$. Then $(\beta + 1)\Vec{v} = (\lambda - \alpha) \Vec{u}$, contradicting to the linear independence of $\Vec{u}$ and $\Vec{v}$. 
    Symmetrically, we cannot have $\Supp(\Vec{u})\subseteq \Supp(\Vec{v})$. 
    Now take $i \in \Supp(\Vec{u}) \setminus \Supp(\Vec{v})$ and $j\in \Supp(\Vec{v})\setminus\Supp(\Vec{u})$. 
    We show that $I = \{i, j\}$ is a sign-reflecting projection for $P$ w.r.t.\ $Z$. 
    Indeed, let $\Vec{w}\in P = \Span(\{\Vec{u}, \Vec{v}\})$ be such that $\Vec{w}(i), \Vec{w}(j)\ge 0$. Suppose $\Vec{w} = \alpha \Vec{u} + \beta \Vec{v}$ for some $\alpha, \beta \in \mathbb{Q}$. Then $\Vec{w}(i) = \alpha \Vec{u}(i) \ge 0$ implies $\alpha \ge 0$, and $\Vec{w}(j) = \beta \Vec{v}(j) \ge 0$ implies $\beta \ge 0$. Thus $\Vec{w} \in \Cone(\{\Vec{u}, \Vec{v}\})\subseteq Z$. 
\end{proof}

Finally, we remark that the requirement in Theorem \ref{thm:2d-proj} that $P\cap Z$ contains two linearly independent vectors is not essential: 

\begin{lemma}
    \label{lem:2d-orthant}
    Let $P\subseteq \mathbb{Q}^d$ be a plane, and $\Vec{v}\in P$ be a non-zero vector. There exists an orthant $Z$ such that $\Vec{v}\in Z$ and $P\cap Z$ contains two linearly independent vectors.
\end{lemma}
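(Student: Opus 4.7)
The plan is to construct $Z$ explicitly by first fixing its sign pattern so that $\Vec{v}$ lies in it, and then producing a second independent vector via a small perturbation.

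First I would pick any $\Vec{w} \in P$ that is linearly independent from $\Vec{v}$; this exists since $\dim P = 2$. Next, I would define a sign tuple $t \in \{+1, -1\}^d$ as follows: for indices $i$ with $\Vec{v}(i) > 0$ set $t(i) = +1$; for $\Vec{v}(i) < 0$ set $t(i) = -1$; and for $i \in [d] \setminus \Supp(\Vec{v})$, set $t(i) = +1$ if $\Vec{w}(i) \ge 0$ and $t(i) = -1$ otherwise. By construction $\Vec{v} \in Z_t$, because $\Vec{v}(i) \cdot t(i) \ge 0$ holds trivially at zero coordinates of $\Vec{v}$ and by choice of $t$ elsewhere.

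Then I would consider the perturbed vector $\Vec{w}' := \Vec{v} + \epsilon \Vec{w}$ for a sufficiently small $\epsilon \in \mathbb{Q}_{>0}$. For each $i \in \Supp(\Vec{v})$, continuity ensures $\Vec{w}'(i)$ has the same sign as $\Vec{v}(i)$ (hence agrees with $t(i)$) as long as $\epsilon$ is smaller than $\min_{i \in \Supp(\Vec{v})} |\Vec{v}(i)|/(1+\norm{\Vec{w}})$, say. For each $i \notin \Supp(\Vec{v})$ we have $\Vec{w}'(i) = \epsilon \Vec{w}(i)$, which has the same sign as $\Vec{w}(i)$ and therefore again agrees with $t(i)$. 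Consequently $\Vec{w}' \in Z_t$. Since $\Vec{w}$ is independent of $\Vec{v}$ and $\epsilon \ne 0$, the pair $\{\Vec{v}, \Vec{w}'\}$ is still linearly independent, and both vectors lie in $P \cap Z_t$, finishing the proof.

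There is no real obstacle: the only subtlety is handling the coordinates where $\Vec{v}$ vanishes, which is why one uses $\Vec{w}$ to decide the sign of $t$ at those coordinates. Since the perturbation is carried out by a single scalar $\epsilon$, the argument is uniform over all coordinates and the required smallness condition is an elementary finite minimum.
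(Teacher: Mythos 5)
Your proof is correct and takes essentially the same route as the paper's: the paper picks $\Vec{u}$ with $P=\Span\{\Vec{u},\Vec{v}\}$ and uses $t\Vec{v}+\Vec{u}$ for large $t>0$, which is just a rescaling of your $\Vec{v}+\epsilon\Vec{w}$ for small $\epsilon>0$. Your version is in fact a bit more explicit about fixing the orthant's signs at the coordinates where $\Vec{v}$ vanishes, which the paper leaves implicit.
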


\begin{proof}
    Let $\Vec{v}\in P$ be non-zero. Pick $\Vec{u}\in P$ such that $P = \Span(\{ \Vec{u}, \Vec{v} \})$. Choose $t > 0$ large enough such that $t|\Vec{v}(i)| > |\Vec{u}(i)|$ for all $i \in \Supp(\Vec{v})$. Observe that $\Vec{v}$ and $t\Vec{v} + \Vec{u}$ are linearly independent and lie in the same orthant $Z$ as we desired.
\end{proof}

\subsection{Reachability Far Away from Axes}
\label{sec:flat-high-runs}

In this section, we prove that runs far away the axes in $\mathbb{N}^d$ can be captured by small linear path schemes. We shall focus on cycles at first:

\begin{restatable}{lemma}{flatHighCycles}
    \label{lem:flat-high-cycles}
    For every $d$-VASS $G = (Q, T)$ that is geometrically $2$-dimensional, there exists a natural numbers $D \le |G|^{O(1)}$ such that for every state $q \in Q$ and $\Vec{u}, \Vec{v} \in [D, \infty)^d$ with $q(\Vec{u})\xrightarrow{*} q(\Vec{v})$, there exists an LPS $\Lambda$ compatible to $G$ such that $q(\Vec{u}) \xrightarrow{\Lambda} q(\Vec{v})$ and $|\Lambda| \le |G|^{O(1)}$.
\end{restatable}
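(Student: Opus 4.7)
The plan is to reduce to dimension $2$ via the sign-reflecting projection of Section \ref{sec:srp}, invoke Corollary \ref{cor:zigzag-free-in-Z-cycle-lps} on the resulting 2-VASS, and lift the returned LPS back to $G$. Let $P := \CycleSpace(G)$, which we may take to be exactly 2-dimensional by the convention stated at the start of this appendix. Given $q(\Vec{u}) \xrightarrow{*} q(\Vec{v})$ with $\Vec{u} \ne \Vec{v}$ (the equal case is witnessed by an empty LPS), the effect $\Vec{v} - \Vec{u}$ is the effect of the witnessing cycle at $q$, hence lies in $P$. By Lemma \ref{lem:2d-orthant} I pick an orthant $Z$ containing $\Vec{v} - \Vec{u}$ such that $P \cap Z$ contains two linearly independent vectors, and then Theorem \ref{thm:2d-proj} supplies a sign-reflecting projection $I \subseteq [d]$ for $P$ with respect to $Z$ with $|I| = 2$. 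Since there are only $2^d = O(1)$ orthants to consider, taking a maximum over them a single threshold $D \le |G|^{O(1)}$ will suffice.

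Next, project $G$ onto the coordinates in $I$ to obtain a 2-VASS $G|_I$ on the same states and transitions, with each transition carrying effect $\Vec{a}|_I$ in place of $\Vec{a}$; then $|G|_I| \le |G|$, transitions are in natural bijection, and the assumed run projects to $q(\Vec{u}|_I) \xrightarrow{*}_{\mathbb{N}^2} q(\Vec{v}|_I)$ in $G|_I$. Choose $D$ large enough so that $\Vec{u}|_I$ and $\Vec{v}|_I$ lie componentwise above the threshold $D_0 \le |G|^{O(1)}$ supplied by Corollary \ref{cor:zigzag-free-in-Z-cycle-lps} applied to $G|_I$ and the orthant $Z|_I$. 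That corollary yields an LPS $\Lambda'$ in $G|_I$, zigzag-free in $Z|_I$, with $q(\Vec{u}|_I) \xrightarrow{\Lambda'} q(\Vec{v}|_I)$ and $|\Lambda'| \le |G|^{O(1)}$, realized by some specific cycle exponents $e_1, \ldots, e_k$.

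Lift $\Lambda'$ to an LPS $\Lambda$ in $G$ along the transition bijection, so $|\Lambda| = |\Lambda'|$ and $\norm{\Lambda} \le \norm{T}$. Two checks are required. First, $\Lambda$ is zigzag-free in $Z$: each cycle $\beta$ of $\Lambda$ is a cycle in $G$, hence $\Delta(\beta) \in P$, and its $I$-projection coincides with the effect of the corresponding cycle in $\Lambda'$, which lies in $Z|_I$; the sign-reflecting property of $I$ then forces $\Delta(\beta) \in Z$. Second, with the exponents $e_1, \ldots, e_k$ the lifted path has some effect $\Vec{w} \in P$ (it is a cycle at $q$), and $\Vec{w}|_I = \Vec{v}|_I - \Vec{u}|_I$. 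Hence $\Vec{w} - (\Vec{v} - \Vec{u})$ lies in $P$ and projects to $\Vec{0} \in Z|_I$, so by sign-reflection it lies in $Z$; applying the same argument to its negation, and using that an orthant contains both $\Vec{x}$ and $-\Vec{x}$ only when $\Vec{x} = \Vec{0}$, yields $\Vec{w} = \Vec{v} - \Vec{u}$. Therefore $q(\Vec{u}) \xrightarrow{\Lambda}_{\mathbb{Z}^d} q(\Vec{v})$ in $G$ with the very same exponents.

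Finally, by enlarging $D$ to also exceed $|\Lambda| \cdot \norm{T} \le |G|^{O(1)}$, Lemma \ref{lem:zigzag-free-LPS-high-reach} promotes this $\mathbb{Z}^d$-run into a genuine $\mathbb{N}^d$-run, finishing the proof with $|\Lambda| \le |G|^{O(1)}$. The main obstacle is the lifting step: the sign-reflecting property of $I$ has to do double duty, first to ensure that the lifted LPS remains zigzag-free in $Z$, and second, via the injectivity granted by Lemma \ref{lem:spp-inject}, to ensure that the lifted path actually lands at $q(\Vec{v})$ rather than at some other point of $q + P$ that shares its $I$-projection with $q(\Vec{v})$. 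Without this second use of sign-reflection, lifting the exponents would only guarantee the right $I$-coordinates.
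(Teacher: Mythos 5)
Your proposal is correct and follows essentially the same route as the paper's proof: choose an orthant via Lemma \ref{lem:2d-orthant}, obtain a sign-reflecting projection from Theorem \ref{thm:2d-proj}, apply Corollary \ref{cor:zigzag-free-in-Z-cycle-lps} to the projected $2$-VASS, lift the zigzag-free LPS back (using sign-reflection both for zigzag-freeness and, via Lemma \ref{lem:spp-inject}, to pin down the effect), and finish with Lemma \ref{lem:zigzag-free-LPS-high-reach}. The only cosmetic difference is that the projection $T \to T|_I$ is a surjection rather than a bijection, so the lift requires fixing a section as the paper does, but this does not affect your argument.
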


\begin{proof}
    Let $D_0$ be the constants $D$ as in Corollary \ref{cor:zigzag-free-in-Z-cycle-lps} and $B \le |G|^{O(1)}$ be the upper bound of lengths of LPSes given by Corollary \ref{cor:zigzag-free-in-Z-cycle-lps}. We put $D = \max\{ D_0, B\cdot \norm{T}\} \le |G|^{O(1)}$ here. 
    
    Let $q\in Q$ and $\Vec{u}, \Vec{v} \in [D, \infty)^d$ be such that $q(\Vec{u}) \xrightarrow{*} q(\Vec{v})$. Clearly $\Vec{v} - \Vec{u} \in \CycleSpace(G)$. Let $Z$ be the orthant given by Lemma \ref{lem:2d-orthant} such that $Z$ contains $\Vec{v} - \Vec{u}$ and that $\CycleSpace(G)\cap Z$ contains two linearly independent vectors. Now Theorem \ref{thm:2d-proj} shows that there exists a sign-reflecting projection $I\subseteq [d]$ for $\CycleSpace(G)$ w.r.t.\ $Z$, where $|I| = 2$.

    We define the $2$-VASS $G|_I = (Q, T|_I)$ as the projection of $G$ onto indices in $I$, where the transition set $T|_I$ is given by
    \begin{equation}
        T|_I = \{(p, \Vec{a}|_I, q) : (p, \Vec{a}, q) \in T\}.
    \end{equation}
    For every transition $t' = (p, \Vec{a}', q)\in T|_I$ we fix $g(t')$ to be an arbitrary transition $t = (p, \Vec{a}, q)\in T$ such that $\Vec{a}|_I = \Vec{a}'$. The mapping $g$ is extended to paths and cycles naturally.

    Note that $\Vec{u}|_I, \Vec{v}|_I \in [D, \infty)^2$ and $q(\Vec{u}|_I) \xrightarrow{*} q(\Vec{v}|_I)$ in $G|_I$. By Corollary \ref{cor:zigzag-free-in-Z-cycle-lps} there is an LPS $\sigma = \alpha_0\beta_1^*\alpha_1 \ldots  \beta_k^*\alpha_k$ compatible to $G|_I$ that is zigzag-free in $Z|_I$ such that $q(\Vec{u}|_I) \xrightarrow{\sigma} q(\Vec{v}|_I)$. Moreover, $|\sigma|\le B \le D$. Consider the LPS $\Lambda$ given by 
    \begin{equation}
        \Lambda = g(\alpha_0)g(\beta_1)^*g(\alpha_1) \ldots g(\beta_k)^*g(\alpha_k).
    \end{equation}
    Clearly $\Lambda$ is an LPS compatible to $G$ from state $q$ to $q$. Note that for all $i \in [k]$ we have $\Delta(g(\beta_i)) \in \CycleSpace(G)$ and $\Delta(g(\beta_i))|_I =\Delta(\beta_i) \in Z|_I$, thus $\Delta(g(\beta_i)) \in Z$ by the fact that $I$ is sign-reflecting. We have shown that $\Lambda$ is zigzag-free in $Z$. Next we prove that $q(\Vec{u})\xrightarrow{\Lambda} q(\Vec{v})$.

    As $q(\Vec{u}|_I) \xrightarrow{\sigma} q(\Vec{v}|_I)$, there must be numbers $e_1, \ldots, e_k\in\mathbb{N}$ such that 
    \begin{equation}
        \label{eq:reach-in-projected-2vass}
        q(\Vec{u}|_I)\xrightarrow{\alpha_0\beta_1^{e_1}\alpha_1\ldots \beta_k^{e_k}\alpha_k} q(\Vec{v}|_I).
    \end{equation}
    Let $\rho$ be the path in $G$ given by
    \begin{equation}
        \rho := g(\alpha_0) g(\beta_1)^{e_1} g(\alpha_1) \ldots g(\beta_k)^{e_k} g(\alpha_k),
    \end{equation}
    We shall show that $q(\Vec{u}) \xrightarrow{\rho}_{\mathbb{Z}^d} q(\Vec{v})$ holds in $G$. Then by Lemma \ref{lem:zigzag-free-LPS-high-reach} and our choice of $D$ we have indeed $q(\Vec{u}) \xrightarrow{\rho}_{\mathbb{N}^d} q(\Vec{v})$, witnessing the fact that $q(\Vec{u})\xrightarrow{\Lambda} q(\Vec{v})$.
    Now we only need to check that $\Delta(\rho) = \Vec{v} - \Vec{u}$. Recall that $\rho$ is a cycle, so $\Delta(\rho) \in \CycleSpace(G)$. By (\ref{eq:reach-in-projected-2vass}) and the definition of $g$ we have $\Delta(\rho)|_I = \Vec{v}|_I - \Vec{u}|_I \in Z|_I$. We then have $\Delta(\rho) \in Z$ since $I$ is sign-reflecting. By Lemma \ref{lem:spp-inject}, any vector in $\CycleSpace(G)\cap Z$ is uniquely determined by its components in $I$. As $\Delta(\rho)|_I = (\Vec{v} - \Vec{u})|_I$ we must have $\Delta(\rho) = \Vec{v} - \Vec{u}$ as claimed.

    Finally, we bound the length of $\Lambda$. Just note that $|\Lambda| = |\sigma| \le |G|^{O(1)}$.
\end{proof}

Runs that stay far away from the axes can be decomposed into a short path interleaved by at most $|Q|$ cycles, where each cycle can be handled by Lemma \ref{lem:flat-high-cycles}. This leads to the second result we will prove in this section:
\begin{restatable}{lemma}{flatHighRuns}
    \label{lem:flat-high-runs}
    For every $d$-VASS $G = (Q, T)$ that is geometrically $2$-dimensional, there exists a natural number $D \le |G|^{O(1)}$ such that for every pair of configurations $p(\Vec{u}), q(\Vec{v}) \in Q\times [D, \infty)^d$ with $p(\Vec{u})\xrightarrow{*}_{[D, \infty)^d} q(\Vec{v})$, there exists an LPS $\Lambda$ compatible to $G$ such that $p(\Vec{u}) \xrightarrow{\Lambda} q(\Vec{v})$ and $|\Lambda| \le |G|^{O(1)}$.
\end{restatable}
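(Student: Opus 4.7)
The plan is to reduce $[D,\infty)^d$-reachability between two configurations to the cyclic case already handled by Lemma \ref{lem:flat-high-cycles}. I would take $D$ to be exactly the constant supplied by that lemma. Fix a $[D,\infty)^d$-run $\pi$ from $p(\Vec{u})$ to $q(\Vec{v})$ in $G$.

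First, I would perform a standard walk decomposition: any walk from $p$ to $q$ in a directed graph admits a representation
\[
    \pi = \eta_0\, e_1\, \eta_1\, e_2 \cdots e_k\, \eta_k,
\]
where $e_1 e_2 \cdots e_k$ is a simple path from $p$ to $q$ (hence $k\le |Q|-1$) and each $\eta_i$ is a (possibly empty) closed walk at the $i$-th vertex of this simple path. This is obtained by repeatedly extracting sub-walks between revisits to the same state and coalescing all closed walks at a common vertex so that they appear consecutively in $\pi$.

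Next, let $r_i$ denote the $i$-th vertex of the simple path, and let $r_i(\Vec{w}_i)$ and $r_i(\Vec{w}_i')$ be the configurations along $\pi$ immediately before and after $\eta_i$. Since every configuration visited by $\pi$ lies in $[D,\infty)^d$, both $\Vec{w}_i, \Vec{w}_i' \in [D,\infty)^d$; in particular, the restriction of $\pi$ to $\eta_i$ provides a valid $\mathbb{N}^d$-run witnessing $r_i(\Vec{w}_i) \xrightarrow{*} r_i(\Vec{w}_i')$. Lemma \ref{lem:flat-high-cycles} therefore yields an LPS $\Lambda_i$ compatible to $G$ with $r_i(\Vec{w}_i) \xrightarrow{\Lambda_i} r_i(\Vec{w}_i')$ and $|\Lambda_i|\le |G|^{O(1)}$; when $\eta_i$ is empty we simply take $\Lambda_i$ to be the empty scheme.

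Finally, I would stitch these pieces together into
\[
    \Lambda \;:=\; \Lambda_0\, e_1\, \Lambda_1\, e_2 \cdots e_k\, \Lambda_k,
\]
which is itself an LPS, since concatenating LPSes interleaved by single transitions preserves the linear-path-scheme form. Its length is at most $(|Q|-1) + |Q|\cdot|G|^{O(1)} = |G|^{O(1)}$. The $\mathbb{N}^d$-run $p(\Vec{u}) \xrightarrow{\Lambda} q(\Vec{v})$ is witnessed by chaining the runs captured by each $\Lambda_i$ with the simple-path transitions $e_i$ in between. The only genuine technical step is the walk-decomposition, which is routine graph theory; after that, the statement follows by a direct application of Lemma \ref{lem:flat-high-cycles} at each of the $k+1 \le |Q|$ vertices of the simple path.
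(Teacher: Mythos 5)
Your proposal is correct and follows essentially the same route as the paper: both decompose the $[D,\infty)^d$-run into a simple path interleaved by at most $|Q|$ maximal closed walks whose endpoint configurations lie in $[D,\infty)^d$, apply Lemma \ref{lem:flat-high-cycles} to each closed walk, and concatenate. (Only a cosmetic remark: the decomposition is obtained directly by cutting each closed walk from the first to the last visit of its state — no actual ``coalescing''/reordering of sub-walks is needed or permitted, since that would alter the intermediate configurations.)
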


\begin{proof}
    We put $D$ to be the same constant as in Lemma \ref{lem:flat-high-cycles}. Denote $\mathbb{O} := [D, \infty)^d$. Let $p(\Vec{u}), q(\Vec{v}) \in Q \times \mathbb{O}$ be such that $p(\Vec{u}) \xrightarrow{*}_{\mathbb{O}} q(\Vec{v})$.
    
    Let $\pi$ be a run witnessing $p(\Vec{u}) \xrightarrow{*}_{\mathbb{O}} q(\Vec{v})$. By extracting maximal cycles from $\pi$, we can write $\pi$ in the following form:
    \begin{equation}
        \pi : p(\Vec{u}) 
        \xrightarrow{\alpha_0}_{\mathbb{O}} q_1(\Vec{u}_1) 
        \xrightarrow{\beta_1}_{\mathbb{O}} q_1(\Vec{v}_1)
        \xrightarrow{\alpha_1}_{\mathbb{O}} q_2(\Vec{u}_2) 
        \cdots 
        q_k(\Vec{u}_k) 
        \xrightarrow{\beta_k}_{\mathbb{O}} q_k(\Vec{v}_k)
        \xrightarrow{\alpha_k}_{\mathbb{O}} q(\Vec{v})
    \end{equation}
    where 
    \begin{itemize}
        \item $q_1, \ldots, q_k \in Q$ are pairwise distinct, so $k \le |Q|$;
        \item $\Vec{u}_1, \ldots, \Vec{u}_k, \Vec{v}_1, \ldots, \Vec{v}_k \in \mathbb{O}$;
        \item $\alpha_0\alpha_1\ldots \alpha_k$ is a simple path, so $|\alpha_0| + \cdots + |\alpha_k| \le |Q|$.
    \end{itemize}

    By Lemma \ref{lem:flat-high-cycles}, for each cycle $\beta_i$ there exists an LPS $\Lambda_i$ compatible to $G$ such that $q_i(\Vec{u}_i) \xrightarrow{\Lambda_i} q_i(\Vec{v}_i)$ and $|\Lambda_i| \le |G|^{O(1)}$. Let 
    \begin{equation}
        \Lambda = \alpha_0\Lambda_1\alpha_1\ldots \alpha_{k-1}\Lambda_k \alpha_k,
    \end{equation}
    it is easy to see that $\Lambda$ is compatible to $G$, and $p(\Vec{u}) \xrightarrow{\Lambda} q(\Vec{v})$ holds. 
    Finally, we bound the length of $\Lambda$:
    \begin{equation}
        |\Lambda| = \sum_{i = 0}^k |\alpha_i| + \sum_{i = 1}^k |\Lambda_i| \le |Q| + |Q| \cdot |G|^{O(1)} \le |G|^{O(1)}.
    \end{equation}
\end{proof}

We remark that Lemma \ref{lem:flat-high-runs} captures only the runs that stay in the region $[D, \infty)^d$, while the run derived from the LPS $\Lambda$ given by it may go out of $[D, \infty)^d$ but stays in $\mathbb{N}^d$.

\subsection{Reachability Near the Axes}
\label{sec:flat-low-runs}

We have captured runs that stay far away from axes by small LPSes in Section \ref{sec:flat-high-runs}. In this section we study the complementary region. Let $D \in \mathbb{N}$ be any natural number, we define the region 
\begin{equation}
    \mathbb{L}[D] := \{\Vec{u} \in \mathbb{N}^d : \Vec{u}(i) \le D \text{ for some }i \in [d]\}.
\end{equation}
The following is to be proved:
\begin{lemma}
    \label{lem:flat-near-axes}
    Let $D\in\mathbb{N}$ and let $G = (Q, T)$ be a $d$-VASS that is geometrically $2$-dimensional. For every pair of configurations $p(\Vec{u}), q(\Vec{v}) \in Q\times \mathbb{L}[D]$ with $p(\Vec{u}) \xrightarrow{*}_{\mathbb{L}[D]} q(\Vec{v})$ there exists an LPS $\Lambda$ compatible to $G$ such that $p(\Vec{u}) \xrightarrow{\Lambda} q(\Vec{v})$, and $|\Lambda| \le (|G| + D)^{O(1)}$.
\end{lemma}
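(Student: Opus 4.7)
The plan is to proceed by induction on the dimension $d \ge 2$, with the geometric dimension kept fixed at $\le 2$. For the base case $d = 2$ we have $\mathbb{L}[D] \subseteq \mathbb{N}^2$, so any witnessing run is an ordinary $\mathbb{N}^2$-run and Theorem~\ref{thm:2vass-flat-lps} directly yields an LPS $\Lambda$ with $|\Lambda| \le |G|^{O(1)} \le (|G| + D)^{O(1)}$.

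For the inductive step $d \ge 3$, the key reduction, following the hint in Section~\ref{sec:flat-geo-2d}, is that a run confined to the union of strips $\mathbb{L}[D] = \bigcup_{i \in [d]} \mathbb{L}_i[D]$, where $\mathbb{L}_i[D] := \{\Vec{w} \in \mathbb{N}^d : \Vec{w}(i) \le D\}$, can be pieced together from sub-runs each living in a single strip, and a sub-run confined to $\mathbb{L}_i[D]$ is essentially a run in a $(d-1)$-dimensional VASS. Concretely, for each $i \in [d]$ I construct a $(d-1)$-VASS $G_i = (Q_i, T_i)$ with $Q_i := Q \times \{0, 1, \ldots, D\}$, and for each $t = (p, \Vec{a}, q) \in T$ and each $n \in \{0, \ldots, D\}$ with $n + \Vec{a}(i) \in \{0, \ldots, D\}$ I include a transition $((p, n), \Vec{a}^{-i}, (q, n + \Vec{a}(i)))$ in $T_i$. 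Runs of $G$ confined to $\mathbb{L}_i[D]$ are in natural bijection with runs of $G_i$. Moreover, every cycle in $G_i$ corresponds to a cycle of $G$ whose $i$-th coordinate effect is zero, so $\CycleSpace(G_i) \subseteq \CycleSpace(G)|_{[d]\setminus\{i\}}$, which has dimension at most $2$; hence $G_i$ is again geometrically $2$-dimensional, and clearly $|G_i| \le (|G| + D)^{O(1)}$, so the induction hypothesis applies to yield LPSes of length $(|G| + D)^{O(1)}$.

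Given a witnessing run $\pi : p(\Vec{u}) \xrightarrow{*}_{\mathbb{L}[D]} q(\Vec{v})$, I decompose it greedily into maximal consecutive sub-runs $\pi_1, \pi_2, \ldots, \pi_N$ together with coordinates $i_1, \ldots, i_N \in [d]$ so that each $\pi_j$ lies entirely in $\mathbb{L}_{i_j}[D]$. Each boundary configuration between $\pi_j$ and $\pi_{j+1}$ has at least two coordinates bounded by $D$, so lies in a set of size at most $|Q| \cdot \binom{d}{2} \cdot (D+1)^2 \le (|G| + D)^{O(1)}$. Applying the induction hypothesis to each $G_{i_j}$ gives an LPS $\Lambda_j$ of length $(|G| + D)^{O(1)}$ capturing $\pi_j$; lifting back to $G$ and concatenating the $\Lambda_j$ together with the boundary transitions yields the desired LPS $\Lambda$ with $|\Lambda| \le N \cdot (|G| + D)^{O(1)}$.

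The main obstacle is to bound the number $N$ of sub-runs by a polynomial in $|G| + D$, since a priori the run may oscillate between strips arbitrarily many times. The plan is a careful case analysis showing that between two visits to essentially the same boundary configuration the intermediate portion of the run can be absorbed into (or replaced by) a cycle whose effect is controlled, exploiting the fact that cycle effects span only a $2$-dimensional subspace of $\mathbb{Q}^d$. In effect, up to such shortcutting each of the $(|G| + D)^{O(1)}$ possible boundary configurations need be visited at most a bounded number of times, so $N \le (|G| + D)^{O(1)}$. Turning geometric $2$-dimensionality of $\CycleSpace(G)$ into this polynomial count of strip transitions is the technical heart of the argument and corresponds to the \emph{long and tedious case analysis} anticipated in the sketch of Section~\ref{sec:flat-geo-2d}.
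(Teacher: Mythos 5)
Your setup agrees with the paper's: the base case $d=2$ via Theorem~\ref{thm:2vass-flat-lps}, the encoding of a single bounded coordinate into the state to get a geometrically $2$-dimensional $(d-1)$-VASS (this is the paper's Lemma~\ref{lem:bounded-counter-reduce-dim}, invoked against the induction hypothesis of Theorem~\ref{thm:geo-2vass-flat-lps} on the dimension), and the decomposition of the run into sub-runs each confined to one strip $\mathbb{L}_i[D]$. The genuine gap is exactly where you place it: the polynomial bound on the number $N$ of strip alternations. The mechanism you propose for it does not work. A boundary configuration between two consecutive strips has only \emph{two} coordinates bounded by roughly $D$; the remaining $d-2$ coordinates are unbounded, so there is no polynomial-size set of boundary configurations to pigeonhole on, only a polynomial-size set of \emph{projections} (state plus two bounded coordinates). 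Two visits with the same projection are different configurations, and excising or replacing the portion of the run between them by ``a cycle whose effect is controlled'' is not sound: the excised portion may be needed to keep the unbounded coordinates nonnegative later, and in general the alternation count genuinely need not be polynomially bounded.

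What the paper proves instead is a dichotomy, and it needs a strengthened statement to set it up: Lemma~\ref{lem:flat-near-axes-subset} treats runs in $\mathbb{L}_{\mathcal{I}}[D]=\bigcup_{i\in\mathcal{I}}\mathbb{L}_i[D]$ for an arbitrary $\mathcal{I}\subseteq[d]$ and inducts on $|\mathcal{I}|$, peeling off one index $i$ and alternating between $\mathbb{L}_i[D]$ and $\mathbb{L}_{\mathcal{I}'}[D]$. Sub-runs in $\mathbb{L}_i[D]$ that keep some other coordinate $\ell$ below an enlarged threshold $D'$ are absorbed into $\mathbb{L}_{\mathcal{I}'}[D']$-runs and handled by the inner induction hypothesis (this absorption is why the statement must be generalized to subsets $\mathcal{I}$ --- your direct decomposition of $\mathbb{L}[D]$ into single strips has no such mechanism). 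For the sub-runs that escape above $D'$ in coordinate $\ell$, their mere existence already yields a cycle-effect vector $\Vec{r}'$ with $i\notin\Supp(\Vec{r}')$ and $\ell\in\Supp(\Vec{r}')$; and if there are more than $|Q|(D+\norm{T})^2$ of them, pigeonholing their starting configurations on state and the two bounded coordinates yields a second, linearly independent cycle-effect vector $\Vec{r}$ with $i,\ell\notin\Supp(\Vec{r})$. Since $\dim(\CycleSpace(G))\le 2$, these two vectors span the whole cycle space and force $i\notin\Supp(\CycleSpace(G))$; then the $i$-th coordinate of \emph{any} run from $p(\Vec{u})$ varies by at most $|Q|\cdot\norm{T}$, and the entire instance collapses to dimension $d-1$ by Lemma~\ref{lem:bounded-counter-reduce-dim} (this is Lemma~\ref{lem:flat-cyc-degenerate}). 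So the alternation count is bounded \emph{or} the whole problem degenerates --- not bounded unconditionally. This dichotomy, and the use of geometric $2$-dimensionality to trigger it, is the missing idea in your plan; without it the claim $N\le(|G|+D)^{O(1)}$ is unsupported.
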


Let's first subdivide the region $\mathbb{L}[D]$ into simpler regions. For any $D \in \mathbb{N}$ and $i \in [d]$, we define 
\begin{equation}
    \mathbb{L}_i[D] := \{\Vec{u} \in \mathbb{N}^d : \Vec{u}(i) \le D\}.
\end{equation}
Clearly $\mathbb{L}[D] = \bigcup_{i \in [d]} \mathbb{L}_i[D]$. We shall prove a stronger result that runs in a union of any number of $\mathbb{L}_i[D]$'s can be captured by LPSes. For $\mathcal{I} \subseteq [d]$ we write $\mathbb{L}_\mathcal{I}[D] := \bigcup_{i \in \mathcal{I}}\mathbb{L}_i[D]$.

\begin{lemma}
    \label{lem:flat-near-axes-subset}
    Let $D\in\mathbb{N}$, $\mathcal{I} \subseteq [d]$, and let $G = (Q, T)$ be a $d$-VASS that is geometrically $2$-dimensional. For every pair of configurations $p(\Vec{u}), q(\Vec{v}) \in Q\times \mathbb{L}_\mathcal{I}[D]$ with $p(\Vec{u}) \xrightarrow{*}_{\mathbb{L}_\mathcal{I}[D]} q(\Vec{v})$ there exists an LPS $\Lambda$ compatible to $G$ such that $p(\Vec{u}) \xrightarrow{\Lambda} q(\Vec{v})$, and $|\Lambda| \le (|G| + D)^{O(1)}$.
\end{lemma}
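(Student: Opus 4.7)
The plan is to prove the lemma by induction on $|\mathcal{I}|$, relying on Theorem \ref{thm:geo-2vass-flat-lps} in dimension $d - 1$ (provided by the outer induction on $d$) for the base case. The case $|\mathcal{I}| = 0$ is vacuous since $\mathbb{L}_\emptyset[D] = \emptyset$.

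For the base case $\mathcal{I} = \{i\}$, every configuration on the run has coordinate $i$ in the finite interval $\{0, 1, \ldots, D\}$, so the run can be simulated in a $(d-1)$-VASS $G'$ obtained by folding coordinate $i$ into the state: I would set $Q' := Q \times \{0, \ldots, D\}$ and put $((p, v), \Vec{a}^{-i}, (q, v + \Vec{a}(i))) \in T'$ for every $(p, \Vec{a}, q) \in T$ and every $v$ with $0 \le v, v + \Vec{a}(i) \le D$. One verifies that $|G'| \le (|G| + D)^{O(1)}$, and that $\CycleSpace(G')$ embeds into the projection to $[d] \setminus \{i\}$ of the cycle effects of $G$ with zero $i$-th component, so $G'$ remains geometrically at most $2$-dimensional. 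The reachability $p(\Vec{u}) \xrightarrow{*}_{\mathbb{L}_i[D]} q(\Vec{v})$ corresponds to unrestricted reachability in $G'$, to which I apply Theorem \ref{thm:geo-2vass-flat-lps} for dimension $d - 1$, producing an LPS of length $(|G| + D)^{O(1)}$ that I lift back to $G$ by reversing the fold.

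For the inductive step $|\mathcal{I}| \ge 2$, pick $i \in \mathcal{I}$ and set $\mathcal{I}' := \mathcal{I} \setminus \{i\}$. A run $\pi : p(\Vec{u}) \xrightarrow{*}_{\mathbb{L}_\mathcal{I}[D]} q(\Vec{v})$ decomposes into an alternating sequence of \emph{A-segments}, whose configurations satisfy coordinate $i \le D$ and hence lie in $\mathbb{L}_i[D]$, and \emph{B-segments}, whose configurations satisfy coordinate $i > D$ and so must lie in $\mathbb{L}_{\mathcal{I}'}[D]$ since the run stays in $\mathbb{L}_\mathcal{I}[D]$. By the base case applied with $\{i\}$, every A-segment is captured by an LPS of length $(|G| + D)^{O(1)}$; by the inductive hypothesis applied with $\mathcal{I}'$, so is every B-segment. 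If the number of alternations were polynomially bounded in $|G| + D$, a direct concatenation of segment LPSes would give the desired LPS.

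The main obstacle, and the source of the ``long and tedious case analysis'' that the authors warn about, is bounding this number of alternations, since a priori the run may bounce across the threshold $D$ arbitrarily often. Each crossing sits at a configuration whose state is one of $|Q|$ and whose $i$-th coordinate lies in the narrow band $[D - \norm{T}, D + \norm{T}]$, so there are only $(|G| + D)^{O(1)}$ distinct boundary types. Once the number of crossings of a given orientation exceeds this, pigeonhole produces two boundary transitions of the same type, bracketing a sub-run of $\pi$ that is a cycle of $G$ whose effect on coordinate $i$ vanishes. I plan to absorb such a cycle as a starred component $\beta^*$ in the final LPS and delete the corresponding stretch from the remaining run; iterating the extraction reduces $\pi$ to at most $(|G| + D)^{O(1)}$ alternations while collecting at most $(|G| + D)^{O(1)}$ extracted cycles. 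Concatenating the segment LPSes with the extracted starred cycles and verifying via the characteristic system of Section \ref{sec:lps-system} that the original reachability remains a valid instantiation delivers a single LPS of length $(|G| + D)^{O(1)}$, as required.
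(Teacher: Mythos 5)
Your base case coincides with the paper's (Lemma~\ref{lem:bounded-counter-reduce-dim}: fold the bounded coordinate into the state and invoke the outer induction hypothesis for dimension $d-1$), and your decomposition of the run into alternating $\mathbb{L}_i[D]$- and $\mathbb{L}_{\mathcal{I}'}[D]$-segments is also the paper's starting point. The gap is in your mechanism for bounding the number of alternations. First, the sub-run you extract by pigeonhole is a cycle only in the graph-theoretic sense: the two matched boundary configurations agree on the state and on coordinate $i$, but on nothing else, so the extracted stretch has a nonzero effect on the other coordinates. Deleting it shifts every subsequent configuration by the negation of that effect, so the ``remaining run'' is in general not a run at all (non-negativity fails), and re-inserting the stretch as a $\beta^+$ used exactly once gains nothing. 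Second, even setting that aside, the extracted stretch has length comparable to $|\pi|$, which is unbounded; an LPS containing it as a component $\beta^*$ has length at least $|\beta|$, so the required bound $|\Lambda|\le(|G|+D)^{O(1)}$ is lost. You would need to recursively flatten the extracted stretch, and nothing in your argument controls that recursion.

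A telling symptom is that your inductive step never uses the hypothesis that $G$ is geometrically $2$-dimensional, whereas the paper's proof hinges on it exactly at this point. The paper classifies the $\mathbb{L}_i[D]$-segments: those whose excursion in some coordinate $\ell\in\mathcal{I}'$ stays below an enlarged threshold $D'$ are re-absorbed into $\mathbb{L}_{\mathcal{I}'}[D']$-segments (so the induction hypothesis is applied with a polynomially larger $D$), and each remaining segment certifies, via a staircase-plus-pigeonhole argument, a cycle effect $\Vec{r}'$ with $i\notin\Supp(\Vec{r}')$ and $\ell\in\Supp(\Vec{r}')$. If the number of such segments exceeds $|Q|(D+\norm{T})^2$, a second pigeonhole on their starting configurations (which lie in the junction region where both coordinates $i$ and $\ell$ are small) yields another nonzero cycle effect $\Vec{r}$ with $i,\ell\notin\Supp(\Vec{r})$. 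Since $\Vec{r},\Vec{r}'$ are linearly independent and $\dim(\CycleSpace(G))\le 2$, one concludes $\CycleSpace(G)=\Span\{\Vec{r},\Vec{r}'\}$ and hence $i\notin\Supp(\CycleSpace(G))$; Lemma~\ref{lem:flat-cyc-degenerate} then shows coordinate $i$ of \emph{any} run varies by at most $|Q|\cdot\norm{T}$, collapsing the whole problem back to the bounded-coordinate base case. In other words, too many alternations are not eliminated by surgery on the run; they are shown to be impossible unless the cycle space degenerates, in which case a different reduction applies. Your proposal is missing this idea, and without it the alternation count cannot be bounded.
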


Note that Lemma \ref{lem:flat-near-axes} follows directly from Lemma \ref{lem:flat-near-axes-subset} by taking $\mathcal{I} = [d]$. In the rest we will prove Lemma \ref{lem:flat-near-axes-subset} by induction on the size of $\mathcal{I}$.

\subsubsection{The Base Case: Reachability with One Bounded Component}

A run that stays in $\mathbb{L}_i[D]$ has its $i$th component bounded by $D$. In this case we can simply encode this component in the states and reduce the dimension by one. Recall that we are proving Theorem \ref{thm:geo-2vass-flat-lps} by induction on the dimension $d$, so here we assume that Theorem \ref{thm:geo-2vass-flat-lps} already holds for $(d-1)$-VASSes as the induction hypothesis.

\begin{lemma}
    \label{lem:bounded-counter-reduce-dim}
    Let $0 \le B \le C$ and $i \in [d]$, let $\mathbb{L}_i[B, C]$ denote the region $\{\Vec{u} \in \mathbb{N}^d : B \le \Vec{u}(i) \le C\}$. Let $G = (Q, T)$ be a $d$-VASS that is geometrically $2$-dimensional. For every pair of configurations $p(\Vec{u}), q(\Vec{v}) \in Q\times \mathbb{L}_i[B, C]$ with $p(\Vec{u}) \xrightarrow{*}_{\mathbb{L}_i[B, C]} q(\Vec{v})$ there exists an LPS $\Lambda$ compatible to $G$ such that $p(\Vec{u}) \xrightarrow{\Lambda} q(\Vec{v})$, and $|\Lambda| \le (|G| + (C - B))^{O(1)}$.
\end{lemma}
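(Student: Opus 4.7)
The plan is to reduce the problem to a $(d-1)$-dimensional VASS by absorbing the bounded $i$-th counter into the state component, then invoke the outer induction hypothesis of Theorem~\ref{thm:geo-2vass-flat-lps} applied in dimension $d-1$.

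First I would construct a $(d-1)$-VASS $G' = (Q', T')$ with $Q' := Q \times \{B, B+1, \ldots, C\}$ and with a transition $((p, c), \Vec{a}^{-i}, (q, c + \Vec{a}(i))) \in T'$ for each $(p, \Vec{a}, q) \in T$ and each $c \in \{B, \ldots, C\}$ satisfying $B \le c + \Vec{a}(i) \le C$. Each transition of $G'$ is derived from exactly one transition of $T$, so $\norm{T'} \le \norm{T}$, and a direct computation gives $|G'| \le (C - B + 1) \cdot |G| \le (|G| + (C - B))^{O(1)}$. It is immediate from the construction that $\mathbb{N}^{d-1}$-runs in $G'$ from $(p, \Vec{u}(i))(\Vec{u}^{-i})$ to $(q, \Vec{v}(i))(\Vec{v}^{-i})$ are in bijective correspondence with $\mathbb{L}_i[B, C]$-runs in $G$ from $p(\Vec{u})$ to $q(\Vec{v})$: the second state component records the value of the $i$-th counter at each step, which by construction stays within $[B, C]$.

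Next I would verify that $G'$ remains geometrically $2$-dimensional. Every cycle $\beta'$ in $G'$ projects, by forgetting the $\{B, \ldots, C\}$-component of the states, to a cycle $\beta$ in $G$ with $\Delta(\beta') = \Delta(\beta)^{-i}$. Hence $\CycleSpace(G')$ equals the image of $\CycleSpace(G)$ under the linear map $\Vec{w} \mapsto \Vec{w}^{-i}$, and consequently $\dim(\CycleSpace(G')) \le \dim(\CycleSpace(G)) \le 2$. I may then apply the induction hypothesis of Theorem~\ref{thm:geo-2vass-flat-lps} in dimension $d - 1$ to obtain an LPS $\Lambda'$ compatible to $G'$ witnessing the required reachability, with $|\Lambda'| \le |G'|^{O(1)} \le (|G| + (C - B))^{O(1)}$. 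Because each transition occurring in $\Lambda'$ lifts uniquely to a transition of $G$, replacing every transition of $\Lambda'$ by its originator yields an LPS $\Lambda$ compatible to $G$ of the same length, and the bijective correspondence of runs immediately gives $p(\Vec{u}) \xrightarrow{\Lambda} q(\Vec{v})$ in $G$.

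The main subtlety I anticipate is the careful bookkeeping of the run correspondence: along the lifted run I must verify that the $i$-th counter equals the $\{B, \ldots, C\}$-component of the current state of $G'$ at every step, hence remains in $[B, C] \subseteq \mathbb{N}$, while the remaining $d - 1$ components stay non-negative by virtue of the $\mathbb{N}^{d-1}$-semantics respected by $\Lambda'$. A minor point is that the outer induction on $d$ requires $d - 1 \ge 2$; this holds in the inductive step since the base case $d = 2$ of Theorem~\ref{thm:geo-2vass-flat-lps} is handled separately via Theorem~\ref{thm:2vass-flat-lps}.
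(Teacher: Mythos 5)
Your proposal is correct and matches the paper's proof essentially step for step: the same state-space product $Q\times\{B,\ldots,C\}$ encoding the bounded $i$-th counter, the same observation that the resulting $(d-1)$-VASS remains geometrically $2$-dimensional because its cycles project to cycles of $G$, and the same application of the outer induction hypothesis of Theorem~\ref{thm:geo-2vass-flat-lps} followed by lifting the LPS back via the (unique) originating transitions. The only nitpick is that $\CycleSpace(G')$ need only be \emph{contained in} (not equal to) the projection of $\CycleSpace(G)$, but containment is all the dimension bound requires.
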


\begin{proof}
    Let $G^{-i} = (Q^{-i}, T^{-i})$ be the $(d-1)$-VASS obtained from $G$ where 
    \begin{align}
        Q^{-i} &= \{(q, z) : q\in Q, z\in [B, C]\},\\
        T^{-i} &= \{ ( (p, z), \Vec{a}^{-i}, (q, z') ) : (p, \Vec{a}, q) \in T, z' - z = \Vec{a}(i) \}.
    \end{align}
    For each transition $t = ( (p, z), \Vec{a}^{-i}, (q, z') ) \in T^{-i}$ let $g(t)$ be the unique transition $(p, \Vec{a}, q) \in T$ where $\Vec{a}(i) = z' - z$. The mapping $g$ is extended to linear path schemes compatible to $G^{-i}$ as a word morphism.

    Note that $G^{-i}$ is still geometrically $2$-dimensional as $G$ is. Moreover, $|Q^{-i}| = |Q| \cdot (C - B + 1)$, and $\norm{T^{-i}} \le \norm{T}$, so $|G^{-i}| \le (|G| + C - B)^{O(1)}$. Let $p(\Vec{u}), q(\Vec{v}) \in Q\times \mathbb{L}_i[B, C]$ be configurations in $G$ such that $p(\Vec{u}) \xrightarrow{*}_{\mathbb{L}_i[B, C]} q(\Vec{v})$ holds in $G$. Then $(p, \Vec{u}(i))(\Vec{u}^{-i}) \xrightarrow{*}_{\mathbb{N}^{d-1}} (q, \Vec{v}(i))(\Vec{v}^{-i})$ holds in $G^{-i}$. Recall we have assumed that Theorem \ref{thm:geo-2vass-flat-lps} holds for dimension $d - 1$. In particular, it holds for $G^{-i}$. Thus there exists an LPS $\Lambda$ compatible to $G^{-i}$ such that $(p, \Vec{u}(i))(\Vec{u}^{-i}) \xrightarrow{\Lambda}_{\mathbb{N}^{d-1}} (q, \Vec{v}(i))(\Vec{v}^{-i})$. Note that $g(\Lambda)$ is an LPS compatible to $G$. Since the $i$th coordinate is encoded in the state of $G^{-i}$, one easily checks that $p(\Vec{u}) \xrightarrow{g(\Lambda)}_{\mathbb{N}^d} q(\Vec{v})$ also holds in $G$.

    As for the size of $\Lambda$, just note that $|g(\Lambda)| = |\Lambda| \le |G^{-i}|^{O(1)} \le (|G| + C - B)^{O(1)}$.
\end{proof}

The base case of Lemma \ref{lem:flat-near-axes-subset} where $\mathcal{I}$ contains only one index is just an easy corollary of Lemma \ref{lem:bounded-counter-reduce-dim} by taking $B = 0$ and $C = D$. The general form of Lemma \ref{lem:bounded-counter-reduce-dim} will be used later.

\subsubsection{The Inductive Proof of Lemma \ref{lem:flat-near-axes-subset}}

We now consider $\mathbb{L}_\mathcal{I}[D]$-reachability for $\mathcal{I}\subseteq [d]$ with $|\mathcal{I}| > 1$, assuming that Lemma \ref{lem:flat-near-axes-subset} holds for all subsets $\mathcal{I}' \subsetneq \mathcal{I}$. First we identify a special case where $\CycleSpace(G)$ is orthogonal to one of the axes, which makes $G$ essentially a $(d-1)$-VASS.

\begin{lemma}
    \label{lem:flat-cyc-degenerate}
    Let $G = (Q, T)$ be a $d$-VASS that is geometrically $2$-dimensional. Suppose there exists $i\in [d]$ such that $i \notin \Supp(\CycleSpace(G))$. Then for every pair of configurations $p(\Vec{u}), q(\Vec{v}) \in Q\times \mathbb{N}^d$ with $p(\Vec{u}) \xrightarrow{*} q(\Vec{v})$ there exists an LPS $\Lambda$ compatible to $G$ such that $p(\Vec{u}) \xrightarrow{\Lambda} q(\Vec{v})$, and $|\Lambda| \le |G|^{O(1)}$.
\end{lemma}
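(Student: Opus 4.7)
The plan is to exploit the hypothesis $i \notin \Supp(\CycleSpace(G))$, which forces coordinate $i$ to be \emph{locally fixed} inside every strongly connected component (SCC) of $G$, and then to apply Lemma~\ref{lem:bounded-counter-reduce-dim} piecewise to each within-SCC segment of a given run. Because the SCC graph is acyclic, only a bounded number of cross-SCC transitions occur, so concatenating the resulting LPSes will yield the desired small LPS.

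I would begin by showing that for every SCC $C$ of $G$ there exists a function $f_C : C \to \mathbb{Z}$ with $f_C(q) - f_C(p) = \Vec{a}(i)$ for every transition $(p, \Vec{a}, q)$ with both endpoints in $C$. To see this, fix a base state $p_0 \in C$, set $f_C(p_0) := 0$, and define $f_C(r)$ as the $i$-effect of any path from $p_0$ to $r$ within $C$. This is well-defined because two such paths differ (as Parikh images) by a $\mathbb{Z}$-linear combination of cycles of $C$, and every cycle has zero $i$-effect by hypothesis. Since $C$ is strongly connected on at most $|Q|$ vertices, the range satisfies $\max_C f_C - \min_C f_C \leq |Q|\cdot\norm{T}$.

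Given a witnessing run of $p(\Vec{u}) \xrightarrow{*} q(\Vec{v})$, I would decompose it along SCC boundaries as
\begin{equation*}
    p(\Vec{u}) = r_0(\Vec{u}_0) \xrightarrow{\pi_0} s_0(\Vec{v}_0) \xrightarrow{t_1} r_1(\Vec{u}_1) \xrightarrow{\pi_1} \cdots \xrightarrow{t_k} r_k(\Vec{u}_k) \xrightarrow{\pi_k} s_k(\Vec{v}_k) = q(\Vec{v}),
\end{equation*}
where each $\pi_j$ stays within a single SCC $C_j$, each $t_j$ is a cross-SCC transition, and $k + 1 \leq |Q|$ because the SCC DAG is acyclic. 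Throughout $\pi_j$ the function $f_{C_j}$ forces the $i$-coordinate at the current state $r$ to equal $\Vec{u}_j(i) + f_{C_j}(r) - f_{C_j}(r_j)$, which lies in $[B_j, C_j]$ for $B_j := \max(0, \Vec{u}_j(i) - |Q|\norm{T})$ and $C_j := \Vec{u}_j(i) + |Q|\norm{T}$, so $C_j - B_j \leq 2|Q|\norm{T}$. The sub-VASS $G_j$ induced by $C_j$ remains geometrically $2$-dimensional since $\CycleSpace(G_j) \subseteq \CycleSpace(G)$, so Lemma~\ref{lem:bounded-counter-reduce-dim} applied to $G_j$ with bounds $B_j, C_j$ yields an LPS $\Lambda_j$ compatible to $G$ such that $r_j(\Vec{u}_j) \xrightarrow{\Lambda_j} s_j(\Vec{v}_j)$ and $|\Lambda_j| \leq (|G_j| + 2|Q|\norm{T})^{O(1)} \leq |G|^{O(1)}$.

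Finally I would take $\Lambda := \Lambda_0\, t_1\, \Lambda_1\, t_2 \cdots t_k\, \Lambda_k$, which is again an LPS compatible to $G$ satisfying $p(\Vec{u}) \xrightarrow{\Lambda} q(\Vec{v})$, with $|\Lambda| \leq (k + 1)\cdot |G|^{O(1)} + k \leq |G|^{O(1)}$ as desired. The only step requiring any real thought is the well-definedness of $f_C$, which is a standard digraph-theoretic observation; everything else is bookkeeping built directly on Lemma~\ref{lem:bounded-counter-reduce-dim}, so I do not anticipate a significant obstacle.
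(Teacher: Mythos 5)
Your proposal is correct and follows essentially the same route as the paper: both arguments reduce to Lemma~\ref{lem:bounded-counter-reduce-dim} after observing that, since every cycle has zero effect on coordinate $i$, that coordinate stays within a window of width $O(|Q|\cdot\norm{T})$ along any run. The only difference is in packaging: the paper establishes the window globally for the entire run via a staircase/pigeonhole extraction of a cycle with nonzero $i$-effect and then applies Lemma~\ref{lem:bounded-counter-reduce-dim} once, whereas you establish it per strongly connected component via a potential function and concatenate at most $|Q|$ LPSes; both yield the same $|G|^{O(1)}$ bound.
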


\begin{proof}
    Let $p(\Vec{u}), q(\Vec{v}) \in Q\times \mathbb{N}^d$ be two configurations of $G$ with $p(\Vec{u}) \xrightarrow{*} q(\Vec{v})$. We show that there exists an LPS $\Lambda$ compatible to $G$ such that $p(\Vec{u}) \xrightarrow{\Lambda} q(\Vec{v})$ and $|\Lambda| \le |G|^{O(1)}$.

    First we claim that for any configurations $r(\Vec{w}) \in Q\times\mathbb{N}^d$ reachable from $p(\Vec{u})$, we have $| \Vec{w}(i) - \Vec{u}(i) | \le |Q|\cdot \norm{T}$. Otherwise, assume there is a run $\pi: p(\Vec{u}) \xrightarrow{*} r(\Vec{w})$ such that $| \Vec{w}(i) - \Vec{u}(i) | > |Q|\cdot \norm{T}$. Assume w.l.o.g.\ it is the case that $\Vec{w}(i) - \Vec{u}(i) > |Q|\cdot \norm{T}$. Write $\pi$ in the following form as a sequence of sub-runs that increase the $i$th component like a stair:
    \begin{equation}
        \pi: p(\Vec{u}) = q_0(\Vec{u}_0) \xrightarrow{*} q_1(\Vec{u}_1) \xrightarrow{*} \cdots \xrightarrow{*} q_k(\Vec{u}_k) \xrightarrow{*} r(\Vec{w})
    \end{equation}
    such that $\Vec{u}_k(i) \ge \Vec{w}(i) > \Vec{u}(i) + |Q|\cdot \norm{T}$ and for all $j \in [k]$, $\Vec{u}_{j}(i) > \Vec{u}_{j-1}(i)$ and every configuration on the sub-run from $q_{j-1}(\Vec{u}_{j-1})$ to $q_{j}(\Vec{u}_{j})$ has its $i$-th component no greater than $\Vec{u}_{j-1}(i)$. Clearly $\Vec{u}_{j}(i) - \Vec{u}_{j-1}(i) \le \norm{T}$, so $k \ge (\Vec{u}_k(i) - \Vec{u}(i)) / \norm{T} > |Q|$. By the Pigeonhole Principle there exists $j_1 < j_2$ such that $q_{j_1} = q_{j_2}$. The sub-run from $q_{j_1}(\Vec{u}_{j_1})$ to $q_{j_2}(\Vec{u}_{j_2})$ is a cycle which has positive effect in its $i$-th component, contradicting to the assumption that $i\notin\Supp(\CycleSpace(G))$.

    We have shown that $p(\Vec{u}) \xrightarrow{*}_{\mathbb{L}_i[B, C]} q(\Vec{v})$ where $B = \Vec{u}(i) - |Q|\cdot \norm{T}$ and $C = \Vec{u}(i) + |Q|\cdot \norm{T}$. Note that $C - B = 2\cdot |Q| \cdot \norm{T} \le |G|^{O(1)}$. The result follows directly from Lemma \ref{lem:bounded-counter-reduce-dim}.
\end{proof}

Now we prove Lemma \ref{lem:flat-near-axes-subset}.

\begin{proof}[Proof of Lemma \ref{lem:flat-near-axes-subset}]
    We prove by induction on $|\mathcal{I}|$. The base case where $|\mathcal{I}| = 1$ follows from Lemma \ref{lem:bounded-counter-reduce-dim} by taking $B = 0$ and $C = D$. Now assume that $\mathcal{I} = \{i\} \uplus \mathcal{I}'$ where $i \in \mathcal{I}$ is arbitrary and $\mathcal{I}' = \mathcal{I} \setminus\{i\}$ is non-empty. Fix $p(\Vec{u}), q(\Vec{v}) \in Q\times \mathbb{N}^d$ with $p(\Vec{u}) \xrightarrow{*}_{\mathbb{L}_{\mathcal{I}}[D]} q(\Vec{v})$, we will exhibit an LPS $\Lambda$ compatible to $G$ such that $p(\Vec{u}) \xrightarrow{\Lambda} q(\Vec{v})$.

    Fix a run $\pi$ witnessing $p(\Vec{u}) \xrightarrow{*}_{\mathbb{L}_{\mathcal{I}}[D]} q(\Vec{v})$ such that no configuration occurs more than once on $\pi$. 
    We assume that $\pi$ must visit both $\mathbb{L}_i[D]$ and $\mathbb{L}_{\mathcal{I}'}[D]$, otherwise we can directly apply the induction hypothesis.
    Observe that $\pi$ can be written in the following form as an alternating sequence of sub-runs lie in either $\mathbb{L}_i[D]$ or $\mathbb{L}_{\mathcal{I}'}[D]$:
    \begin{equation}
        \pi : p(\Vec{u}) \xrightarrow{\sigma}_{\mathbb{R}_0} 
            q_0(\Vec{u}_0) \xrightarrow{\pi_1}_{\mathbb{R}_1} 
            q_1(\Vec{u}_1) \xrightarrow{\pi_2}_{\mathbb{R}_2} \cdots 
            \xrightarrow{\pi_{n}}_{\mathbb{R}_{n}} 
            q_{n}(\Vec{u}_{n}) \xrightarrow{\rho}_{\mathbb{R}_{n+1}} 
            q(\Vec{u})
    \end{equation}
    where 
    \begin{itemize}
        \item $\mathbb{R}_h \in \{\mathbb{L}_{i}[D], \mathbb{L}_{\mathcal{I}'}[D]\}$ for $h = 0, \ldots, n+1$, and $\mathbb{R}_h \ne \mathbb{R}_{h - 1}$ for $h\in [n+1]$,
        \item $\Vec{u}_h \in \mathbb{J}_{i\ell}$ for some (not necessarily unique) $\ell\in \mathcal{I}'$ for each $h = 0, \ldots, n$, where 
        \begin{equation}
            \mathbb{J}_{i\ell} := \{\Vec{w} \in \mathbb{N}^d : \Vec{w}(i), \Vec{w}(\ell) \le D + \norm{T}\}.
        \end{equation}
        (Note that we increase the bound by $\norm{T}$ so that any one-step transition from $\mathbb{L}_i[D]$ to $\mathbb{L}_\ell[D]$ lies completely in $\mathbb{J}_{i\ell}$.)
    \end{itemize}
    
    Note that the prefix $\sigma$ and the suffix $\rho$ can be captured by LPSes of length $(|G| + D)^{O(1)}$ by induction hypothesis. In the following we show that the infix $\overline{\pi} := \pi_1\pi_2\ldots \pi_n$ can also be captured by an LPS $\overline{\Lambda}$ of length $(|G| + D)^{O(1)}$. Then the entire run $\pi$ is captured by a concatenation of these three LPSes.

    Let $\pi_h$ be any sub-run of $\overline{\pi}$ such that $\mathbb{R}_h = \mathbb{L}_i[D]$ where $h \in [n]$. Expand $\pi_h$ as 
    \begin{equation}
        \pi_h : q_{h-1}(\Vec{u}_{h-1}) =: s_0(\Vec{v}_0) \xrightarrow{t_1}_{\mathbb{L}_i[D]} s_1(\Vec{v}_1) \xrightarrow{t_2}_{\mathbb{L}_i[D]} \cdots \xrightarrow{t_m}_{\mathbb{L}_i[D]} s_m(\Vec{v}_m) := q_{h}(\Vec{u}_{h}),
    \end{equation}
    where $t_1, \ldots, t_m \in T$ are transitions.
    Denote $D' := (D + \norm{T}) + |Q|\cdot D\cdot \norm{T}$,
    we classify $\pi_h$ into the following types:
    \begin{description}
        \item[Type 1:] \emph{There exists $\ell \in \mathcal{I}'$ such that $\Vec{v}_k(\ell) \le D'$ for all $k = 0, \ldots, m$.} 
        
        In this case, $\pi_h$ is actually an $\mathbb{L}_{\mathcal{I}'}[D']$-run.

        \item[Type 2($\ell$), where $\ell \in \mathcal{I}'$:] \emph{There exists $k\in [m]$ such that $\Vec{v}_k(\ell) > D'$, and $\Vec{u}_{h-1} = \Vec{v}_0 \in \mathbb{J}_{i\ell}$.}
        
        In this case $\Vec{v}_k(\ell) - \Vec{v}_0(\ell) > D' - (D + \norm{T}) = |Q|\cdot D\cdot \norm{T}$.
        Let $k$ be minimal such that $\Vec{v}_k(\ell) > D'$.
        For each $\kappa < k$ we define the index $\Next(\kappa)$ as the minimal index $\kappa'$ with $\kappa < \kappa' \le k$ and $\Vec{v}_{\kappa'}(\ell) > \Vec{v}_{\kappa}(\ell)$.
        Let $\kappa_1, \kappa_2, \ldots, \kappa_M$ be the sequence of indices such that $\kappa_1 = 0$ and $\kappa_{j + 1} = \Next(\kappa_j)$ for all $j = 1, \ldots, M - 1$, and $\kappa_M = k$. We must have $M > |Q| \cdot D$.
        Since $\Vec{v}_{\kappa_j}(i) \le D$ for all $\kappa_j$ (recall $\pi_h$ stays in $\mathbb{L}_i[D]$), by the Pigeonhole Principle there exist $1\le \kappa_{a} < \kappa_{b} \le k$ such that 
        \begin{equation}
            s_{\kappa_a} = s_{\kappa_{b}}, \quad 
            \Vec{v}_{\kappa_a}(i) = \Vec{v}_{\kappa_{b}}(i), \quad 
            \Vec{v}_{\kappa_a}(\ell) < \Vec{v}_{\kappa_{b}}(\ell).
        \end{equation}
        Therefore, the existence of a Type 2($\ell$) sub-run $\pi_h$ implies that $\CycleSpace(G)$ contains a vector $\Vec{r}$ such that $i\notin\Supp(\Vec{r})$ and $\ell \in \Supp(\Vec{r})$.
    \end{description}

    Recall that the run $\pi_h$ must start from a place in $\mathbb{J}_{i\ell}$ for some $\ell \in \mathcal{I}'$. So $\pi_h$ is either of Type 1 or of Type 2($\ell$) (or both, but we let Type 1 take precedence).

    If all the $\mathbb{L}_i[D]$-sub-runs $\pi_h$ are of Type 1, then $\overline{\pi}$ is indeed an $\mathbb{L}_{\mathcal{I}'}[D']$-run, and we are done by applying the induction hypothesis with $D' \le (|G| + D)^{O(1)}$ in place of $D$.
    Next we assume there exist Type 2($\ell$) sub-runs. By considering $\mathbb{L}_{\mathcal{I}'}[D']$ instead of $\mathbb{L}_{\mathcal{I}'}[D]$, we can rewrite $\overline{\pi}$ in the following form to get rid of all Type 1 sub-runs in $\mathbb{L}_i[D]$:
    \begin{align}
        \overline{\pi} : q_0(\Vec{u}_0) = 
        q_0'(\Vec{u}_0') \xrightarrow{\pi_1'}_{\mathbb{R}_1'} 
        q_1'(\Vec{u}_1') \xrightarrow{\pi_2'}_{\mathbb{R}_2'} \cdots 
        \xrightarrow{\pi_{N}'}_{\mathbb{R}_{N}'} 
        q_{N}'(\Vec{u}_{N}') 
        = q_n(\Vec{u}_n)
    \end{align}
    where 
    \begin{itemize}
        \item $\mathbb{R}_h' \in \{\mathbb{L}_{i}[D], \mathbb{L}_{\mathcal{I}'}[D']\}$ for $h = 0, \ldots, N+1$, and $\mathbb{R}_h' \ne \mathbb{R}_{h - 1}'$ for $h\in [N+1]$,
        \item $\Vec{u}_h' \in \mathbb{J}_{i\ell}$ for some (not necessarily unique) $\ell\in \mathcal{I}'$ for each $h = 0, \ldots, N$,
        \item for each $h \in [N]$ such that $\mathbb{R}_h' = \mathbb{L}_{i}[D]$, $\pi_h$ is of Type 2($\ell$) for some $\ell \in \mathcal{I}'$.
    \end{itemize}
    Let 
    \begin{equation}
        \mathcal{H}_\ell = \{h\in [N] : \mathbb{R}_h' = \mathbb{L}_i[D] \text{ and $\pi_h'$ is of Type 2($\ell$)}\}
    \end{equation}
    be the set of indices of those $\mathbb{L}_i[D]$-sub-runs that are of Type 2($\ell$). There remain two cases:

    \begin{description}
        \item[Case 1.] \emph{For all $\ell \in \mathcal{I}'$, we have $|\mathcal{H}_\ell| \le |Q| \cdot (D + \norm{T})^2$.}
        
        In this case, the total number of indices $h\in [N]$ such that $\mathbb{R}_{h} = \mathbb{L}_i[D]$ is at most $|\mathcal{I}'|\cdot |Q| \cdot (D + \norm{T})^2 \le d\cdot |Q| \cdot (D + \norm{T})^2$ and thus $N \le 2\cdot d \cdot |Q|(D + \norm{T})^2 + 1 \le (|G| + D)^{O(1)}$. Since by induction hypothesis each sub-run $\pi_h$ where $h \in [N]$ can be captured by an LPS $\Lambda_h$ of size $(|G| + D)^{O(1)}$, we are done by taking $\overline{\Lambda}$ as the concatenation of all $\Lambda_h$'s.

        \item[Case 2.] \emph{There exists $\ell \in \mathcal{I}'$ such that $|\mathcal{H}_\ell| > |Q| \cdot (D + \norm{T})^2$.}
        
        In this case, consider the following set of configurations:
        \begin{equation}
            \mathcal{C}_\ell := \{ q'_{h-1}(\Vec{u}'_{h-1}) : h\in \mathcal{H}_\ell \}
        \end{equation}
        which are the starting configurations of the Type 2($\ell$) sub-runs. By definition, we have $s(\Vec{v}) \in \mathbb{J}_{i\ell}$ for all $s(\Vec{v}) \in \mathcal{C}_\ell$. Since $|\mathcal{C}_\ell| = |\mathcal{H}_\ell| > |Q| \cdot (D + \norm{T})^2$, by the Pigeonhole Principle there exists two configurations $s_1(\Vec{v}_1), s_2(\Vec{v}_2) \in \mathcal{C}_\ell$ such that 
        \begin{equation}
            s_1 = s_2, \quad \Vec{v}_1(i) = \Vec{v}_2(i), \quad \Vec{v}_1(\ell) = \Vec{v}_2(\ell), \quad \Vec{v}_1 \ne \Vec{v}_2, \quad \text{and} \quad 
            s_1(\Vec{v}_1) \xrightarrow{*} s_2(\Vec{v}_2).
        \end{equation}
        This implies that $\Vec{r} = \Vec{s}_2 - \Vec{s}_1 \in \CycleSpace(G)$ is non-zero, and $i, \ell \notin \Supp(\Vec{r})$. On the other hand, the existence of Type 2($\ell$) sub-runs also implies that there exists $\Vec{r}'\in\CycleSpace(G)$ such that $i\notin \Supp(\Vec{r}')$ and $\ell \in \Supp(\Vec{r}')$. Clearly $\Vec{r}$ and $\Vec{r}'$ are linearly independent. As $G$ is geometrically $2$-dimensional, we have $\CycleSpace(G) = \Span\{\Vec{r}, \Vec{r}'\}$ and thus $i\notin \Supp(\CycleSpace(G))$. This situation is already solved by Lemma \ref{lem:flat-cyc-degenerate}.
        \qedhere
    \end{description}
\end{proof}

\subsection{Putting All Together: Proof of Theorem \ref{thm:geo-2vass-flat-lps}}

Having established Lemmas \ref{lem:flat-high-cycles}, \ref{lem:flat-high-runs} and \ref{lem:flat-near-axes}, we are now at a position to prove Theorem \ref{thm:geo-2vass-flat-lps}.

\flatGeoTwoVASS*

\begin{proof}[Proof of Theorem \ref{thm:geo-2vass-flat-lps}]
    Let $D$ be as in Lemmas \ref{lem:flat-high-cycles} and \ref{lem:flat-high-runs}. We define the following regions:
    \begin{equation}
        \mathbb{L} := \mathbb{L}[D + \norm{T}], \qquad
        \mathbb{O} := [D, \infty)^d
    \end{equation}
    where $\mathbb{L}[D + \norm{T}]$ is defined in Section \ref{sec:flat-low-runs}.

    Let $p(\Vec{u}), q(\Vec{v})$ be two configurations of $G$ such that $p(\Vec{u}) \xrightarrow{*} q(\Vec{v})$. We show that there exists an LPS $\Lambda$ compatible to $G$ such that $p(\Vec{u}) \xrightarrow{\Lambda} q(\Vec{v})$ and $|\Lambda| \le |G|^{O(1)}$.

    Let $\pi$ be a run from $p(\Vec{u})$ to $q(\Vec{v})$. We expand $\pi$ as the following form:
    \begin{equation}
        \pi: p(\Vec{u}) = q_0(\Vec{u}_0) \xrightarrow{t_1} q_1(\Vec{u}_1) \xrightarrow{t_2} \ldots \xrightarrow{t_n} q_n(\Vec{u}_n) = q(\Vec{v})
    \end{equation}
    where $t_1, \ldots, t_n \in T$. Define $K = \{k \in \{0, \ldots, n\} : \Vec{u}_k \in \mathbb{L} \cap \mathbb{O}\}$. For $k \in K$ let $\Next(k)$ be the minimal index $k' \in K$ such that $k' > k$ (if $k$ is the maximum element in $K$ then we set $\Next(k) = k$), and let $\ell(k)$ be the maximal index $k' \in K$ such that $q_k = q_{k'}$. We rewrite $\pi$ into the following form:
    \begin{align}
    \begin{split}
        \pi: 
        q_0(\Vec{u}_0) 
        &   \xrightarrow{\alpha_0}_{\mathbb{R}_0} q_{k_1}(\Vec{u}_{k_1}) 
            \xrightarrow{\gamma_1} q_{\ell(k_1)}(\Vec{u}_{\ell(k_1)})\\
        &   \xrightarrow{\alpha_1}_{\mathbb{R}_1} q_{k_2}(\Vec{u}_{k_2}) 
            \xrightarrow{\gamma_2} q_{\ell(k_2)}(\Vec{u}_{\ell(k_2)})\\
        &   \cdots \\
        &   \xrightarrow{\alpha_{h-1}}_{\mathbb{R}_{h-1}} q_{k_h}(\Vec{u}_{k_h}) 
            \xrightarrow{\gamma_h} q_{\ell(k_h)}(\Vec{u}_{\ell(k_h)})
            \xrightarrow{\alpha_h}_{\mathbb{R}_h} q_{n}(\Vec{u}_n)
    \end{split}
    \end{align}
    where 
    \begin{itemize}
        \item $k_1 = \min K$, $k_{i + 1} = \Next(\ell(k_i))$ for all $1\le i < h$, $\ell(k_h) = \max K$;
        \item $\mathbb{R}_0, \ldots, \mathbb{R}_h \in \{ \mathbb{L} , \mathbb{O} \}$.
    \end{itemize}

    By Lemma \ref{lem:flat-high-runs} and \ref{lem:flat-near-axes}, each $\alpha_i$ can be captured by an LPS $\Lambda_i$ such that $|\Lambda_i|\le |G|^{O(1)}$. By Lemma \ref{lem:flat-high-cycles}, each $\gamma_i$ is also captured by an LPS $\Gamma_i$ such that $|\Gamma_i|\le |G|^{O(1)}$. Note that $h \le |Q|$. Let
    \begin{equation}
        \Lambda := \Lambda_0\Gamma_1\Lambda_1 \ldots \Gamma_h\Lambda_h.
    \end{equation}
    Then $\Lambda$ is compatible to $G$ and $p(\Vec{u}) \xrightarrow{\Lambda} q(\Vec{v})$. As for the length of $|\Lambda|$, just note that $|\Lambda| \le (2|Q| + 1) \cdot |G|^{O(1)} \le |G|^{O(1)}$.
\end{proof}

\section{Proof of Lemma \ref{lem:LPS-system}}

\lemLPSSystem*

\begin{proof}
    The ``only if'' part is straightforward: assuming $p(\Vec{u}) \xrightarrow{\Lambda} q(\Vec{v})$, there exist numbers $e_1, \ldots, e_k\in\mathbb{N}_{>0}$ such that $\pi: p(\Vec{u}) \xrightarrow{\alpha_0\beta_1^{e_1}\alpha_1\ldots \beta_k^{e_k}\alpha_k} q(\Vec{v})$. Let $\Vec{e} \in\mathbb{N}^k$ such that $\Vec{e}(i) := e_i$, one easily sees that $(\Vec{u}, \Vec{e}, \Vec{v}) \models \LPSSystem{\Lambda}$ since $\LPSSystem{\Lambda}$ merely asserts that a small subset of the configurations visited by the run $\pi$ stay in $\mathbb{N}^d$, and that $\Delta(\pi) = \Vec{v} - \Vec{u}$. 
    
    Now we focus on the ``if'' part. Assume $(\Vec{u}, \Vec{e}, \Vec{v}) \models \LPSSystem{\Lambda}$ for some $\Vec{e} = (e_1, \ldots, e_k) \in \mathbb{N}_{>0}^k$. We claim that 
    \begin{equation}
        p(\Vec{u}) \xrightarrow{\alpha_0\beta_1^{e_1}\alpha_1\ldots \beta_k^{e_k}\alpha_k}_{\mathbb{N}^d} q(\Vec{v}).
    \end{equation}
    It is clear that $\Vec{u} + \Delta(\alpha_0\beta_1^{e_1}\alpha_1\ldots \beta_k^{e_k}\alpha_k) = \Vec{v}$, which is asserted in the fourth condition of $\LPSSystem{\Lambda}$. To prove that every configuration along this walk stay in $\mathbb{N}^d$, we only need to check that for every $i = 1, \ldots, k$ with $e_i > 1$, every $\ell = 1, \ldots, e_i - 1$ and every $j = 1, \ldots, |\beta_i|$, 
    \begin{equation}
        \label{eq:lem-lps-system-goal}
        \Vec{u} + \Delta(\alpha_0\beta_1^{e_1}\alpha_1\ldots \beta_{i-1}^{e_{i-1}}\alpha_{i-1}\beta_i^{\ell}) + \Delta(\beta_i[1 \ldots j]) \ge \Vec{0}.
    \end{equation}
    Note that by the third condition of $\LPSSystem{\Lambda}$, we have 
    \begin{alignat}{3}
        \Delta_0 :={} &\Vec{u} + \Delta(\alpha_0\beta_1^{e_1}\alpha_1\ldots \beta_{i-1}^{e_{i-1}}\alpha_{i-1}) + 0\cdot\Delta(\beta_i) &{}+{} \Delta(\beta_i[1 \ldots j]) \ge \Vec{0},\\
        \Delta_{1} :={} &\Vec{u} + \Delta(\alpha_0\beta_1^{e_1}\alpha_1\ldots \beta_{i-1}^{e_{i-1}}\alpha_{i-1}) + (e_i-1)\cdot\Delta(\beta_i) &{}+{} \Delta(\beta_i[1 \ldots j])  \ge \Vec{0}.
    \end{alignat}
    Then 
    \begin{align}
    \begin{split}
        & \Vec{u} + \Delta(\alpha_0\beta_1^{e_1}\alpha_1\ldots \beta_{i-1}^{e_{i-1}}\alpha_{i-1}\beta_i^{\ell}) + \Delta(\beta_i[1 \ldots j]) \\
        ={} & \Vec{u} + \Delta(\alpha_0\beta_1^{e_1}\alpha_1\ldots \beta_{i-1}^{e_{i-1}}\alpha_{i-1}) + \ell\cdot\Delta(\beta_i) + \Delta(\beta_i[1 \ldots j])\\
        ={} & \frac{e_i - 1 - \ell}{e_i - 1} \cdot \Delta_0 + \frac{\ell}{e_i - 1} \cdot \Delta_1
        \ge{} \Vec{0}.
    \end{split}
    \end{align}
    This proves (\ref{eq:lem-lps-system-goal}).
\end{proof}
\section{Proofs Omitted from Section \ref{sec:mod-klmst}}

\subsection{Proof of Lemma \ref{lem:unsat-klm-empty-act-lang}}

\unsatKLMEmptyActLang*

\begin{proof}
    Let $\xi=\bracket{p_0(\Vec{x}_0)G_0q_0(\Vec{y}_0)} \Lambda_1 \cdots \Lambda_k \bracket{p_k(\Vec{x}_k) G_k q_k(\Vec{y}_k)}$. If $L_\xi\ne\emptyset$, then there is a run $\pi$ admitted by $\xi$ of the form 
    \begin{equation}
        p_0(\Vec{m}_0) \xrightarrow{\pi_0} q_0(\Vec{n}_0)
        \xrightarrow{\rho_1}
        \cdots \xrightarrow{\rho_k}
        p_k(\Vec{m}_k) \xrightarrow{\pi_k} q_k(\Vec{n}_k).
    \end{equation}
    Let $\phi_i$ be the Parikh image of $\pi_i$ for each $i = 0, \ldots, k$. Since $q_{i-1}(\Vec{n}_{i-1}) \xrightarrow{\rho_i} p_i(\Vec{m}_i)$ is admitted by $\Lambda_i$, by Lemma \ref{lem:LPS-system} there exists vector $\Vec{e}_i \in \mathbb{N}^{|\Lambda_i|_*}$ such that $(\Vec{n}_{i-1}, \Vec{e}_i, \Vec{m}_i) \models \LPSSystem{\Lambda_i}$. Now the sequence $(\Vec{m}_0, \phi_0, \Vec{n}_0), \Vec{e}_1, \ldots, \Vec{e}_k, (\Vec{m}_k, \phi_k, \Vec{n}_k)$ is a model of $\KLMSystem{\xi}$.
\end{proof}

\subsection{Proof of Lemma \ref{lem:klm-eq-bounded-bounds}}

\klmEquationBoundedBounds*

We first recall some facts in linear algebra.

\begin{lemma}[{\cite[Prop.\ 4]{CH16}}, see also \cite{Pottier91}]
    \label{lem:CH16}
    Let $A \in \mathbb{Z}^{m \times n}$ be an integer matrix and $\Vec{b} \in \mathbb{Z}^m$ be an integer vector. Define the following two sets:
    \begin{equation}
        \Vec{X} := \left\{ \Vec{x} \in \mathbb{N}^n : A\Vec{x} = \Vec{b} \right\},
        \qquad 
        \Vec{X}_0 := \left\{ \Vec{x} \in \mathbb{N}^n : A\Vec{x} = \Vec{0} \right\}.
    \end{equation}
    Then every vector in $\Vec{X}$ can be decomposed as the sum of a vector $\Vec{x} \in \Vec{X}$ and a finite sum of vectors $\Vec{x}_0 \in \Vec{X}_0$ such that 
    \begin{equation}
        \norm{\Vec{x}} \le ((n + 1)\norm{A} + \norm{\Vec{b}} + 1)^m, \qquad 
        \norm{\Vec{x}_0} \le (n \norm{A} + 1)^m.
    \end{equation}
\end{lemma}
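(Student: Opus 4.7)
\smallskip

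The statement is the classical Pottier bound on Hilbert bases of linear Diophantine systems, so my plan is to follow the standard Hilbert-basis argument: unify $\Vec{X}$ and $\Vec{X}_0$ as two slices of a single submonoid of $\mathbb{N}^{n+1}$, invoke Dickson's lemma to get a finite generating set, and then bound the generators by a determinant/Hadamard-type estimate.

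First, introduce the lifted solution set
\begin{equation}
    S \;:=\; \bigl\{ (\Vec{x},\lambda) \in \mathbb{N}^n \times \mathbb{N} : A\Vec{x} - \lambda\Vec{b} = \Vec{0} \bigr\}.
\end{equation}
Then $S$ is a submonoid of $\mathbb{N}^{n+1}$, and its slices $\{(\Vec{x},0)\in S\}$ and $\{(\Vec{x},1)\in S\}$ recover $\Vec{X}_0$ and $\Vec{X}$ respectively. By Dickson's lemma, $S$ has a finite Hilbert basis $H$ consisting of all componentwise-minimal nonzero elements, and every element of $S$ is a finite $\mathbb{N}$-linear combination of elements of $H$.

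Second, derive the decomposition. Fix any $\Vec{y} \in \Vec{X}$, so $(\Vec{y},1) \in S$, and write $(\Vec{y},1) = \sum_{i} c_i \Vec{h}_i$ with $c_i \in \mathbb{N}_{>0}$ and $\Vec{h}_i \in H$. Reading the last coordinate, the $\lambda$-components of the $\Vec{h}_i$'s (weighted by $c_i$) sum to $1$; since they are non-negative integers, exactly one basis element $\Vec{h}^* = (\Vec{x}^*,1)$ appears with multiplicity $1$ while all others are of the form $(\Vec{x}_0^{(i)},0)$ with $\Vec{x}_0^{(i)} \in \Vec{X}_0$. This gives the desired expression $\Vec{y} = \Vec{x}^* + \sum_{i} c_i \Vec{x}_0^{(i)}$ with $\Vec{x}^* \in \Vec{X}$ and each $\Vec{x}_0^{(i)} \in \Vec{X}_0$.

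Third, bound the Hilbert basis elements, which is the only non-routine part. For the homogeneous slice, I would follow Pottier's argument: if $\Vec{x}_0$ is a minimal nonzero solution of $A\Vec{x} = \Vec{0}$ in $\mathbb{N}^n$, then over its support $J \subseteq [n]$ the columns of $A_J$ must be minimally linearly dependent, and one shows by Cramer's rule on an $|J{-}1| \times |J{-}1|$ submatrix together with an $\ell_\infty$-estimate of determinants (each $m \times m$ minor bounded by $(n\norm{A})^m$, with the $+1$ absorbing ties) that every coordinate of $\Vec{x}_0$ is at most $(n\norm{A}+1)^m$. For the inhomogeneous slice, the very same argument is applied to the augmented matrix $[A \mid -\Vec{b}] \in \mathbb{Z}^{m\times(n+1)}$, whose entrywise infinity-norm is $\max(\norm{A},\norm{\Vec{b}})$ and whose column count is $n+1$, yielding the sup-norm bound $((n+1)\norm{A} + \norm{\Vec{b}} + 1)^m$ on $\Vec{x}^*$.

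The main obstacle is the last step: the clean Hadamard-type bound on minimal solutions is not immediate and requires a careful reduction to a square subsystem together with a sign/cofactor estimate; this is exactly the content of Pottier's original paper, and I would simply invoke it (or reproduce the short induction on $|\mathrm{supp}(\Vec{x}_0)|$) rather than redo it from scratch. Everything else is bookkeeping around the lifted monoid $S$.
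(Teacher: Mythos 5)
The paper does not actually prove this lemma: it is imported verbatim from \cite[Prop.\ 4]{CH16} (ultimately Pottier \cite{Pottier91}), so there is no in-paper argument to compare against. Your skeleton is the standard and correct one: lift to the homogeneous monoid $S = \{(\Vec{x},\lambda) \in \mathbb{N}^{n+1} : A\Vec{x} = \lambda\Vec{b}\}$, use Dickson's lemma together with the closure of $S$ under componentwise subtraction to decompose any element over the finite set of minimal nonzero elements, and read off the last coordinate to isolate exactly one inhomogeneous generator. That bookkeeping is fine.

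The gap is in your third step, which is the only substantive content of the lemma. The sketch ``over the support $J$ of a minimal solution the columns of $A_J$ are minimally linearly dependent, then apply Cramer's rule'' is the argument for the \emph{extreme rays} of the cone $\{\Vec{x} \ge \Vec{0} : A\Vec{x} = \Vec{0}\}$, not for the minimal elements of its integer Hilbert basis. A minimal solution need not have a minimally dependent support: for $A = (2 \;\; 3 \;\; {-5})$ the vector $(1,1,1)$ is a minimal solution of $A\Vec{x}=\Vec{0}$ with full support, yet any single pair of columns of $A$ is already linearly dependent, so no Cramer system on a minimally dependent subset produces it. Pottier's actual proof of the $(n\norm{A}+1)^m$ bound is a descent on the $1$-norm of a hypothetical large minimal solution, not a determinant computation, so carrying out your sketch as written would fail. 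Since you explicitly fall back on citing Pottier for exactly this step, your proposal collapses to the same proof-by-citation the paper itself uses --- acceptable in context, but you should be aware that the step you deferred is the whole theorem, and the mechanism you sketched for it is not the one that works.
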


As linear inequalities can be converted to linear equations by adding new variables, we can obtain a similar result for systems of linear inequalities.

\begin{corollary}
    \label{cor:mod-pottier}
    Let $A \in \mathbb{Z}^{m \times n}$ be an integer matrix and $\Vec{b} \in \mathbb{Z}^m$ be an integer vector. Define the following two sets:
    \begin{equation}
        \Vec{X} := \left\{ \Vec{x} \in \mathbb{N}^n : A\Vec{x} \le \Vec{b} \right\},
        \qquad 
        \Vec{X}_0 := \left\{ \Vec{x} \in \mathbb{N}^n : A\Vec{x} \le \Vec{0} \right\}.
    \end{equation}
    Then every vector in $\Vec{X}$ can be decomposed as the sum of a vector $\Vec{x} \in \Vec{X}$ and a finite sum of vectors $\Vec{x}_0 \in \Vec{X}_0$ such that 
    \begin{equation}
        \norm{\Vec{x}} \le ((n + m + 1)\norm{A} + \norm{\Vec{b}} + 1)^m, \qquad 
        \norm{\Vec{x}_0} \le ((n + m) \norm{A} + 1)^m.
    \end{equation}
\end{corollary}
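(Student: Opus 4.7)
The plan is to reduce the inequality system to an equation system by the standard trick of introducing slack variables, and then apply Lemma~\ref{lem:CH16} directly. Specifically, introduce a slack vector $\Vec{s} \in \mathbb{N}^m$ and rewrite $A\Vec{x} \le \Vec{b}$ as the equation $A\Vec{x} + \Vec{s} = \Vec{b}$, or in matrix form $A'\Vec{y} = \Vec{b}$ where $A' := [A \mid I_m] \in \mathbb{Z}^{m \times (n+m)}$ and $\Vec{y} := (\Vec{x}, \Vec{s})^\top$. Since the additional columns come from the identity matrix, we have $\norm{A'} = \max\{\norm{A}, 1\}$, which may be taken equal to $\norm{A}$ whenever $\norm{A} \ge 1$ (the degenerate case $\norm{A}=0$ can be treated separately or absorbed into the bound).

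The key observation is that there is a bijection between $\Vec{X}$ and the set $\widetilde{\Vec{X}} := \{\Vec{y}\in\mathbb{N}^{n+m} : A'\Vec{y} = \Vec{b}\}$, sending $\Vec{x}$ to $(\Vec{x}, \Vec{b} - A\Vec{x})$; and similarly between $\Vec{X}_0$ and $\widetilde{\Vec{X}}_0 := \{\Vec{y}\in\mathbb{N}^{n+m} : A'\Vec{y} = \Vec{0}\}$, sending $\Vec{x}_0$ to $(\Vec{x}_0, -A\Vec{x}_0)$. Under this bijection, the projection onto the first $n$ coordinates only decreases the max norm, so any decomposition of $\widetilde{\Vec{X}}$ satisfying the bounds of Lemma~\ref{lem:CH16} projects to a decomposition of $\Vec{X}$ with the same (or smaller) norm bounds.

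Now simply invoke Lemma~\ref{lem:CH16} with the integer matrix $A'$ in place of $A$ and the dimension $n + m$ in place of $n$. Every $\Vec{y} \in \widetilde{\Vec{X}}$ decomposes as a sum $\Vec{y} = \widetilde{\Vec{x}} + \sum_i \widetilde{\Vec{x}}_{0,i}$ where $\widetilde{\Vec{x}} \in \widetilde{\Vec{X}}$ and $\widetilde{\Vec{x}}_{0,i} \in \widetilde{\Vec{X}}_0$, with
\begin{equation}
    \norm{\widetilde{\Vec{x}}} \le ((n + m + 1)\norm{A'} + \norm{\Vec{b}} + 1)^m, \qquad \norm{\widetilde{\Vec{x}}_{0,i}} \le ((n + m)\norm{A'} + 1)^m.
\end{equation}
Projecting each term onto its first $n$ coordinates yields the desired decomposition of the original $\Vec{x} \in \Vec{X}$, and the bounds stated in the corollary follow directly since $\norm{A'} = \norm{A}$.

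There is no real obstacle here: the argument is a routine slack-variable reduction. The only minor technicality is verifying that the projection preserves the linear structure (sum of summands in $\widetilde{\Vec{X}}_0$ projects to a sum of summands in $\Vec{X}_0$) and that $\norm{A'}$ equals $\norm{A}$ in the non-degenerate case; both are immediate from the block structure $A' = [A \mid I_m]$.
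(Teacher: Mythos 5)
Your proposal is correct and is essentially the paper's own argument: the paper likewise introduces slack variables to form the block system $[A \mid I_m]\binom{\Vec{x}}{\Vec{y}} = \Vec{b}$, applies Lemma~\ref{lem:CH16} to the augmented system, and projects onto the first $n$ coordinates, handling the degenerate case $\norm{A}=0$ separately. Nothing further is needed.
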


\begin{proof}
    The corollary holds trivially when $A = 0$, so we assume $A \ne 0$ here, especially $\norm{A} \ge 1$. Let's define the following sets where $I_m \in \mathbb{Z}^{m\times m}$ is the identity matrix:
    \begin{align}
        \Vec{X}^+ := \biggl\{ (\Vec{x}, \Vec{y}) \in \mathbb{N}^n \times \mathbb{N}^m : 
            \begin{bmatrix}A & I_m\end{bmatrix} 
            \begin{bmatrix} \Vec{x} \\ \Vec{y} \end{bmatrix} 
            = \Vec{b} 
        \biggr\},\\
        \Vec{X}_0^+ := \biggl\{ (\Vec{x}, \Vec{y}) \in \mathbb{N}^n \times \mathbb{N}^m : 
            \begin{bmatrix}A & I_m\end{bmatrix} 
            \begin{bmatrix} \Vec{x} \\ \Vec{y} \end{bmatrix} 
            = \Vec{0} 
        \biggr\}.
    \end{align}
    Let $\Vec{x} \in \Vec{X}$, there exists $\Vec{y} \in \mathbb{N}^m$ with $(\Vec{x}, \Vec{y}) \in \Vec{X}^+$. By Lemma \ref{lem:CH16}, there exists vectors $(\Vec{x}', \Vec{y}') \in \Vec{X}^+$ and $(\Vec{x}_1, \Vec{y}_1), \ldots, (\Vec{x}_k, \Vec{y}_k) \in \Vec{X}_0^+$ such that 
    \begin{equation}
        \Vec{x} = \Vec{x}' + \sum_{i = 1}^k \Vec{x}_i, \quad 
        \Vec{y} = \Vec{y}' + \sum_{i = 1}^k \Vec{y}_i,
    \end{equation}
    where $\norm{\Vec{x}'} \le ((n + m + 1)\norm{A} + \norm{\Vec{b}} + 1)^m$, and $ 
    \norm{\Vec{x}_i} \le ((n + m)\norm{A} + 1)^m $ for all $i = 1, \ldots, k$.
    Clearly $\Vec{x}' \in \Vec{X}$ and $\Vec{x}_i \in \Vec{X}_0$ for all $i = 1, \ldots, k$. This proves the desired result.
\end{proof}

The following lemma is just an application of Corollary \ref{cor:mod-pottier} to the characteristic systems of linear KLM sequences:

\begin{lemma}
    \label{lem:decomp-model-eq-klm}
    Let $\xi$ be a linear KLM sequence. Then any model of $\KLMSystem{\xi}$ can be decomposed into the sum of a model $\Vec{h}$ of $\KLMSystem{\xi}$ and a finite sum of models $\Vec{h}_0$ of $\KLMSystemHomo{\xi}$, such that
    \begin{equation}
        \norm{\Vec{h}}, \norm{\Vec{h}_0} \le (10|\xi|)^{12|\xi|}.
    \end{equation}
\end{lemma}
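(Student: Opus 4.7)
The plan is to flatten $\KLMSystem{\xi}$ into a single standard system $A\Vec{x}\le \Vec{b}$ of linear inequalities over non-negative integer unknowns, observe that the associated homogeneous system $A\Vec{x}\le \Vec{0}$ coincides with $\KLMSystemHomo{\xi}$, and then apply Corollary \ref{cor:mod-pottier}.

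I would begin by concatenating all solution components into a single vector $\Vec{x}\in \mathbb{N}^n$ whose entries are those of $(\Vec{m}_i, \phi_i, \Vec{n}_i)$ for $0\le i\le k$ and those of $\Vec{e}_j$ for $1\le j\le k$, giving $n\le |\xi|$ by direct bookkeeping against the definition of $|\xi|$. The constraints listed in Definition \ref{def:eq-klm} are: the Kirchhoff equations for each $\phi_i$; the coupling $\Vec{n}_i = \Vec{m}_i + \Delta(\phi_i)$; the $\sqsubseteq$-equations pinning $\Vec{m}_i(j)$ and $\Vec{n}_i(j)$ to their finite values in $\Vec{x}_i,\Vec{y}_i$; and the inequalities of each $\LPSSystem{\Lambda_j}$ spelled out in Definition \ref{def:eq-lps}. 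Splitting each equation into two inequalities, the total number of rows is $m = O(|\xi|)$.

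Next I would bound the coefficients. Every non-zero entry of $A$ is either $\pm 1$ (from Kirchhoff indicators, coupling, or $\sqsubseteq$-equations) or a component of $\Delta(\beta)$ for some cycle $\beta$ appearing in some $\Lambda_j$, of magnitude at most $|\Lambda_j|\cdot \norm{\Lambda_j}\le |\xi|$. The entries of $\Vec{b}$ are either components of $\Vec{x}_i,\Vec{y}_i$ or partial-path effects such as $\Delta(\alpha_0\ldots \alpha_{j-1}) + \Delta(\alpha_j[1\ldots \ell])$, all bounded by $|\xi|$. I would then verify that zeroing $\Vec{b}$ recovers $\KLMSystemHomo{\xi}$: the $\sqsubseteq$-equations collapse to $\Vec{m}_i(j) = \Vec{n}_i(j) = 0$ at finite components; the Kirchhoff equations become $K_{G_i, p_i, q_i}^0$; and once the $\Delta(\alpha)$- and $\Delta(\beta[1\ldots\ell])$-constants are stripped, the positional inequalities within each $\alpha_i$ or $\beta_i$ of $\LPSSystem{\Lambda_j}$ all share a common linear part $\Vec{u}_0 + \sum_{\ell\le i}\Delta(\beta_\ell)\Vec{e}_0(\ell)\ge \Vec{0}$, which is exactly the content of $\LPSSystemHomo{\Lambda_j}$ from Definition \ref{def:eq-lps-0}.

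With $n, m, \norm{A}, \norm{\Vec{b}}\le |\xi|$ in hand, Corollary \ref{cor:mod-pottier} yields the decomposition with component norms bounded by $((n+m+1)\norm{A} + \norm{\Vec{b}} + 1)^m \le (3|\xi|^2+2)^{|\xi|}$, which is comfortably absorbed into the stated $(10|\xi|)^{12|\xi|}$. The main obstacle is the homogeneous-system verification just above: one must check that no positional constraint of $\LPSSystem{\Lambda_j}$ produces an extra homogeneous inequality beyond those already present in $\LPSSystemHomo{\Lambda_j}$, and that the conditions $\Vec{e}(i)\ge 1$ do not interfere (they trivialize to $\Vec{e}_0(i)\ge 0$, which is automatic for $\Vec{e}_0\in \mathbb{N}^k$). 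Once this alignment is in place, the statement is a direct consequence of the linear-algebraic corollary.
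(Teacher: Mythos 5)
Your proposal is correct and follows essentially the same route as the paper: flatten $\KLMSystem{\xi}$ into a system $A\Vec{x}\le\Vec{b}$, check that its homogenization coincides with $\KLMSystemHomo{\xi}$ (including the observation that the positional constraints within each $\alpha_i$ and $\beta_i$ collapse to the common inequalities of $\LPSSystemHomo{\Lambda_j}$), and invoke Corollary \ref{cor:mod-pottier}. The only quibble is bookkeeping: the paper counts $n\le 2|\xi|$ variables and $m\le 6|\xi|$ rows rather than your $n\le|\xi|$ and the $m=|\xi|$ implicitly used in your exponent, but the target bound $(10|\xi|)^{12|\xi|}$ absorbs the corrected counts anyway.
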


\begin{proof}
    Suppose $\xi = \bracket{p_0(\Vec{x}_0)G_0q_0(\Vec{y}_0)} \Lambda_1 \cdots \Lambda_k \bracket{p_k(\Vec{x}_k) G_k q_k(\Vec{y}_k)}$ where $G_i = (Q_i, T_i)$ for each $i = 0, \ldots, k$. In order to apply Corollary \ref{cor:mod-pottier} to $\KLMSystem{\xi}$, we need to bound the following four parameters:
    \begin{itemize}
        \item The number of variables in $\KLMSystem{\xi}$ is 
            \begin{equation}
                2d\cdot(k + 1) + \sum_{i = 0}^{k}|T_i| + \sum_{i = 1}^{k} |\Lambda_i|_* \le 2|\xi|.
            \end{equation}
        \item The number of inequalities in $\KLMSystem{\xi}$ is bounded by 
            \begin{equation}
                6d\cdot (k + 1) + 2\sum_{i = 0}^{k}|T_i| + \sum_{i = 1}^{k}(|\Lambda_i|_* + d\cdot (2|\Lambda_i| + 2)) \le 6|\xi|.
            \end{equation}
            To see this, note that each Kirchoff system $K_{G_i, p_i, q_i}$ consists of $|T_i|$ equations; each $\LPSSystem{\Lambda_i}$ consists of at most $|\Lambda_i|_* + 2d \cdot |\Lambda_i|$ inequalities and one equation; and the constraints $\Vec{m}_i \sqsubseteq \Vec{x}_i$, $\Vec{n}_i \sqsubseteq \Vec{y}_i$, and $\Vec{n}_i = \Vec{m}_i + \Delta(\phi_i)$ each consists of $d$ equations. Also remember that one equation corresponds to two inequalities.
        \item The maximum absolute value of the coefficients in $\KLMSystem{\xi}$ is bounded by 
            \begin{equation}
                \max\left\{ 1, \max_{0\le i\le k} \norm{T_i}, \max_{1\le i\le k} \norm{\Lambda_i} \right\} \le |\xi|.
            \end{equation}
            Note that the coefficients in $K_{G_i, p_i, q_i}$ and in constraints $\Vec{m}_i \sqsubseteq \Vec{x}_i$, $\Vec{n}_i \sqsubseteq \Vec{y}_i$ are just $\pm 1$. The coefficients in $\LPSSystem{\Lambda_i}$ are bounded by $\norm{\Lambda_i}$, and that in the constraints $\Vec{n}_i = \Vec{m}_i + \Delta(\phi_i)$ are bounded by $\norm{T_i}$.
        \item The maximum absolute value of the constant terms in $\KLMSystem{\xi}$ is bounded by 
            \begin{equation}
                \max\left\{ 1, \max_{0\le i\le k} \norm{\Vec{x}_i}, \max_{0\le i\le k} \norm{\Vec{y}_i}, \max_{1\le i\le k} \norm{\Lambda_i} \right\} \le |\xi|.
            \end{equation}
            Note that the constraints $\Vec{n}_i = \Vec{m}_i + \Delta(\phi_i)$ contain no constant terms. The constant terms in constraints $\Vec{m}_i \sqsubseteq \Vec{x}_i$, $\Vec{n}_i \sqsubseteq \Vec{y}_i$ are the non-$\omega$ components of $\Vec{x}_i$ and $\Vec{y}_i$, and that in $K_{G_i, p_i, q_i}$ are just $\pm 1$. Finally, the constant terms in $\LPSSystem{\Lambda_i}$ are bounded by $\norm{\Lambda_i}$.
    \end{itemize}

    Now by Corollary \ref{cor:mod-pottier} we can bound $\norm{\Vec{h}}$ and $\norm{\Vec{h}_0}$ by 
    \begin{equation*}
        \norm{\Vec{h}}, \norm{\Vec{h}_0} \le ((2|\xi| + 6|\xi| + 1)\cdot |\xi| + |\xi| + 1)^{6|\xi|} \le (10 |\xi|)^{12|\xi|}.
        \qedhere
    \end{equation*}
\end{proof}

Lemma \ref{lem:klm-eq-bounded-bounds} now follows immediately from Lemma \ref{lem:decomp-model-eq-klm}.

\subsection{Proof of Lemma \ref{lem:decom-scc}}

\decomSCC*

\begin{proof}
    For each tuple $\zeta = \bracket{p(\Vec{x})Gq(\Vec{y})}$ occurring in $\xi$ where $G$ is not strongly connected, we replace it by all possible sequences of the form 
    \begin{equation}
        \zeta' = \bracket{p(\Vec{x}) G_0 r_0(\Vec{\omega})} t_1 \bracket{s_1(\Vec{\omega}) G_1 r_1(\Vec{\omega})} t_2\ldots \bracket{s_{n-1}(\Vec{\omega}) G_{n-1} r_{n-1}(\Vec{\omega})} t_n\bracket{s_n(\Vec{\omega}) G_n q(\Vec{y})}
    \end{equation}
    where $G_0, \ldots, G_n$ are distinct strongly connected components of $G$ connected by transitions $t_1, \ldots, t_n$. Clearly the action language is preserved, and the rank does not increase. As for the size, note that $n \le |\zeta|$ and thus
    \begin{equation}
        |\zeta'| = \sum_{i = 0}^{n} |G_i| + d\cdot(\norm{\Vec{x}} + \norm{\Vec{y}} + 1) + d\cdot n + \sum_{i = 1}^{n} d\cdot (\norm{t_i} + 1) \le |\zeta| + 2nd \le (2d + 1)|\zeta|.
    \end{equation}
    This bounds the amplification in sizes.
\end{proof}

\subsection{Proof of Lemma \ref{lem:decom-pure}}

\decomPure*

\begin{proof}
    Since $\xi$ is strongly connected, by Corollary \ref{cor:sc-vass-rank0-iff-geo-2d}, to decide whether $\xi$ is pure it is enough to check whether there is a non-trivial tuple $\bracket{p(\Vec{x}) G q(\Vec{y})}$ in $\xi$ such that $G$ is geometrically $2$-dimensional, i.e. $\dim(\CycleSpace(G)) \le 2$. Observe that $\CycleSpace(G)$ is indeed spanned by the effects of all simple cycles in $G$, thus its dimension can be computed by enumerating the simple cycles, which can be done in $\mathsf{PSPACE}$.

    Suppose $\xi$ is not pure, then for each non-trivial tuple $\bracket{p(\Vec{x}) G q(\Vec{y})}$ occurring in $\xi$ such that $\Rank(G) = \Vec{0}$, we replace it by all possible sequences of the form
    \begin{equation}
        \label{eq:impure-to-trivial-lps}
        \bracket{p(\Vec{x})} \Lambda \bracket{q(\Vec{y})}
    \end{equation}
    where $\Lambda$ is a positive LPS compatible to $G$ from state $p$ to $q$ satisfying $|\Lambda| \le |G|^{O(1)}$ where the $O(1)$-term comes from Theorem \ref{thm:geo-2vass-flat-plps}. Let the resulting set of linear KLM sequences be $\Xi$, we claim that $\Xi$ satisfies the lemma. Clearly every sequence $\xi'$ in $\Xi$ is pure. Note that $\Rank(\xi') = \Rank(\xi)$ since we only replace rank-$\Vec{0}$ tuples by rank-$\Vec{0}$ sequences. Also note that as $|\Lambda| \le |G|^{O(1)}$ and $\norm{\Lambda} \le |G|$ (because $\Lambda$ is compatible to $G$), we have $|\xi'| \le |\xi|^{O(1)}$ for all $\xi'\in \Xi$. Now we focus on the action languages.

    Since we replace rank-$\Vec{0}$ VASSes by positive LPSes compatible to them, every run admitted by the LPS must be admitted by the original VASS. Thus the inclusion $\bigcup_{\xi' \in \Xi}L_{\xi'} \subseteq L_\xi$ is clear.
    Let $\xi$ be given by $\bracket{p_0(\Vec{x}_0)G_0q_0(\Vec{y}_0)} \Lambda_1 \cdots \Lambda_k \bracket{p_k(\Vec{x}_k) G_k q_k(\Vec{y}_k)}$. If $L_\xi \ne \emptyset$, consider a run $\pi = \pi_0\rho_1\pi_1\ldots \rho_k\pi_k$ admitted by $\xi$ of the form
    \begin{equation}
        p_0(\Vec{m}_0) \xrightarrow{\pi_0} q_0(\Vec{n}_0) \xrightarrow{\rho_1} p_1(\Vec{m}_1) \xrightarrow{\pi_1} q_1(\Vec{n}_1) \xrightarrow{\rho_2} \cdots \xrightarrow{\rho_k} p_k(\Vec{m}_k) \xrightarrow{\pi_k} q_k(\Vec{n}_k)
    \end{equation}
    where $\Vec{m}_i \sqsubseteq \Vec{x}_i$ and $\Vec{n}_i \sqsubseteq \Vec{y}_i$. We will exhibit a run $\overline{\pi}$ admitted by some $\xi' \in \Xi$. For each $i = 0, \ldots, k$, if $\Rank(G_i) \ne \Vec{0}$ or if the KLM tuple $\xi_i$ is trivial, we let $\overline{\pi_i} = \pi_i$. Otherwise, $\Rank(G_i) = \Vec{0}$ and $G_i$ is geometrically $2$-dimensional by Corollary \ref{cor:sc-vass-rank0-iff-geo-2d}. Since $p_i(\Vec{m}_i)\xrightarrow{\pi_i} q_i(\Vec{n}_i)$ holds in $G_i$, by Theorem \ref{thm:geo-2vass-flat-plps} there is a positive LPS $\Lambda_i$ compatible to $G_i$ with $|\Lambda_i| \le |G|^{O(1)}$ such that $p_i(\Vec{m}_i)\xrightarrow{\Lambda_i} q_i(\Vec{n}_i)$. In this case we let $\overline{\pi_i}$ be a path admitted by $\Lambda_i$ so that $p_i(\Vec{m}_i)\xrightarrow{\overline{\pi_i}} q_i(\Vec{n}_i)$. Now observe that $\overline{\pi} := \overline{\pi_0}\rho_1\overline{\pi_1}\ldots \rho_k\overline{\pi_k}$ is admitted by some $\xi' \in \Xi$, witnessing $\bigcup_{\xi'\in\Xi}L_{\xi'}\ne\emptyset$.
\end{proof}

\subsection{Proof of Lemma \ref{lem:cleaning}}

\lemCleaning*

\begin{proof}
    We first apply Lemma \ref{lem:decom-scc} to $\xi$ to obtain a set of strongly connected linear KLM sequences with the action language preserved. Then we apply Lemma \ref{lem:decom-pure} to make them pure. The action language is partially preserved. These linear KLM sequences are then saturated using Lemma \ref{lem:decom-satur}, which preserves action language. Finally, as unsatisfiable sequences have empty action languages by Lemma \ref{lem:unsat-klm-empty-act-lang}, we can safely remove them. The ranks of the linear KLM sequences never increased. As for the sizes, note that Lemma \ref{lem:decom-scc} increases $|\xi|$ linearly, while Lemma \ref{lem:decom-pure} increases $|\xi|$ by a polynomial function. After that, Lemma \ref{lem:decom-satur} produces sequences of size bounded by $\PolyF{|\xi|}^{\PolyF{|\xi|}} \le |\xi|^{\PolyF{|\xi|}}$.
\end{proof}

\subsection{Proof of Lemma \ref{lem:pure-rank-full-decrease-imply-rank-decrease}}

\pureRankDecrease*

\begin{proof}
    Let $\RankFull(\xi) = (r_d, \ldots, r_0)$. By the definition of pure sequences and Lemma \ref{lem:sc-vass-cycspace}, we must have $r_2 = r_1 = r_0 = 0$. Now let $\xi'$ be a linear KLM sequence with $\RankFull(\xi') \ltlex \RankFull(\xi)$. Suppose $\RankFull(\xi') = (r_d', \ldots, r_0')$, then there exists $3\le i\le d$ with $r_i' < r_i$ and $r_j' = r_j$ for all $i < j \le d$. This implies $\Rank(\xi')\ltlex \Rank(\xi)$.
\end{proof}

\subsection{Proof of Lemma \ref{lem:klm-seq-decomposition}}

\klmSeqDecomposition*

\begin{proof}
    If a clean linear KLM sequence $\xi$ is not normal, it is either bounded, unpumpable, or not rigid. We apply one of Lemmas \ref{lem:decom-unb}, \ref{lem:decom-rigid}, and \ref{lem:decom-pump} to decompose $\xi$ into a set $\Xi$ of linear KLM sequences such that $\bigcup_{\xi'\in\Xi} L_{\xi'} = L_\xi$, and $\Rank(\xi') \ltlex \Rank(\xi)$ and $|\xi'| \le |\xi|^{O(|\xi|)}$ for every $\xi'\in\Xi$. Then apply Lemma \ref{lem:cleaning} to clean each sequence $\xi' \in \Xi$. We are done by setting $\Dec{\xi} := \bigcup_{\xi' \in \Xi}\Clean{\xi'}$.
\end{proof}

\subsection{Proof of Lemma \ref{lem:klm-normal-bounded-witness}}

\klmNormalBoundedWitness*

The proof of \ref{lem:klm-normal-bounded-witness} is basically just a repetition of that of \cite[Lem.\ 4.19]{LS19}, except that we need to take additional care of the part of linear path schemes. Fix a normal linear KLM sequence $\xi = \bracket{p_0(\Vec{x}_0)G_0q_0(\Vec{y}_0)} \Lambda_1 \cdots \Lambda_k \bracket{p_k(\Vec{x}_k) G_k q_k(\Vec{y}_k)}$ throughout this section, where $G_i = (Q_i, T_i)$ for each $i = 0, \ldots, k$. First we need a bound on the models of $\KLMSystem{\xi}$ and $\KLMSystemHomo{\xi}$.

\begin{claim}
    \label{clm:normal-bounds-klm-homo-sys}
    There is a model $\Vec{h}_0 \models \KLMSystemHomo{\xi}$ with $\norm{\Vec{h}_0} \le |\xi|^{O(|\xi|)}$ such that for every $i = 0, \ldots, k$, every $j \in [d]$ and every $t\in T_i$, 
    \begin{itemize}
        \item $\Vec{m}_i^{\Vec{h}_0}(j) > 0$ whenever $\Vec{x}_i(j) = \omega$,
        \item $\Vec{n}_i^{\Vec{h}_0}(j) > 0$ whenever $\Vec{y}_i(j) = \omega$,
        \item $\phi_i^{\Vec{h}_0}(t) > 0$.
    \end{itemize}
\end{claim}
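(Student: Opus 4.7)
The plan is to construct $\Vec{h}_0$ as a componentwise sum of individual homogeneous witnesses, one per positivity requirement, exploiting that $\KLMSystemHomo{\xi}$ is a homogeneous linear (in)equality system over $\mathbb{N}$ and hence closed under addition.

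First I would enumerate the required witnesses. For every $i \in \{0,\ldots,k\}$ and every coordinate $j$ with $\Vec{x}_i(j) = \omega$, saturation of $\xi$ (which holds since $\xi$ is clean) guarantees that $\Vec{m}_i^{\Vec{h}}(j)$ is unbounded as $\Vec{h}$ ranges over models of $\KLMSystem{\xi}$. Lemma \ref{lem:klm-eq-bounded-bounds} then yields a model $\Vec{h}^{m}_{(i,j)} \models \KLMSystemHomo{\xi}$ with $\Vec{m}_i^{\Vec{h}^{m}_{(i,j)}}(j) > 0$. An analogous argument applied to each $j$ with $\Vec{y}_i(j) = \omega$ produces a model $\Vec{h}^{n}_{(i,j)}$ with $\Vec{n}_i^{\Vec{h}^{n}_{(i,j)}}(j) > 0$. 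Since $\xi$ is normal (hence unbounded), for every $t \in T_i$ the same lemma furnishes a model $\Vec{h}^{\phi}_{(i,t)} \models \KLMSystemHomo{\xi}$ with $\phi_i^{\Vec{h}^{\phi}_{(i,t)}}(t) > 0$. The total number of witnesses is at most $2d(k+1) + \sum_{i=0}^{k} |T_i| \le 2|\xi|$.

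Second, the norm of each witness can be controlled. Although Lemma \ref{lem:klm-eq-bounded-bounds} only asserts existence, its proof via Lemma \ref{lem:decomp-model-eq-klm} (which rests on the Pottier-style Corollary \ref{cor:mod-pottier}) exhibits each witness as one of the finitely many generating non-negative homogeneous solutions, each of norm at most $(10|\xi|)^{12|\xi|}$. Hence every witness enumerated above may be chosen with norm at most $(10|\xi|)^{12|\xi|}$.

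Finally, let $\Vec{h}_0$ be the componentwise sum of all these witnesses. Closure of $\KLMSystemHomo{\xi}$ under addition yields $\Vec{h}_0 \models \KLMSystemHomo{\xi}$; each required positivity is contributed by at least one non-negative summand and therefore survives in the sum, so all three conditions hold simultaneously; and the bound $\norm{\Vec{h}_0} \le 2|\xi| \cdot (10|\xi|)^{12|\xi|} \le |\xi|^{O(|\xi|)}$ is immediate. The only subtle point is noticing that the quantitative norm bound needed here is implicit in the proof of Lemma \ref{lem:klm-eq-bounded-bounds} rather than in its statement; once this is made explicit, everything else is an additive combination.
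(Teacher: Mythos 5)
Your proposal is correct and follows essentially the same route as the paper: the paper's proof also takes $\Vec{h}_0$ to be a sum of at most $2d(k+1) + \sum_{i=0}^{k}|T_i| \le 2|\xi|$ homogeneous witnesses, one per saturation/unboundedness requirement, each of norm $|\xi|^{O(|\xi|)}$. Your remark that the quantitative norm bound on the homogeneous witnesses comes from the decomposition in Lemma \ref{lem:decomp-model-eq-klm} (rather than from the literal statement of Lemma \ref{lem:klm-eq-bounded-bounds}) is a fair and accurate clarification of what the paper leaves implicit.
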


\begin{claimproof}
    Since $\xi$ is saturated and unbounded, we are done by taking $\Vec{h}_0$ as a sum of at most $2d(k + 1) + \sum_{i = 0}^{k}|T_i| \le 2|\xi|$ models of $\KLMSystemHomo{\xi}$ each of whose norm is bounded by $|\xi|^{O(|\xi|)}$ by Lemma \ref{lem:klm-eq-bounded-bounds}.
\end{claimproof}

\begin{claim}
    \label{clm:normal-bounds-klm-sys}
    There is a model $\Vec{h} \models \KLMSystem{\xi}$ with $\norm{\Vec{h}} \le |\xi|^{O(|\xi|)}$ such that $\phi_i^{\Vec{h}}(t) > 0$ for every $i = 0, \ldots, k$ and every $t\in T_i$.
\end{claim}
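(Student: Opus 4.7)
The natural strategy is to start from an arbitrary satisfying model of $\KLMSystem{\xi}$ and then boost its transition Parikh components by adding the homogeneous model produced in Claim \ref{clm:normal-bounds-klm-homo-sys}. Concretely, since $\xi$ is clean (in particular satisfiable), Lemma \ref{lem:decomp-model-eq-klm} (or the bound stated in Lemma \ref{lem:klm-eq-bounded-bounds}) yields a model $\Vec{h}' \models \KLMSystem{\xi}$ with $\norm{\Vec{h}'} \le (10|\xi|)^{12|\xi|} \le |\xi|^{O(|\xi|)}$. Let $\Vec{h}_0$ be the model of $\KLMSystemHomo{\xi}$ provided by Claim \ref{clm:normal-bounds-klm-homo-sys}. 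I would then set $\Vec{h} := \Vec{h}' + \Vec{h}_0$ componentwise and show $\Vec{h} \models \KLMSystem{\xi}$ with the desired positivity.

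The main verification step is to confirm that the pointwise sum of a model of $\KLMSystem{\xi}$ and a model of $\KLMSystemHomo{\xi}$ is again a model of $\KLMSystem{\xi}$. For the $\sqsubseteq$ constraints: if $\Vec{x}_i(j) \ne \omega$ then $\Vec{m}_i^{\Vec{h}_0}(j) = 0$ by the homogeneous condition, so $\Vec{m}_i^{\Vec{h}}(j) = \Vec{x}_i(j)$; the case $\Vec{x}_i(j) = \omega$ is automatic. The Kirchhoff constraint $\phi_i \models K_{G_i, p_i, q_i}$ is preserved because adding a solution of $K_{G_i, p_i, q_i}^0$ does not change the right-hand side, and the affine identity $\Vec{n}_i = \Vec{m}_i + \Delta(\phi_i)$ is stable under addition. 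The slightly delicate part is checking $(\Vec{n}_{i-1}^{\Vec{h}}, \Vec{e}_i^{\Vec{h}}, \Vec{m}_i^{\Vec{h}}) \models \LPSSystem{\Lambda_i}$: each prefix-nonnegativity inequality in $\LPSSystem{\Lambda_i}$ decomposes, after adding the homogeneous part, into the original inequality (nonnegative by hypothesis on $\Vec{h}'$) plus an expression of the form $\Vec{u}_0 + \sum_{j\le\ell}\Delta(\beta_j)\cdot \Vec{e}_0(j)$, which is nonnegative precisely by condition~1 of Definition~\ref{def:eq-lps-0}; the constraint $\Vec{e}_i^{\Vec{h}}(\ell) \ge 1$ is preserved since we only add a nonnegative quantity; and the terminal equality follows from condition~2 of Definition~\ref{def:eq-lps-0} added to the equality in $\LPSSystem{\Lambda_i}$.

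Once this closure property is in hand, positivity is immediate: for every $i$ and every $t \in T_i$, $\phi_i^{\Vec{h}}(t) = \phi_i^{\Vec{h}'}(t) + \phi_i^{\Vec{h}_0}(t) \ge 0 + 1 > 0$ by Claim \ref{clm:normal-bounds-klm-homo-sys}. The norm bound is also immediate from the triangle inequality, $\norm{\Vec{h}} \le \norm{\Vec{h}'} + \norm{\Vec{h}_0} \le |\xi|^{O(|\xi|)}$, since both summands satisfy this bound. I expect the only real obstacle to be the bookkeeping for the $\LPSSystem{\Lambda_i}$ part, where the non-homogeneous system contains intermediate prefix constraints (inside the $\alpha_i$'s and inside each $\beta_i$) that are not literally present in $\LPSSystemHomo{\Lambda_i}$; the point is that $\LPSSystemHomo{\Lambda_i}$'s coarser ``at the break points'' inequalities are exactly strong enough to be added to each of these finer prefix constraints without breaking nonnegativity, because between break points the homogeneous contribution is constant.
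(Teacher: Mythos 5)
Your proposal is correct and is exactly the paper's proof: take a small model $\Vec{h}'$ of $\KLMSystem{\xi}$ from Lemma \ref{lem:decomp-model-eq-klm} and add the homogeneous model $\Vec{h}_0$ from Claim \ref{clm:normal-bounds-klm-homo-sys}. The closure property you verify (that a model of $\KLMSystem{\xi}$ plus a model of $\KLMSystemHomo{\xi}$ is again a model of $\KLMSystem{\xi}$, including the prefix constraints of each $\LPSSystem{\Lambda_i}$) is left implicit in the paper, and your check of it is sound.
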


\begin{claimproof}
    By Lemma \ref{lem:decomp-model-eq-klm} there is a model $\Vec{h}' \models \KLMSystem{\xi}$ with $\norm{\Vec{h}'} \le |\xi|^{O(|\xi|)}$. We are done by taking $\Vec{h} = \Vec{h}' + \Vec{h}_0$ where $\Vec{h}_0$ is given by Claim \ref{clm:normal-bounds-klm-homo-sys}.
\end{claimproof}

We also need to borrow the following result from \cite{LS19}, which bounds the lengths of the runs that ``pump'' each input and output configurations in $\xi$.

\begin{claim}[{\cite[Claim C.4]{LS19}}]
    \label{clm:normal-bounds-pump-runs}
    For each $i = 0, \ldots, k$ there are paths $u_i, v_i$ of $G_i$ with $|u_i|, |v_i| \le |\xi|^{O(1)}$ and vectors $\Vec{x}_i', \Vec{y}_i' \in \mathbb{N}_\omega^d$ with $\Vec{x}_i' \ge \Vec{x}_i, \Vec{y}_i' \ge \Vec{y}_i$ such that 
    \begin{itemize}
        \item $p_i(\Vec{x}_i) \xrightarrow{u_i} p_i(\Vec{x}_i')$ and $\Vec{x}_i'(j) > \Vec{x}_i(j)$ whenever $\Facc_{G_i, p_i}(\Vec{x}_i)(j) = \omega \ne \Vec{x}_i(j)$ for every $j\in[d]$,
        \item $q_i(\Vec{y}_i') \xrightarrow{v_i} q_i(\Vec{y}_i)$ and $\Vec{y}_i'(j) > \Vec{y}_i(j)$ whenever $\Bacc_{G_i, q_i}(\Vec{y}_i)(j) = \omega \ne \Vec{y}_i(j)$ for every $j\in[d]$.
    \end{itemize}
\end{claim}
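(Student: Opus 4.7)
The plan is to build $u_i$ as a concatenation of short self-covering cycles at $p_i$, one for each coordinate that the claim requires to be strictly increased. Let $J_i := \{j \in [d] : \Facc_{G_i, p_i}(\Vec{x}_i)(j) = \omega \ne \Vec{x}_i(j)\}$ for each $i \in \{0, \ldots, k\}$. By the very definition of $\Facc_{G_i, p_i}$, for each $j \in J_i$ there exists a run $p_i(\Vec{x}_i) \xrightarrow{\alpha^{(j)}} p_i(\Vec{w}^{(j)})$ with $\Vec{w}^{(j)} \ge \Vec{x}_i$ and $\Vec{w}^{(j)}(j) > \Vec{x}_i(j)$. This is precisely a self-covering instance on $G_i$: starting from $p_i(\Vec{x}_i)$, cover the vector obtained from $\Vec{x}_i$ by adding one to the $j$-th coordinate. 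The first step is to apply Rackoff's polynomial bound for coverability in fixed dimension~\cite{Rackoff78}, which allows such an $\alpha^{(j)}$ to be chosen with $|\alpha^{(j)}| \le |\xi|^{O(1)}$, the exponent depending only on $d$.

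Next I would set $u_i := \alpha^{(j_1)} \alpha^{(j_2)} \cdots \alpha^{(j_m)}$ for any enumeration $J_i = \{j_1, \ldots, j_m\}$. Each $\alpha^{(j_k)}$ is a cycle at $p_i$ with non-negative effect, so by monotonicity of $\mathbb{N}_\omega$-runs every intermediate configuration during the concatenation stays componentwise $\ge \Vec{x}_i$, and hence every factor remains firable from the current configuration. Let $\Vec{x}_i'$ denote the final configuration. Then $\Vec{x}_i' \ge \Vec{x}_i$, and because the strict increase that $\alpha^{(j_k)}$ produces at coordinate $j_k$ cannot be undone by any later non-negative effect, $\Vec{x}_i'(j_k) > \Vec{x}_i(j_k)$ for every $k$. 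The length bound $|u_i| = \sum_k |\alpha^{(j_k)}| \le d \cdot |\xi|^{O(1)} = |\xi|^{O(1)}$ follows immediately.

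The construction of $v_i$ is entirely symmetric: invert $G_i$ by reversing every edge and negating every effect, then repeat the argument at state $q_i$ with target $\Vec{y}_i$, using $\Bacc_{G_i, q_i}$ in place of $\Facc_{G_i, p_i}$. The reversed VASS has the same size as $G_i$, so Rackoff's bound again yields $|v_i| \le |\xi|^{O(1)}$ and an output vector $\Vec{y}_i' \ge \Vec{y}_i$ with the required strict inequality on the relevant coordinates.

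The hard part will be reconciling Rackoff's classical statement, which presupposes concrete $\mathbb{N}^d$-configurations, with the fact that $\Vec{x}_i$ and $\Vec{y}_i$ may contain $\omega$-valued components. I would resolve this by observing that an $\omega$ coordinate imposes no non-negativity constraint along the run: we may substitute each such component by any finite value exceeding $|\alpha^{(j)}| \cdot \norm{T_i}$, invoke Rackoff on the resulting concrete instance, and then lift the witness back to an $\mathbb{N}_\omega$-run of the original configuration, which remains firable because $\omega$ is absorbing under addition. The final configuration $\Vec{x}_i'$ inherits $\omega$ at every coordinate where $\Vec{x}_i$ was $\omega$ (since $u_i$ only adds finite amounts), so the required inequality $\Vec{x}_i' \ge \Vec{x}_i$ holds componentwise with no further adjustment.
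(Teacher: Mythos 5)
Your proof is correct and follows essentially the same route as the source: the paper does not prove this claim itself but imports it verbatim from Leroux--Schmitz (Claim C.4 of \cite{LS19}), whose argument is precisely your combination of Rackoff's coverability bound \cite{Rackoff78} (one short self-covering run per coordinate in $J_i$, applicable to $\omega$-components because those coordinates can simply be projected away) with the monotone concatenation of the resulting cycles of componentwise non-negative effect, plus the symmetric reversed-VASS argument for $v_i$. The only cosmetic wrinkle is the apparent circularity in choosing the finite substitute for an $\omega$-component before knowing $|\alpha^{(j)}|$; this dissolves because Rackoff's length bound is independent of the source configuration (or, equivalently, by working in the projection onto the non-$\omega$ coordinates), so nothing essential is missing.
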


Now we can prove Lemma \ref{lem:klm-normal-bounded-witness}.

\begin{proof}[Proof of Lemma \ref{lem:klm-normal-bounded-witness}]
    Let $\Vec{h}$ and $\Vec{h}_0$ be the models of $\KLMSystem{\xi}$ and $\KLMSystemHomo{\xi}$ given by Claim \ref{clm:normal-bounds-klm-sys} and \ref{clm:normal-bounds-klm-homo-sys}. For each $i = 0, \ldots, k$ let the paths $u_i, v_i$ and the vectors $\Vec{x}_i', \Vec{y}_i'$ be given by Claim \ref{clm:normal-bounds-pump-runs}. Denote by $\psi_{u_i}$ and $\psi_{v_i}$ the Parikh image of $u_i$ and $v_i$, respectively.

    Let $\phi_i := r_i\phi_i^{\Vec{h}_0} - (\psi_{u_i} + \psi_{v_i})$ where we choose $r_i := \norm{T_i} \cdot (|u_i| + |v_i|) + 1$. By Claim \ref{clm:normal-bounds-pump-runs} we have $r_i \le |\xi|^{O(|\xi|)}$. Note that $\phi_i(t) > 0$ for every $t\in T_i$. Now observe that $\phi_i \models K_{G_i, p_i, q_i}^0$. Since $G_i$ is strongly connected, by Euler's Lemma there is a path $w_i$ in $G_i$ whose Parkih image is $\phi_i$. We have $|w_i| = \sum_{t\in T_i}\phi_i(t) \le r_i|\xi|^{O(|\xi|)} \le |\xi|^{O(|\xi|)}$. Also note that 
    \begin{equation}
        \Delta(w_i) = \Delta(\phi_i) = r_i\Delta(\phi_i^{\Vec{h}}) - (\Delta(u_i) + \Delta(v_i)) = r_i(\Vec{n}_i^{\Vec{h}_0} - \Vec{m}_i^{\Vec{h}_0}) - (\Delta(u_i) + \Delta(v_i)).
    \end{equation}
    Let $\Vec{m}_i^0 := r_i\Vec{m}_i^{\Vec{h}_0} + \Delta(u_i)$ and $\Vec{n}_i^0 := r_i\Vec{n}_i^{\Vec{h}_0} - \Delta(v_i)$, then $\Delta(w_i) = \Vec{n}_i^0 - \Vec{m}_i^0$. Now we prove that $\Vec{m}_i^0, \Vec{n}_i^0 \ge \Vec{0}$ and $\Vec{m}_i^0(j), \Vec{n}_i^0(j) > 0$ for every coordinate $j$ not fixed by $G_i$. For coordinates $j$ fixed by $G_i$ we know that $\Delta(u_i)(j) = \Delta(v_i)(j) = 0$ since $u_i, v_i$ are cycles, so $\Vec{m}_i^0(j) = r_i\Vec{m}_i^{\Vec{h}_0}(j) \ge 0$ and $\Vec{n}_i^0(j) = r_i\Vec{n}_i^{\Vec{h}_0}(j) \ge 0$. For coordinates $j$ not fixed by $G_i$, if $\Vec{x}_i(j) \in \mathbb{N}$ then $\Facc_{G_i, p_i}(\Vec{x}_i)(j) = \omega$ and $\Delta(u_i)(j) > 0$, so $\Vec{m}_i^0(j) \ge \Delta(u_i)(j) > 0$; if $\Vec{x}_i(j) = \omega$ then $\Vec{m}_i^{\Vec{h}_0}(j) > 0$ and we have $\Vec{m}_i^0(j) \ge r - \Delta(u_i)(j) > 0$ since $r > \norm{T_i}\cdot |u_i| \ge \Delta(u_i)(j)$. Symmetrically we can prove $\Vec{n}_i^0(j) > 0$ for $j$ not fixed by $G_i$.

    Since $u_i$ is a run from $\Vec{x}_i = \Vec{m}_i^{\Vec{h}} + \omega \Vec{m}_i^{\Vec{h}_0}$, and $\Delta(u)_i(j) \le \norm{T_i}\cdot |u_i| < r_i$ for every $j\in [d]$, we deduce that $u_i$ can be fired from the configuration $p_i(\Vec{m}_i^{\Vec{h}} + r_i \Vec{m}_i^{\Vec{h}_0})$. Then we have 
    \begin{equation}
        p_i(\Vec{m}_i^{\Vec{h}} + r_i \Vec{m}_i^{\Vec{h}_0}) \xrightarrow{u_i}
        p_i(\Vec{m}_i^{\Vec{h}} + \Vec{m}_i^0).
    \end{equation}
    Let $s_i := 1 + \norm{T_i}\cdot \max\{|w_i|, |T_i|\cdot\norm{\Vec{h}}\} \le |\xi|^{O(|\xi|)}$, by monotony we have
    \begin{equation}
        \label{eq:klm-pump-up}
        p_i(\Vec{m}_i^{\Vec{h}} + s_ir_i \Vec{m}_i^{\Vec{h}_0}) \xrightarrow{u_i^{s_i}}
        p_i(\Vec{m}_i^{\Vec{h}} + s_i\Vec{m}_i^0).
    \end{equation}
    Symmetrically, we have
    \begin{equation}
        \label{eq:klm-pump-down}
        q_i(\Vec{n}_i^{\Vec{h}} + s_i\Vec{n}_i^0) \xrightarrow{u_i^{s_i}}
        q_i(\Vec{n}_i^{\Vec{h}} + s_ir_i \Vec{n}_i^{\Vec{h}_0}).
    \end{equation}
    Note that $s_i > |w_i|\cdot \norm{T_i} \ge \Delta(w_i)(j)$ for all $j\in[d]$ and that $\Vec{m}_i^0(j), \Vec{n}_i^0(j) > 0$ for every coordinate $j$ not fixed by $G_i$, we deduce 
    \begin{equation}
        \label{eq:klm-compensate}
        p_i(\Vec{m}_i^{\Vec{h}} + s_i\Vec{m}_i^0) \xrightarrow{w_i^{s_i}} 
        p_i(\Vec{m}_i^{\Vec{h}} + s_i\Vec{n}_i^0).
    \end{equation}

    Since $\phi_i^{\Vec{h}} \models K_{G_i, p_i, q_i}$ and $\phi_i^{\Vec{h}}(t) > 0$ for every $t\in T_i$, again by Euler's Lemma we know there exists a path $\sigma_i$ from $p$ to $q$ whose Parikh image is $\phi_i^{\Vec{h}}$. Observe that 
    \begin{equation}
        p_i(\Facc_{G_i, p_i}(\Vec{x}_i)) \xrightarrow{\sigma_i} q_i(\Bacc_{G_i, q_i}(\Vec{y}_i)).
    \end{equation}
    Also note that $\Facc_{G_i, p_i}(\Vec{x}_i) = \Vec{m}_i^{\Vec{h}} + \omega\Vec{n}_i^0$ and $\Bacc_{G_i, q_i}(\Vec{y}_i) = \Vec{n}_i^{\Vec{h}} + \omega\Vec{n}_i^0$. Since $|\sigma_i| = \sum_{t\in T_i}\phi_i^{\Vec{h}}(t) \le |T_i|\norm{\Vec{h}}$, we have $s_i > |\sigma|\cdot\norm{T_i} \ge \Delta(\sigma_i)(j)$ for all $j\in[d]$. Thus 
    \begin{equation}
        \label{eq:klm-work}
        p_i(\Vec{m}_i^{\Vec{h}} + s_i\Vec{n}_i^0) \xrightarrow{\sigma_i} q_i(\Vec{n}_i^{\Vec{h}} + s_i\Vec{n}_i^0).
    \end{equation}

    Let $s := \max_{0\le i\le k}s_i$ and $r := \max_{0\le i\le k}r_i$, one can verify that the above discussion is still valid with $r_i, s_i$ replaced by $r, s$.
    Combining (\ref{eq:klm-pump-up}), (\ref{eq:klm-compensate}), (\ref{eq:klm-work}) and (\ref{eq:klm-pump-down}), let $\pi_i := u_i^{s}w_i^{s}\sigma_i v_i^{s}$, we have 
    \begin{equation}
        \label{eq:normal-klm-tuple-admit-run}
        p_i(\Vec{m}_i^{\Vec{h}} + sr \Vec{m}_i^{\Vec{h}_0}) \xrightarrow{\pi_i}
        q_i(\Vec{n}_i^{\Vec{h}} + sr \Vec{n}_i^{\Vec{h}_0})
    \end{equation}
    holds for all $i = 0, \ldots, k$. Observe that $|\pi_i| \le |\xi|^{O(|\xi|)}$.

    Let $\Vec{h}' := \Vec{h} + sr \Vec{h}_0$, then $\Vec{h'}\models\KLMSystem{\xi}$, especially $(\Vec{n}_{i-1}^{\Vec{h}'}, \Vec{e}_i^{\Vec{h}'}, \Vec{m}_i^{\Vec{h}'}) \models \LPSSystem{\Lambda_i}$ for each $i = 1, \ldots, k$. By Lemma \ref{lem:LPS-system} there exists a path $\rho_i$ admitted by $\Lambda_i$ such that 
    \begin{equation}
        \label{eq:normal-klm-lps-admit-run}
        q_{i-1}(\Vec{n}_{i-1}^{\Vec{h}} + sr \Vec{n}_{i-1}^{\Vec{h}_0}) \xrightarrow{\rho_i}
        p_i(\Vec{m}_i^{\Vec{h}} + sr \Vec{m}_i^{\Vec{h}_0}).
    \end{equation}
    Note that $|\rho_i| \le |\Lambda_i| + |\Lambda_i|_*\cdot |\Lambda_i| \cdot \norm{\Vec{e}_i^{\Vec{h}'}} \le |\xi|^{O(|\xi|)}$.

    Combining (\ref{eq:normal-klm-tuple-admit-run}) and (\ref{eq:normal-klm-lps-admit-run}), we know that the run $\pi := \pi_0\rho_1\pi_1\ldots \rho_k\pi_k$ is admitted by $\xi$, thus $\ActWord{\pi} \in L_\xi \ne\emptyset$. Observe that $|\pi| \le (2k + 1)\cdot |\xi|^{O(|\xi|)} \le |\xi|^{O(|\xi|)}$.
\end{proof}

\section{Proof of Lemma \ref{lem:rel-fast-grow-func}}

\relFastGrowFunc*

\begin{proof}
    We need to use the fact that if $h$ is a monotone inflationary function, then so are the functions $h^\alpha$ in the Hardy hierarchy, which can be proved by induction on $\alpha$. Let $G := H^{\omega^b\cdot c}$, first we show the following inequality holds for all $x \ge 2c$:
    \begin{equation}
        \label{eq:fast-grow-func-claim}
        H^{\omega^{b + a}}((c + 1) x) \ge (c + 1)G^{\omega^a}(x).
    \end{equation}
    \begin{claimproof}
        We proceed by induction on $a$. For $a = 1$:
        \begin{align*}
            (c + 1) G^{\omega}(x) &= (c + 1) G^{x + 1}(x)\\
            &= (c + 1)H^{\omega^b \cdot c(x + 1)}(x)\\
            &\le H^{\omega^b\cdot c}(H^{\omega^b \cdot c(x + 1)}(x)) 
                & \text{since }H^{\omega^b\cdot c}(x) \ge H^{\omega\cdot c}(x) \ge 2^c\cdot x \\
            &= H^{\omega^b \cdot c(x + 2)}(x)\\
            &\le H^{\omega^b \cdot ((c + 1)x + 1)}(x) 
                & \text{by }x \ge 2c\\
            &\le H^{\omega^b \cdot ((c + 1)x + 1)}((c + 1) x) 
                &\text{by monotony}\\
            &= H^{\omega^{b + 1}}((c + 1)x).
        \end{align*}
        Next let's consider the case $a + 1$, assuming (\ref{eq:fast-grow-func-claim}) holds for $a$. We first prove the following inequality holds for all $y \ge 2c$ by induction on $j$:
        \begin{equation}
            \label{eq:fast-grow-func-subclaim}
            H^{\omega^{b + a}\cdot j}((c + 1)y) \ge (c + 1)G^{\omega^a \cdot j}(y).
        \end{equation}
        The base case $j = 0$ holds trivially. For the induction step, note that 
        \begin{align*}
            H^{\omega^{b + a}\cdot (j + 1)}((c + 1)y) 
                &= H^{\omega^{b + a}}(H^{\omega^{b + a}\cdot j}((c + 1)y))\\
                &\ge H^{\omega^{b + a}}((c + 1)G^{\omega^a \cdot j}(y)) 
                    &\text{by induction hypothesis of (\ref{eq:fast-grow-func-subclaim})}\\
                &\ge (c + 1)G^{\omega^a}(G^{\omega^a \cdot j}(y))
                    &\text{by induction hypothesis of (\ref{eq:fast-grow-func-claim})}\\
                &= (c + 1)G^{\omega^a \cdot (j + 1)}(y).
        \end{align*}
        Now we can prove (\ref{eq:fast-grow-func-claim}) for $a + 1$:
        \begin{align*}
            H^{\omega^{b + a + 1}}((c + 1) x) 
                &= H^{\omega^{b + a}\cdot ((c + 1)x + 1)}((c + 1)x)\\
                &\ge H^{\omega^{b + a}\cdot (x + 1)}((c + 1)x)\\
                &\ge (c + 1)G^{\omega^{a}\cdot (x + 1)}(x) &\text{by (\ref{eq:fast-grow-func-subclaim})}\\
                &= (c + 1)G^{\omega^{a + 1}}(x).
        \end{align*}
        This completes the proof of (\ref{eq:fast-grow-func-claim}).
    \end{claimproof}

    Now we return to the proof of the lemma. For all $x \ge \max\{2c, x_0\}$ we have 
    \begin{align*}
        h^{\omega^a}(x) &\le G^{\omega^a}(x) &\text{since $h(x) \le G(x)$ for $x\ge x_0$}\\
        &\le (c + 1)G^{\omega^a}(x)\\
        &\le H^{\omega^{b + a}}((c + 1) x) &\text{by (\ref{eq:fast-grow-func-claim})}.
    \end{align*}
    This completes the proof.
\end{proof}

\end{document}